\def\mytitle{A Symmetric Lambda-Calculus Corresponding to the~Negation-Free Bilateral Natural Deduction}
\def\myshorttitle{A Symmetric Lambda-Calculus Corresponding to a Bilateral Natural Deduction}
\def\myabstract{
  Filinski constructed a symmetric lambda-calculus consisting of
  expressions and continuations which are symmetric, and functions
  which have duality. In his calculus, functions can be encoded to
  expressions and continuations using primitive operators. That is,
  the duality of functions is not derived in the calculus but adopted
  as a principle of the calculus.  In this paper, we propose a simple
  symmetric lambda-calculus corresponding to the negation-free natural
  deduction based bilateralism in proof-theoretic semantics. In our
  calculus, continuation types are represented as not negations of
  formulae but formulae with negative polarity. Function types are
  represented as the implication and but-not connectives in
  intuitionistic and paraconsistent logics, respectively.  Our
  calculus is not only simple but also powerful as it includes a
  call-value calculus corresponding to the call-by-value dual calculus
  invented by Wadler.  We show that mutual transformations between
  expressions and continuations are definable in our calculus to
  justify the duality of functions. We also show that every typable
  function has dual types. Thus, the duality of function is derived
  from bilateralism.
}
\def\mykeywords{symmetric lambda-calculus, formulae-as-types, duality,
  bilateralism, natural deduction, proof-theoretic semantics,
  but-not connective, continuation, call-by-value}
\title{\mytitle}
\titlerunning{\myshorttitle}
\author{Tatsuya Abe}{Software Technology and Artificial Intelligence Research Laboratory, Chiba Institute of Technology, 2-17-1 Tsudanuma, Narashino, Chiba, 275-0016, Japan}{abet@stair.center}{}{}
\author{Daisuke Kimura}{Department of Information Science, Toho University, 2-2-1 Miyama, Funabashi, Chiba, 274-8510, Japan}{kmr@is.sci.toho-u.ac.jp}{}{}
\authorrunning{Tatsuya Abe and Daisuke Kimura} 
\begin{document}
\maketitle
\begin{abstract}
\myabstract
\keywords{\mykeywords}
\end{abstract}


\allowdisplaybreaks

\def\seeappendix{
\begin{proof}
See Appendix.
\end{proof}
}

\newcommand{\emphtwo}[1]{\textcolor{blue}{#1}}
\newcommand{\emphgray}[1]{\textcolor{gray}{#1}}

\newcommand{\OK}{\mathord{\checkmark}}
\newcommand{\NA}{\mathord{\hspace{.6pt}\times\hspace{.6pt}}}

\newcommand{\ifthenelsefi}[4]{\ttif^{#1} \: {#2}\mathord{?}{#3}\mathord{:}{#4}}
\newcommand{\whiledoend}[3]{\ttwhile^{#1} \: {#2}\mathord{?}{#3}}
\newcommand{\letin}[2]{\ttlet \: #1 \: \texttt{in} \: #2}
\newcommand{\caseterm}[5]{\mathrm{case}(#1 , #2 . #3 , #4 . #5)}
\newcommand{\inl}[1]{\mathrm{inl}(#1)}
\newcommand{\inr}[1]{\mathrm{inr}(#1)}
\newcommand{\rec}[3]{\texttt{rec(}{#1\texttt{,}#2\texttt{,}#3}\texttt{)}}

\newcommand{\lamabs}[2]{\lambda #1 . #2}
\newcommand{\app}[2]{#1 #2}
\newcommand{\proj}[2]{\pi_{#1}(#2)}
\newcommand{\fst}[1]{\proj{0}{#1}}
\newcommand{\snd}[1]{\proj{1}{#1}}

\newcommand{\pair}[2]{\mathord{\langle #1,#2 \rangle}}
\newcommand{\triple}[3]{\mathord{\langle #1,#2,#3 \rangle}}
\newcommand{\negop}[1]{\operatorname{\neg}{#1}}
\newcommand{\existsop}[1]{\operatorname{\exists}{#1}}
\newcommand{\forallop}[1]{\operatorname{\forall}{#1}}
\newcommand{\defined}[1]{\texttt{df}(#1)}
\newcommand{\fv}[1]{\operatorname{fv}(#1)}
\newcommand{\ev}[2]{[\![#1]\!]_{#2}}
\newcommand{\dom}[1]{\operatorname{dom} #1}
\newcommand{\tran}[1]{\mathrel{{#1}^{\mathord+}}}
\newcommand{\cls}[1]{\mathrel{{#1}^{\ast}}}
\newcommand{\lub}[1]{\lceil #1 \rceil}
\newcommand{\glb}[1]{\lfloor #1 \rfloor}
\newcommand{\intvl}[2]{[#1,#2)}
\newcommand{\intvlext}[4]{{\intvl{#3}{#4}}_{\langle #1,#2 \rangle}}
\newcommand{\oktyp}{\mathrm{ok}}
\newcommand{\size}[1]{\operatorname{size}(#1)}

\newcommand{\stgbracket}[1]{\texttt{.<}#1\texttt{>.}}
\newcommand{\stgescape}[1]{\texttt{.\~{}}#1}
\newcommand{\stgrun}[1]{\texttt{Runcode.run}\: #1}

\newcommand{\eencode}[1]{\ulcorner\hspace{-.5pt}#1\hspace{-.5pt}\urcorner}
\newcommand{\edecode}[1]{\overline{#1}}
\newcommand{\cencode}[1]{\llcorner\hspace{-.5pt}#1\hspace{-.5pt}\lrcorner}
\newcommand{\cdecode}[1]{\underline{#1}}

\newcommand{\func}{\mathit{F}}
\newcommand{\expr}{\mathit{E}}
\newcommand{\cont}{\mathit{C}}
\newcommand{\cst}{\mathit{cst}}
\newcommand{\neutral}{N}

\newcommand{\evar}{\mathit{x}}
\newcommand{\cvar}{a}
\newcommand{\fvar}{\mathit{f}}
\newcommand{\eVar}{\mathit{X}}
\newcommand{\cVar}{\mathit{Y}}

\newcommand{\Fielam}[2]{#1 \Rightarrow #2}
\newcommand{\polop}[2]{\mathord{\mathop{#1}{#2}}}
\newcommand{\plusop}[1]{\polop{+}{#1}}
\newcommand{\minusop}[1]{\polop{-}{#1}}
\newcommand{\NK}{{\textrm{ND}_{\textrm{prop}}}}
\newcommand{\BiNKname}{\textrm{Bi-ND}}
\newcommand{\BiNK}{\BiNKname_{\textrm{prop}}}
\newcommand{\BiNKgets}{\BiNKname^{\mathord{\gets}}_{\textrm{prop}}}
\newcommand{\Rparenthesis}[1]{\mathrm{(}#1\mathrm{)}}
\newcommand{\Noncontradict}{\Rparenthesis{\textrm{Non-contradiction}}}
\newcommand{\Reductio}[1]{\Rparenthesis{\textrm{Reductio}_{#1}}}
\newcommand{\Explosion}[1]{\Rparenthesis{\textrm{Explosion}^{#1}}}
\newcommand{\RAA}[1]{\Rparenthesis{\textrm{RAA}^{#1}}}
\newcommand{\Identity}[1]{\Rparenthesis{\textrm{Identity}_{#1}}}
\newcommand{\RuleLabel}[4]{\Rparenthesis{\mathord{#2}\textrm{-}{\mathrm{#3}_{\mathord{#1}\mathord{#4}}}}}
\newcommand{\Warrow}{\mathrel{\mathord{-}\!\mbox{\raisebox{.8pt}{{\scriptsize $<$}}}}}
\newcommand{\natvdots}{\smash{\vdots}\rule{0pt}{2ex}}

\newcommand{\plusj}[3]{#1 \vdash_{+} #2 \colon #3}
\newcommand{\minusj}[3]{#1 \vdash_{-} #2 \colon #3}
\newcommand{\zeroj}[2]{#1 \vdash_{\hspace{1.5pt}\mathrm{o}\hspace{1.5pt}} #2}

\newcommand{\Ficlam}[2]{#1 \Leftarrow #2}
\newcommand{\Fieapp}[2]{#1 \uparrow #2}
\newcommand{\Ficapp}[2]{#2 \downarrow #1}
\newcommand{\Fiepair}[2]{(#1,#2)}
\newcommand{\Ficpair}[2]{\{#1,#2\}}
\newcommand{\Fieunit}{()}
\newcommand{\Ficnull}{\{\}}
\newcommand{\FieunitT}{\mathrm{unit}}
\newcommand{\FicnullT}{\mathrm{null}}
\newcommand{\uapair}[2]{\mathord{\langle #1 \:|\: #2 \rangle}}
\newcommand{\uatriple}[3]{\mathord{\langle #1 \:|\: #2 \:|\: #3 \rangle}}

\newcommand{\bra}[1]{\mathord{\langle{#1}|}}
\newcommand{\cket}[1]{\mathord{|{#1}\rangle}}
\newcommand{\bracket}[2]{\mathord{\langle #1 \:|\: #2 \rangle}}
\newcommand{\muabs}[2]{\mu #1 . #2}
\newcommand{\sigmaabs}[2]{\varsigma #1 . #2}
\newcommand{\tpvar}{\mathit{tp}}

\newcommand{\bitrans}[1]{(#1)^{\iota}}

\newcommand{\ttemb}{\mathit{h}}
\newcommand{\emb}[2]{\ttemb_{#1}(#2)}
\newcommand{\ttbmi}{\mathit{h}}
\newcommand{\bmi}[2]{\ttbmi_{#1}(#2)}

\newcommand{\remain}{\mathrm{R}}

\newcommand{\leadstogets}{\mathrel{\text{\raisebox{-3pt}{$\overset{\text{\normalsize \rotatebox{180}{$\leadsto$}}}{\leadsto}$}}}}

\newcommand{\BLC}{\textrm{BLC}}
\newcommand{\cbvBLCeq}{\textrm{CbV-BLC}}
\newcommand{\eValue}{V}
\newcommand{\eEvalSymb}{\mathcal{E}}
\newcommand{\cbvBLCred}{\textrm{CbV-BLC}^{\textrm{red}}}
\newcommand{\cbveq}{=_{\mathit{v}}}
\newcommand{\eEval}[1]{\eEvalSymb\{#1\}}

\newcommand{\cbvSubDCname}{\textrm{CbV-DC}}
\newcommand{\cbvSubDCeq}{\cbvSubDCname_{\rightarrow\leftarrow}}
\newcommand{\dcValue}{W}
\newcommand{\dcveq}{=_{\mathit{dcv}}}
\newcommand{\dcEvalSymb}{\mathcal{F}}
\newcommand{\dcEval}[1]{\dcEvalSymb\{#1\}}

\newcommand{\cbvDCred}{\cbvSubDCname^{\textrm{red}}_{\rightarrow\leftarrow}}
\newcommand{\dcvalue}{W^{\mathit{red}}}
\newcommand{\dcvred}{\leadsto}

\newcommand{\dctype}{A}
\newcommand{\dccst}[1]{x_{cst^{#1}}}
\newcommand{\dcbullet}[1]{\alpha_{\bullet^{#1}}}
\newcommand{\dcvar}{x}
\newcommand{\dcimp}{\rightarrow}
\newcommand{\dcpmi}{\leftarrow}
\newcommand{\ot}{\leftarrow}
\newcommand{\dccovar}{\alpha}
\newcommand{\dcterm}{M}
\newcommand{\dccoterm}{K}
\newcommand{\dcstat}{S}
\newcommand{\dcexpr}{O}
\newcommand{\dccovalue}{P}
\newcommand{\dctapp}[2]{#2 \mathbin{\texttt{@}} #1}
\newcommand{\dccapp}[2]{#1 \mathbin{\$} #2}
\newcommand{\dccut}[2]{#1 \mathbin{\bullet} #2}
\newcommand{\dcabs}[2]{#1 \ldotp #2}
\newcommand{\dcfst}[1]{{\tt fst}[#1]}
\newcommand{\dcsnd}[1]{{\tt snd}[#1]}
\newcommand{\dcinl}[1]{\langle#1\rangle{\tt inl}}
\newcommand{\dcinr}[1]{\langle#1\rangle{\tt inr}}
\newcommand{\dcpairr}[2]{\langle#1,#2\rangle}
\newcommand{\dcpairl}[2]{[#1,#2]}
\newcommand{\dcnotr}[1]{[#1]{\tt not}}
\newcommand{\dcnotl}[1]{{\tt not}\langle#1\rangle}
\newcommand{\dcabsr}[2]{(#2).#1}
\newcommand{\dcabsl}[2]{#1.(#2)}
\newcommand{\dcappl}[2]{#1 \mathbin{\texttt{@}} #2}
\newcommand{\dcappr}[2]{#2 \mathbin{\$} #1}

\newcommand{\FV}{{\rm FV}}
\newcommand{\FCV}{{\rm FCV}}
\newcommand{\nyo}[1]{(#1)^{\dagger}}
\newcommand{\fuga}[1]{(#1)^{\sharp}}
\newcommand{\gafu}[1]{(#1)^{\flat}}

\newcommand{\CH}{\bar{\lambda}\mu\tilde{\mu}}

\newcommand{\sterm}{M}
\newcommand{\svalue}{U}
\newcommand{\svar}{x}
\newcommand{\sctxt}{\mathcal{D}}
\newcommand{\stran}[1]{(#1)^{\iota}}

\newcommand{\khterm}{M}
\newcommand{\khvalue}{U}
\newcommand{\khevar}{x}
\newcommand{\khcvar}{x}
\newcommand{\khctxt}{\mathcal{D}}
\newcommand{\khtran}[2]{(#2)^{#1}}

\newcommand{\prompt}[1]{\texttt{prompt}(#1)}
\newcommand{\callcomp}[1]{\texttt{comp}_{#1}}
\newcommand{\callcc}{\texttt{call/cc}}
\newcommand{\control}[1]{\texttt{ctrl}_{#1}}

\newcommand{\ectxt}{\mathcal{E}^{\mathit{red}}}
\newcommand{\cctxt}{\mathcal{C}^{\mathit{red}}}
\newcommand{\ectran}[1]{(#1)^{\Leftarrow}}

\newcommand{\eredValue}{V^{\mathit{red}}}
\newcommand{\vred}{\leadsto}

\newcommand{\dotdiv}{\overset{\cdot}{-}}

\newcommand{\fev}[1]{\operatorname{fev}(#1)}
\newcommand{\fcv}[1]{\operatorname{fcv}(#1)}

\newcommand{\level}[2]{{#2}^{\langle #1 \rangle}}
\newcommand{\levels}[1]{\operatorname{level}(#1)}

\newcommand{\dvar}{\alpha}
\newcommand{\dobj}{t}
\newcommand{\dcom}{T}
\newcommand{\dcst}{c}

\renewcommand{\pair}[2]{\mathord{(#1,#2)}}
\renewcommand{\vec}[1]{\overrightarrow{#1}}
\def\defaultHypSeparation{\hskip 3pt}

\newcommand{\Int}{\texttt{Int}}

\section{Introduction}

A function of the type $A_0 \to A_1$ from expressions of the type
$A_0$ to expressions of the type $A_1$ can be regarded as a function
from continuations of the type $A_1$ to continuations of the type
$A_0$.  This property of functions is called \emph{duality}.

Filinski constructed a symmetric $\lambda$-calculus based on the
duality of
functions~\cite{Filinski:89:DeclarativeContinuations:CTCS,Filinski:mthesis}.
His calculus consists of expressions $\expr$, continuations $\cont$,
and functions $\func$.  Expressions and continuations are
symmetric. Functions are neutral, that is, functions can be encoded to
expressions and continuations like $\eencode{\func}$ and
$\cencode{\func}$, respectively. Expressions and continuations can be
decoded to functions by operators $\edecode{\expr}$ and
$\cdecode{\cont}$.  The operators $\eencode{\cdot}$,
$\cencode{\cdot}$, $\edecode{\cdot}$, and $\cdecode{\cdot}$ are
\emph{primitive} since the duality of functions is adopted as a
\emph{principle} of his calculus.

The duality allows the call-with-current-continuation operator
  (call/cc) to have a type $((A_0 \to A_1) \to A_0) \to A_0$. In a
traditional interpretation of function types, the type means that
call/cc takes an expression of the type $(A_0 \to A_1) \to A_0$ and
returns an expression of the type $A_0$. However, in the symmetric
$\lambda$-calculus, call/cc takes a continuation of the type $A_0$ and
becomes a function of the type $(A_0 \to A_1) \to A_0$, which takes an
expression of the type $A_0 \to A_1$ and returns an expression of the
type $A_0$.

The duality of functions seems to be one of the most significant
reasons that it is possible for the symmetric $\lambda$-calculus to
have the provability of classical logic, because the type
$((A_0 \to A_1) \to A_0) \to A_0$ corresponds to \emph{the Peirce formula}
on the formulae-as-types notion~\cite{curry,HowardW:fortnc}, which
strengthens the $\lambda$-calculus corresponding to the minimal logic
having the provability of classical logic~\cite{Griffin90}.

In this paper, we justify the duality of functions in the symmetric
$\lambda$-calculus using \emph{bilateralism} in proof-theoretic
semantics.  In proof-theoretic semantics there exists an idea that
meanings of logical connectives are given by the contexts in which the
logical connectives occur. In this idea, a meaning of a logical
connective is considered to be defined by its introduction rule of a
natural deduction and its elimination rule is naturally determined to be
\emph{in harmony with} the introduction rule.

Rumfitt suggested that the original natural deduction invented by
Gentzen~\cite{Gentzen1,Gentzen2} is not harmonious, and constructed a
natural deduction based on bilateralism~\cite{rumfitt_i:2000a}. Within
the notion of bilateralism, provability is not defined for a plain
formula $A$ but a formula with polarity $\plusop{A}$ and
$\minusop{A}$.  Provability of $\plusop{A}$ means that $A$ is
accepted, and provability of $\minusop{A}$ means that $A$ is
rejected. The traditional formulation for which provability of $A$
means that $A$ is accepted is based on the notion of
unilateralism rather than bilateralism. Bilateralism does not
permit anything neutral and forces everything to have either positive
or negative polarity.  Rumfitt showed that a natural deduction of
classical logic that is constructed on unilateralism can be
reconstructed on bilateralism.

In this paper, we construct a symmetric $\lambda$-calculus
corresponding to the negation-free bilateral natural deduction.  A
distinguishing aspect of our calculus is that we adopt the but-not
connective as a constructor for functions between continuations.
Another distinguishing aspect is that reductio ad absurdum is a
construction of a configuration also known as a command. In our
calculus, continuations and commands are first-class citizens.

Our bilateral $\lambda$-calculus contains a computationally consistent
call-by-value calculus.
The calculus corresponds to the sub-calculus of the call-by-value dual
calculus invented by Wadler~\cite{journals/sigplan/Wadler03,Wadler05}
obtained by adding the but-not connective and removing
the negation connective.
The equivalence is formally obtained by giving mutual translations
between these calculi.  In other words, the translation provides a
strong relationship between a bilateral natural deduction and a
sequent calculus including proofs on the formulae-as-types
notion.

The translations clarify a significant difference between the
bilateral natural deduction and the sequent calculus.
The negation of the dual calculus is not involutive, that is,
$\negop{\negop{A}}$ is not isomorphic to $A$.
Although the dual calculus also has the involutive duality as the
meta-level operation that comes from the left-hand-side and
right-hand-side duality of the classical sequent-calculus framework,
there exists no inference rule to operate the involutive duality in
the calculus.
In the bilateral $\lambda$-calculus, the negation is represented using
inversions of polarities, and is involutive by definition.

A symmetric $\lambda$-calculus which was constructed by Lovas and
Crary is the only similar calculus based on
bilateralism~\cite{lovas2006}. However, they adopted the negation
connective $\neg$ as a primitive logical connective, and function type
$\to$ is defined as syntactic sugar.  In Lovas and Crary's calculus it is
necessary to use reductio ad absurdum, although it is generally easy
to define functions between expressions. This means that it is not
easy to define a sub-calculus corresponding to the minimal logic.  Our
calculus does not include the negation connective. Our work claims
that the negation connective is not necessary but negative polarity is
sufficient to define a symmetric $\lambda$-calculus based on
bilateralism.




Using our calculus, we justify the duality which Filinski adopted as a
principle in constructing his calculus. Specifically, the encodings to
expressions and continuations are \emph{definable} in our
calculus. More correctly, mutual transformations between expressions
and continuations of function types are definable in our calculus.  We
also show that every typable function has dual types about expressions
and continuations.  We clarify that bilateralism naturally raises the
duality of functions.



Finally, we note that one of our goals is to construct a simple and
powerful calculus in which the duality of functions is definable. We
do not intend to clarify anything unknown in classical logic by
assigning $\lambda$-terms to proofs, as seen in existing work in
structural proof theory.  Actually, our calculus is a sub-calculus of
a natural extension of the dual calculus.

The remainder of this paper is organized as follows: In
Section~\ref{sec:bilateralism}, we introduce bilateral natural
deductions. In Section~\ref{sec:curry}, we add proofs to nodes in
derivation trees. In Section~\ref{sec:blc}, we construct a symmetric
$\lambda$-calculus corresponding to the negation-free bilateral natural
deduction.  In Section~\ref{sec:neutral}, we justify the duality of
functions using our calculus.  In Section~\ref{sec:related}, we
discuss related work to clarify the contributions of this paper.  In
Section~\ref{sec:conclusion}, we conclude the paper by identifying
future research directions.

\section{Bilateral Natural Deductions}\label{sec:bilateralism}

In this section, we introduce \emph{bilateralism}, which was proposed
by Rumfitt~\cite{rumfitt_i:2000a}, and define a few variants of Rumfitt's bilateral
natural deduction.

The set of formulae is defined as follows:
\begin{alignat*}{2}
  & \mbox{(formulae)} & \quad A & \Coloneqq o \mid (\negop{A}) \mid (A
  \to A) \mid (A \wedge A) \mid (A \vee A)
\end{alignat*}
where $o$ ranges over propositional variables. We note that $\bot$ is not contained
by the set of formulae.  The connective power of $\neg$ is stronger
than that of $\wedge$, $\vee$, and $\to$.  The connective powers of
$\wedge$ and $\vee$ are stronger than that of $\to$.
We omit parentheses when the context renders them obvious.

\begin{figure}[t]
  \begin{center}
    \AxiomC{$A$}
    \AxiomC{$\negop{A}$}
    \RightLabel{$\RuleLabel{}{\bot}{I}{}$}
    \BinaryInfC{$\bot$}
    \bottomAlignProof
    \DisplayProof
    \qquad
    \AxiomC{$\bot$}
    \RightLabel{$\RuleLabel{}{\bot}{E}{}$}
    \UnaryInfC{$A$}
    \bottomAlignProof
    \DisplayProof
    \qquad
    \AxiomC{$[A]$}
    \noLine
    \UnaryInfC{$\natvdots$}
    \noLine
    \UnaryInfC{$\bot$}
    \RightLabel{$\RuleLabel{}{\neg}{I}{}$}
    \UnaryInfC{$\negop{A}$}
    \bottomAlignProof
    \DisplayProof
    \qquad
    \AxiomC{$[\negop{A}]$}
    \noLine
    \UnaryInfC{$\natvdots$}
    \noLine
    \UnaryInfC{$\bot$}
    \RightLabel{$\RuleLabel{}{\neg}{E}{}$}
    \UnaryInfC{$A$}
    \bottomAlignProof
    \DisplayProof
    \\\vspace{\baselineskip}
    \AxiomC{$[A_0]$}
    \noLine
    \UnaryInfC{$\natvdots$}
    \noLine
    \UnaryInfC{$A_1$}
    \RightLabel{$\RuleLabel{}{\to}{I}{}$}
    \UnaryInfC{$A_0 \to A_1$}
    \bottomAlignProof
    \DisplayProof
    \;
    \AxiomC{$A_0 \to A_1$}
    \AxiomC{$A_0$}
    \RightLabel{$\RuleLabel{}{\to}{E}{}$}
    \BinaryInfC{$A_1$}
    \bottomAlignProof
    \DisplayProof
    \;
    \AxiomC{$A_0$}
    \AxiomC{$A_1$}
    \RightLabel{$\RuleLabel{}{\wedge}{I}{}$}
    \BinaryInfC{$A_0 \wedge A_1$}
    \bottomAlignProof
    \DisplayProof
    \;
    \AxiomC{$A_0 \wedge A_1$}
    \RightLabel{$\RuleLabel{}{\wedge}{E}{0}$}
    \UnaryInfC{$A_0$}
    \bottomAlignProof
    \DisplayProof
    \\\vspace{\baselineskip}
    \AxiomC{$A_0 \wedge A_1$}
    \RightLabel{$\RuleLabel{}{\wedge}{E}{1}$}
    \UnaryInfC{$A_1$}
    \bottomAlignProof
    \DisplayProof
    \;
    \AxiomC{$A_0$}
    \RightLabel{$\RuleLabel{}{\vee}{I}{0}$}
    \UnaryInfC{$A_0 \vee A_1$}
    \bottomAlignProof
    \DisplayProof
    \;\;
    \AxiomC{$A_1$}
    \RightLabel{$\RuleLabel{}{\vee}{I}{1}$}
    \UnaryInfC{$A_0 \vee A_1$}
    \bottomAlignProof
    \DisplayProof
    \;\;
    \AxiomC{$A_0 \vee A_1$}
    \AxiomC{$[A_0]$}
    \noLine
    \UnaryInfC{$\natvdots$}
    \noLine
    \UnaryInfC{$A_2$}
    \AxiomC{$[A_1]$}
    \noLine
    \UnaryInfC{$\natvdots$}
    \noLine
    \UnaryInfC{$A_2$}
    \RightLabel{$\RuleLabel{}{\vee}{E}{}$}
    \TrinaryInfC{$A_2$}
    \bottomAlignProof
    \DisplayProof
  \end{center}
  \caption{Natural deduction $\NK$.}\label{fig:NK}
\end{figure}

We recall the natural deduction invented by
Gentzen~\cite{Gentzen1,Gentzen2} and consider its propositional
fragment $\NK$, as shown in Figure~\ref{fig:NK}. At each inference
rule, formulae or $\bot$ above a line are assumptions and a formula or
$\bot$ below a line is a conclusion.  A derivation is a tree that has
exactly one root. Symbol $\smash{\vdots}$ denotes a transitive
connection between a leaf and a node, and $[A]$ means that $A$ is
discharged from assumptions in a standard manner.  Rules
$\RuleLabel{}{\bot}{E}{}$ and $\RuleLabel{}{\neg}{E}{}$ are also known
as \emph{explosion} and \emph{reductio ad absurdum}, respectively.  A
judgment is defined as $\varGamma \vdash A$ or $\varGamma \vdash
\bot$, where $\varGamma$ is a multiset of formulae.

There exists an idea that meanings of logical connectives are defined
by their introduction rules and their elimination rules should be
defined \emph{in harmony with} their introduction rules in
proof-theoretic semantics.
Rumfitt attempted to justify logical connectives and inference rules
using a notion of harmony which was proposed by
Dummett~\cite{Dum:logbm}.  We consider a logical connective
$\mathit{tonk}$ which was proposed by Prior~\cite{prior60analysis}.
Its introduction rule $\RuleLabel{}{\mathit{tonk}}{I}{}$ and
elimination rule $\RuleLabel{}{\mathit{tonk}}{E}{}$ are as follows:
\begin{center}
    \AxiomC{$A_0$}
    \RightLabel{$\RuleLabel{}{\mathit{tonk}}{I}{}$}
    \UnaryInfC{$A_0 \mathbin{\mathit{tonk}} A_1$}
    \DisplayProof
    \qquad
    \AxiomC{$A_0 \mathbin{\mathit{tonk}} A_1$}
    \RightLabel{$\RuleLabel{}{\mathit{tonk}}{E}{}$}
    \UnaryInfC{$A_1$}
    \DisplayProof
    \enspace .
\end{center}

A pair of contiguous introduction and elimination rules is called
\emph{harmonious} if the residue after removing the pair is also a
derivation. Such a procedure is called \emph{normalization}. In this
section, we let $\vred$ denote the normalization procedure.  The
pair of $\RuleLabel{}{\mathit{tonk}}{I}{}$ and
$\RuleLabel{}{\mathit{tonk}}{E}{}$ is not harmonious because the
right-hand side of the following $\vred$ relation is not a derivation:
\begin{center}
    \AxiomC{$A_0$}
    \RightLabel{$\RuleLabel{}{\mathit{tonk}}{I}{}$}
    \UnaryInfC{$A_0 \mathbin{\mathit{tonk}} A_1$}
    \RightLabel{$\RuleLabel{}{\mathit{tonk}}{E}{}$}
    \UnaryInfC{$A_1$}
    \DisplayProof
    $\;\; \vred$
    \AxiomC{$A_0$}
    \UnaryInfC{$A_1$}
    \DisplayProof
    \enspace .
\end{center}

Rumfitt suggested that $\NK$ also does not enjoy the harmony
condition and proposed a notion of \emph{bilateralism} to construct a
harmonious natural deduction.

Bilateralism is based on two notions of \emph{acceptance} and
\emph{rejection} of formulae.  They are also called
\emph{verification} and \emph{falsification}, respectively,
by Wansing~\cite{Wansingaiml,journals/logcom/Wansing16}. Formulae $A$ with
\emph{polarity} are defined as $\plusop{A}$ and $\minusop{A}$.  A
derivation of root $\plusop{A}$ means that $A$ is accepted. A
derivation of root $\minusop{A}$ means that $A$ is rejected.

Let $\mathcal{A}$ be a formula with polarity. Conjugates
$(\plusop{A})^\ast$ and $(\minusop{A})^\ast$ are defined as
$\minusop{A}$ and $\plusop{A}$, respectively.

Rumfitt adopted $\Noncontradict$ and $\Reductio{}$ which are called
\emph{coordination principles}
and defined inference rules of logical connectives, as shown in
Figure~\ref{fig:BiNK}, which are naturally derived from the standard
boolean semantics. In this paper, we call this logic a bilateral
natural deduction $\BiNK$.

$\NK$ is based on the notion of \emph{unilateralism} rather than
bilateralism. A derivation of root $A$ in $\NK$ means that $A$ is
accepted. There exists the following relation between $\NK$ and $\BiNK$:
\begin{theorem}[Rumfitt~\cite{rumfitt_i:2000a}]
  For any $n \geq 0$, $A_0,\ldots,A_{n-1} \vdash A$ is provable in $\NK$ if and only if
  $\plusop{A_0},\ldots,\plusop{A_{n-1}} \vdash \plusop{A}$ is provable in
  $\BiNK$.
\end{theorem}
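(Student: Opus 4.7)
The plan is to prove both directions by induction on the height of the given derivation, using a translation of formulae and derivations in each direction.

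For the ``only if'' direction, I would define a translation $(\cdot)^{+}$ on $\NK$-derivations that sends each node labelled $A$ to a node labelled $\plusop{A}$, and sends each node labelled $\bot$ to a \emph{contradiction context} in $\BiNK$ arising from the coordination principle $\Noncontradict$ applied to some formula $\mathcal{A}$ and its conjugate $\mathcal{A}^\ast$. The inductive case analysis is on the last rule of the $\NK$-derivation. For the rules $\RuleLabel{}{\wedge}{I}{}$, $\RuleLabel{}{\wedge}{E}{0}$, $\RuleLabel{}{\wedge}{E}{1}$, $\RuleLabel{}{\vee}{I}{0}$, $\RuleLabel{}{\vee}{I}{1}$, $\RuleLabel{}{\vee}{E}{}$, $\RuleLabel{}{\to}{I}{}$, $\RuleLabel{}{\to}{E}{}$ the corresponding positive rules of $\BiNK$ apply directly. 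The interesting cases are the four rules involving $\bot$ and $\neg$: $\RuleLabel{}{\bot}{I}{}$ is simulated by $\Noncontradict$ applied to $\plusop{A}$ and $\minusop{A}$, $\RuleLabel{}{\bot}{E}{}$ is simulated by $\Reductio{}$ (assuming the conjugate of the goal and deriving a contradiction by reusing the given contradiction), and $\RuleLabel{}{\neg}{I}{}$ and $\RuleLabel{}{\neg}{E}{}$ are simulated by discharging assumptions of $\plusop{A}$ or $\minusop{A}$ via $\Reductio{}$ and by the bilateral introduction/elimination rules for negation.

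For the ``if'' direction, I would define a translation $(\cdot)^{\sharp}$ on $\BiNK$-derivations that sends a $\plusop{A}$-node to an $A$-node and a $\minusop{A}$-node to a $\negop{A}$-node in $\NK$. Again the proof is by induction on the height of the $\BiNK$-derivation. The positive introduction and elimination rules of $\wedge$, $\vee$, $\to$ translate to the corresponding rules of $\NK$ directly. The negative rules (which govern rejection of compound formulae) translate to $\NK$-derivations using De~Morgan-style manipulations; for example, a $\BiNK$-rule that from $\minusop{A_0}$ or $\minusop{A_1}$ derives $\minusop{(A_0 \wedge A_1)}$ is simulated in $\NK$ by assuming $A_0 \wedge A_1$, projecting, contradicting the hypothesis $\negop{A_i}$, and using $\RuleLabel{}{\neg}{I}{}$. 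The coordination principle $\Noncontradict$ translates to $\RuleLabel{}{\bot}{I}{}$, and $\Reductio{}$ translates to a use of $\RuleLabel{}{\neg}{E}{}$ (or $\RuleLabel{}{\neg}{I}{}$, depending on polarity) to discharge a conjugate assumption. Specializing this translation to the case where every endformula is positively polarized then yields the desired statement.

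The main obstacle I expect is bookkeeping the asymmetry between the two systems: $\NK$ has $\bot$ as a distinguished element while $\BiNK$ has no $\bot$ but uses the coordination principles, and conversely $\BiNK$ has the negatively-polarized elimination rules (one for each connective) while $\NK$ has only one negation and an explosion rule. Concretely, in the forward direction one must make consistent choices of the ``contradiction witness'' formula $\mathcal{A}$ used at each $\bot$-node so that discharging of assumptions under $\RuleLabel{}{\neg}{I}{}$ and $\RuleLabel{}{\neg}{E}{}$ lines up with discharging under $\Reductio{}$. In the backward direction, the care is in verifying that each of the several negative rules of $\BiNK$ can indeed be simulated by a closed $\NK$-derivation using only $\negop{}$ and the intuitionistic connectives together with $\RuleLabel{}{\neg}{E}{}$. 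Once these case analyses are completed, the biconditional follows by restricting to derivations whose open assumptions and conclusion are all positively polarized.
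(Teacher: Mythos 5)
The paper does not actually prove this theorem: it is stated as Rumfitt's result, attributed to and cited from~\cite{rumfitt_i:2000a}, and no proof is supplied (there is no corresponding appendix entry). So there is no in-paper argument to compare yours against step by step. On its own terms, your two-way translation is the standard argument and it goes through: the positively polarized rules of $\BiNK$ match the rules of $\NK$ one for one; $\RuleLabel{}{\bot}{I}{}$, $\RuleLabel{}{\bot}{E}{}$, $\RuleLabel{}{\neg}{I}{}$, and $\RuleLabel{}{\neg}{E}{}$ are simulated by $\Noncontradict$, a vacuously discharging $\Reductio{}$, and $\Reductio{}$ followed by $\RuleLabel{+}{\neg}{I}{}$ or $\RuleLabel{-}{\neg}{E}{}$; conversely, under $\plusop{A}\mapsto A$ and $\minusop{A}\mapsto\negop{A}$ the coordination principles become $\RuleLabel{}{\bot}{I}{}$ and $\RuleLabel{}{\neg}{I}{}$/$\RuleLabel{}{\neg}{E}{}$, and each negative rule of $\BiNK$ is derivable in $\NK$ using $\RuleLabel{}{\neg}{I}{}$, explosion, and reductio ad absurdum, after which one specializes to all-positive assumptions and conclusion. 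One remark: the ``contradiction witness'' bookkeeping you anticipate in the forward direction is a non-issue, because $\BiNK$ retains $\bot$ as a judgment form ($\Noncontradict$ concludes $\bot$ and $\Reductio{}$ consumes it), so a $\bot$-node of an $\NK$ derivation translates directly to a $\bot$-node of a $\BiNK$ derivation and no choice of a formula $\mathcal{A}$ needs to be threaded through the induction.
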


\noindent
\emph{Remark.} It is controversial that explosion and reductio ad
absurdum are regarded as elimination rules of the logical connectives
$\bot$ and $\neg$, respectively.  Rumfitt's bilateralism is also
criticized in a paper~\cite{journals/jphil/Kurbis16}. That is,
bilateralism is called a work in progress. However, the subject of
this paper is not a justification of bilateralism in proof-theoretic
semantics.

\begin{figure}[t]
\begin{center}
    \AxiomC{$\phantom{a}$}
    \noLine
    \UnaryInfC{$\phantom{a}$}
    \noLine
    \UnaryInfC{$\phantom{a}$}
    \noLine
    \UnaryInfC{$\mathcal{A}$}
    \AxiomC{$\mathcal{A}^\ast$}
    \RightLabel{$\Noncontradict$}
    \BinaryInfC{$\bot$}
    \DisplayProof
    \qquad\qquad\qquad
    \AxiomC{$[\mathcal{A}]$}
    \noLine
    \UnaryInfC{$\natvdots$}
    \noLine
    \UnaryInfC{$\bot$}
    \RightLabel{$\Reductio{}$}
    \UnaryInfC{$\mathcal{A}^\ast$}
    \DisplayProof
\end{center}
\begin{minipage}{.49\textwidth}
\begin{center}
    \AxiomC{$\minusop{A}$}
    \RightLabel{$\RuleLabel{+}{\neg}{I}{}$}
    \UnaryInfC{$\plusop{\negop{A}}$}
    \bottomAlignProof
    \DisplayProof
    \qquad
    \AxiomC{$\plusop{\negop{A}}$}
    \RightLabel{$\RuleLabel{+}{\neg}{E}{}$}
    \UnaryInfC{$\minusop{A}$}
    \bottomAlignProof
    \DisplayProof
    \\\vspace{\baselineskip}
    \AxiomC{$[\plusop{A_0}]$}
    \noLine
    \UnaryInfC{$\natvdots$}
    \noLine
    \UnaryInfC{$\plusop{A_1}$}
    \RightLabel{$\RuleLabel{+}{\to}{I}{}$}
    \UnaryInfC{$\plusop{A_0 \to A_1}$}
    \bottomAlignProof
    \DisplayProof
    \\\vspace{\baselineskip}
    \AxiomC{$\plusop{A_0 \to A_1}$}
    \AxiomC{$\plusop{A_0}$}
    \RightLabel{$\RuleLabel{+}{\to}{E}{}$}
    \BinaryInfC{$\plusop{A_1}$}
    \bottomAlignProof
    \DisplayProof
    \\\vspace{\baselineskip}
    \AxiomC{$\plusop{A_0}$}
    \AxiomC{$\plusop{A_1}$}
    \RightLabel{$\RuleLabel{+}{\wedge}{I}{}$}
    \BinaryInfC{$\plusop{A_0 \wedge A_1}$}
    \bottomAlignProof
    \DisplayProof
    \\\vspace{\baselineskip}
    \AxiomC{$\plusop{A_0 \wedge A_1}$}
    \RightLabel{$\RuleLabel{+}{\wedge}{E}{0}$}
    \UnaryInfC{$\plusop{A_0}$}
    \bottomAlignProof
    \DisplayProof
    \;
    \AxiomC{$\plusop{A_0 \wedge A_1}$}
    \RightLabel{$\RuleLabel{+}{\wedge}{E}{1}$}
    \UnaryInfC{$\plusop{A_1}$}
    \bottomAlignProof
    \DisplayProof
    \\\vspace{\baselineskip}
    \AxiomC{$\plusop{A_0}$}
    \RightLabel{$\RuleLabel{+}{\vee}{I}{0}$}
    \UnaryInfC{$\plusop{A_0 \vee A_1}$}
    \bottomAlignProof
    \DisplayProof
    \quad
    \AxiomC{$\plusop{A_1}$}
    \RightLabel{$\RuleLabel{+}{\vee}{I}{1}$}
    \UnaryInfC{$\plusop{A_0 \vee A_1}$}
    \bottomAlignProof
    \DisplayProof
    \\\vspace{\baselineskip}
    \AxiomC{$\plusop{A_0 \vee A_1}$}
    \AxiomC{$[\plusop{A_0}]$}
    \noLine
    \UnaryInfC{$\natvdots$}
    \noLine
    \UnaryInfC{$\mathcal{A}$}
    \AxiomC{$[\plusop{A_1}]$}
    \noLine
    \UnaryInfC{$\natvdots$}
    \noLine
    \UnaryInfC{$\mathcal{A}$}
    \RightLabel{$\RuleLabel{+}{\vee}{E}{}$}
    \TrinaryInfC{$\mathcal{A}$}
    \bottomAlignProof
    \DisplayProof
  \end{center}
\end{minipage}
\begin{minipage}{.49\textwidth}
\begin{center}
    \AxiomC{$\plusop{A}$}
    \RightLabel{$\RuleLabel{-}{\neg}{I}{}$}
    \UnaryInfC{$\minusop{\negop{A}}$}
    \bottomAlignProof
    \DisplayProof
    \qquad
    \AxiomC{$\minusop{\negop{A}}$}
    \RightLabel{$\RuleLabel{-}{\neg}{E}{}$}
    \UnaryInfC{$\plusop{A}$}
    \bottomAlignProof
    \DisplayProof
    \\\vspace{\baselineskip}
    \AxiomC{$\plusop{A_0}$}
    \AxiomC{$\minusop{A_1}$}
    \RightLabel{$\RuleLabel{-}{\to}{I}{}$}
    \BinaryInfC{$\minusop{A_0 \to A_1}$}
    \bottomAlignProof
    \DisplayProof
    \\\vspace{\baselineskip}
    \AxiomC{$\minusop{A_0 \to A_1}$}
    \RightLabel{$\RuleLabel{-}{\to}{E}{0}$}
    \UnaryInfC{$\plusop{A_0}$}
    \bottomAlignProof
    \DisplayProof
    \\\vspace{\baselineskip}
    \AxiomC{$\minusop{A_0 \to A_1}$}
    \RightLabel{$\RuleLabel{-}{\to}{E}{1}$}
    \UnaryInfC{$\minusop{A_1}$}
    \bottomAlignProof
    \DisplayProof
    \\\vspace{\baselineskip}
    \AxiomC{$\minusop{A_0}$}
    \RightLabel{$\RuleLabel{-}{\wedge}{I}{0}$}
    \UnaryInfC{$\minusop{A_0 \wedge A_1}$}
    \bottomAlignProof
    \DisplayProof
    \quad
    \AxiomC{$\minusop{A_1}$}
    \RightLabel{$\RuleLabel{-}{\wedge}{I}{1}$}
    \UnaryInfC{$\minusop{A_1 \wedge A_1}$}
    \bottomAlignProof
    \DisplayProof
    \\\vspace{\baselineskip}
    \AxiomC{$\minusop{A_0 \wedge A_1}$}
    \AxiomC{$[\minusop{A_0}]$}
    \noLine
    \UnaryInfC{$\natvdots$}
    \noLine
    \UnaryInfC{$\mathcal{A}$}
    \AxiomC{$[\minusop{A_1}]$}
    \noLine
    \UnaryInfC{$\natvdots$}
    \noLine
    \UnaryInfC{$\mathcal{A}$}
    \RightLabel{$\RuleLabel{-}{\wedge}{E}{}$}
    \TrinaryInfC{$\mathcal{A}$}
    \bottomAlignProof
    \DisplayProof
    \\\vspace{\baselineskip}
    \AxiomC{$\minusop{A_0}$}
    \AxiomC{$\minusop{A_1}$}
    \RightLabel{$\RuleLabel{-}{\vee}{I}{}$}
    \BinaryInfC{$\minusop{A_0 \vee A_1}$}
    \bottomAlignProof
    \DisplayProof
    \\\vspace{\baselineskip}
    \AxiomC{$\minusop{A_0 \vee A_1}$}
    \RightLabel{$\RuleLabel{-}{\vee}{E}{0}$}
    \UnaryInfC{$\minusop{A_0}$}
    \bottomAlignProof
    \DisplayProof
    \;
    \AxiomC{$\minusop{A_0 \vee A_1}$}
    \RightLabel{$\RuleLabel{-}{\vee}{E}{1}$}
    \UnaryInfC{$\minusop{A_1}$}
    \bottomAlignProof
    \DisplayProof
\end{center}
\end{minipage}
  \caption{Rumfitt's natural deduction $\BiNK$.}\label{fig:BiNK}
\end{figure}

The natural deduction $\BiNK$ is not symmetric.
We extend the language by adding a logical connective $\gets$:
\begin{alignat*}{2}
  & \mbox{(formulae)} & \quad A & \Coloneqq o \mid (\negop{A}) \mid (A \to A) \mid (A \gets A) \mid (A \wedge A) \mid (A \vee A) \enspace .
\end{alignat*}
We will use the logical connective as function types of continuations
in the following.

The connective $\gets$ is called the \emph{but-not} connective because
$A_0 \gets A_1$ is logically equivalent to $A_0 \wedge \negop{A_1}$ in
classical logic. The but-not connective is also written as
pseudo-difference $\dotdiv$~\cite{Goodman,Urbas}, subtraction
$-$~\cite{oai:CiteSeerPSU:230037}, difference $-$~\cite{CurienICFP00},
and co-implication
$\Warrow$~\cite{conf/aiml/GorePT08,journals/logcom/Wansing16}.
The but-not connective is a primitive connective in paraconsistent
logic, whereas $\to$ is a primitive connective in intuitionistic logic
because $A_0 \to A_1$ is not logically equivalent to $\negop{A_0} \vee
A_1$ in intuitionistic logic.  In paraconsistent logic, sequent
calculus consists of sequents $\varGamma \vdash \varDelta$, where
$\varGamma$ is empty or a singleton formula, whereas intuitionistic
logic can be defined by sequents $\varGamma \vdash \varDelta$, where
$\varDelta$ is empty or a singleton formula.

\begin{figure}[t]
  \begin{center}
    \AxiomC{$\plusop{A_0}$}
    \AxiomC{$\minusop{A_1}$}
    \RightLabel{$\RuleLabel{+}{\gets}{I}{}$}
    \BinaryInfC{$\plusop{A_0 \gets A_1}$}
    \bottomAlignProof
    \DisplayProof
    \qquad
    \AxiomC{$\plusop{A_0 \gets A_1}$}
    \RightLabel{$\RuleLabel{+}{\gets}{E}{0}$}
    \UnaryInfC{$\plusop{A_0}$}
    \bottomAlignProof
    \DisplayProof
    \qquad
    \AxiomC{$\plusop{A_0 \gets A_1}$}
    \RightLabel{$\RuleLabel{+}{\gets}{E}{1}$}
    \UnaryInfC{$\minusop{A_1}$}
    \bottomAlignProof
    \DisplayProof
    \\\vspace{\baselineskip}
    \AxiomC{$[\minusop{A_1}]$}
    \noLine
    \UnaryInfC{$\natvdots$}
    \noLine
    \UnaryInfC{$\minusop{A_0}$}
    \RightLabel{$\RuleLabel{-}{\gets}{I}{}$}
    \UnaryInfC{$\minusop{A_0 \gets A_1}$}
    \bottomAlignProof
    \DisplayProof
    \qquad
    \AxiomC{$\minusop{A_0 \gets A_1}$}
    \AxiomC{$\minusop{A_1}$}
    \RightLabel{$\RuleLabel{-}{\gets}{E}{}$}
    \BinaryInfC{$\minusop{A_0}$}
    \bottomAlignProof
    \DisplayProof
  \end{center}
\caption{Inference rules for $\gets$.}\label{fig:gets}
\end{figure}

We define a natural deduction $\BiNKgets$ by adding inference rules, as
shown in Figure~\ref{fig:gets}.
The connectives $\to$ and $\gets$ are symmetrically located in
$\BiNKgets$ as follows:
\begin{proposition}\label{prop:plusminus}
  $\plusop{A_0 \to A_1} \vdash \minusop{A_0 \gets A_1}$,
  $\minusop{A_0 \to A_1} \vdash \plusop{A_0 \gets A_1}$,
  $\plusop{A_0 \gets A_1} \vdash \minusop{A_0 \to A_1}$, and
  $\minusop{A_0 \gets A_1} \vdash \plusop{A_0 \to A_1}$
  are provable in $\BiNKgets$.
\end{proposition}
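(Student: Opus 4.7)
The plan is to derive each of the four sequents by a short natural-deduction proof using only the introduction/elimination rules for $\to$ and $\gets$, together with the coordination principles $\Noncontradict$ and $\Reductio{}$ when needed. The four sequents split into two straightforward cases and two that require a short use of $\Reductio{}$.

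The two easy cases are $\minusop{A_0 \to A_1} \vdash \plusop{A_0 \gets A_1}$ and $\plusop{A_0 \gets A_1} \vdash \minusop{A_0 \to A_1}$. For the first, I would apply $\RuleLabel{-}{\to}{E}{0}$ and $\RuleLabel{-}{\to}{E}{1}$ to the premise to obtain $\plusop{A_0}$ and $\minusop{A_1}$, then combine them by $\RuleLabel{+}{\gets}{I}{}$. For the second, the dual argument: use $\RuleLabel{+}{\gets}{E}{0}$ and $\RuleLabel{+}{\gets}{E}{1}$ to extract $\plusop{A_0}$ and $\minusop{A_1}$, and conclude by $\RuleLabel{-}{\to}{I}{}$.

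The remaining two cases involve a hypothetical introduction rule on the right-hand side, so they need $\Reductio{}$ to produce a positive or negative formula from a contradiction. For $\plusop{A_0 \to A_1} \vdash \minusop{A_0 \gets A_1}$, I would prove the goal by $\RuleLabel{-}{\gets}{I}{}$: assume $\minusop{A_1}$ and aim for $\minusop{A_0}$. To obtain $\minusop{A_0}$, assume $\plusop{A_0}$, apply $\RuleLabel{+}{\to}{E}{}$ to the premise to get $\plusop{A_1}$, combine with $\minusop{A_1}$ via $\Noncontradict$ to get $\bot$, and discharge $\plusop{A_0}$ by $\Reductio{}$ to conclude $\minusop{A_0}$. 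Then $\RuleLabel{-}{\gets}{I}{}$ discharges $\minusop{A_1}$. Symmetrically, for $\minusop{A_0 \gets A_1} \vdash \plusop{A_0 \to A_1}$, I would use $\RuleLabel{+}{\to}{I}{}$: assume $\plusop{A_0}$, and to derive $\plusop{A_1}$, assume $\minusop{A_1}$, apply $\RuleLabel{-}{\gets}{E}{}$ to the premise to get $\minusop{A_0}$, contradict $\plusop{A_0}$ via $\Noncontradict$, and discharge $\minusop{A_1}$ by $\Reductio{}$.

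None of the four sub-derivations is deep; the only small subtlety is keeping track of which polarity is conjugated by $\Reductio{}$, and of the fact that the discharge in $\RuleLabel{-}{\gets}{I}{}$ (resp. $\RuleLabel{+}{\to}{I}{}$) is on $\minusop{A_1}$ (resp. $\plusop{A_0}$), which dictates the choice of assumption to open first. This makes the proofs completely routine, and in fact the pattern foreshadows the duality between $\to$ and $\gets$ that will underlie the term-calculus treatment in the following sections.
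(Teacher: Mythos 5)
Your proposal is correct and matches the paper's own proof essentially step for step: the two direct cases use $\RuleLabel{-}{\to}{E}{0}$, $\RuleLabel{-}{\to}{E}{1}$, $\RuleLabel{+}{\gets}{I}{}$ and their duals exactly as in the appendix, and the two cases requiring $\Noncontradict$ followed by $\Reductio{}$ before the final $\RuleLabel{-}{\gets}{I}{}$ (resp.\ $\RuleLabel{+}{\to}{I}{}$) are structured identically to the paper's derivation trees. No differences worth noting.
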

\begin{proof}
See Appendix~\ref{sec:proofs}.
\end{proof}

Let $\mathfrak{L}_0$ and $\mathfrak{L}_1$ be languages such that
$\mathfrak{L}_0 \subseteq \mathfrak{L}_1$, and $\mathfrak{S}_0$ and
$\mathfrak{S}_1$ be logics on the languages $\mathfrak{L}_0$ and
$\mathfrak{L}_1$, respectively.  We define $\mathfrak{S}_1$ as an
extension of $\mathfrak{S}_0$ if any formula $\varphi$ that is
provable in $\mathfrak{S}_0$ is also provable in $\mathfrak{S}_1$. We
define that an extension $\mathfrak{S}_1$ of $\mathfrak{S}_0$ is
\emph{conservative} if any formula $\varphi$ on the language $\mathfrak{L}_0$
that is provable in $\mathfrak{S}_1$ is also provable in
$\mathfrak{S}_0$.
\begin{proposition}
  $\BiNKgets$ is a conservative extension of $\BiNK$.
\end{proposition}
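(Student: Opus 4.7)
The plan is to exhibit a translation from $\BiNKgets$-derivations to $\BiNK$-derivations that leaves $\gets$-free formulae unchanged. Exploiting the classical equivalence $A_0 \gets A_1 \equiv A_0 \wedge \negop{A_1}$, I would define a map $(\cdot)^\ast$ on formulae by $o^\ast = o$, $(\negop{A})^\ast = \negop{A^\ast}$, $(A_0 \to A_1)^\ast = A_0^\ast \to A_1^\ast$, $(A_0 \wedge A_1)^\ast = A_0^\ast \wedge A_1^\ast$, $(A_0 \vee A_1)^\ast = A_0^\ast \vee A_1^\ast$, and crucially $(A_0 \gets A_1)^\ast = A_0^\ast \wedge \negop{A_1^\ast}$. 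Extend it to polarities by $(\plusop{A})^\ast = \plusop{A^\ast}$ and $(\minusop{A})^\ast = \minusop{A^\ast}$, and componentwise to multisets of assumptions. Note that $(\cdot)^\ast$ is the identity on $\gets$-free formulae.

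Next I would prove by induction on derivations that whenever $\varGamma \vdash \mathcal{A}$ in $\BiNKgets$, we also have $\varGamma^\ast \vdash \mathcal{A}^\ast$ in $\BiNK$. The coordination principles $\Noncontradict$ and $\Reductio{}$ and every rule inherited from $\BiNK$ translate into themselves, so the only genuine work is the four new inference rules of Figure~\ref{fig:gets}. I would simulate them as follows. $\RuleLabel{+}{\gets}{I}{}$ is handled by applying $\RuleLabel{+}{\neg}{I}{}$ to the minor premise $\minusop{A_1^\ast}$ and then $\RuleLabel{+}{\wedge}{I}{}$. $\RuleLabel{+}{\gets}{E}{0}$ is simply $\RuleLabel{+}{\wedge}{E}{0}$, and $\RuleLabel{+}{\gets}{E}{1}$ is $\RuleLabel{+}{\wedge}{E}{1}$ followed by $\RuleLabel{+}{\neg}{E}{}$.

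The two subtler cases are the rules for negative polarity. For $\RuleLabel{-}{\gets}{I}{}$, given a subderivation of $\minusop{A_0^\ast}$ from $[\minusop{A_1^\ast}]$, I assume $\plusop{A_0^\ast \wedge \negop{A_1^\ast}}$, apply $\RuleLabel{+}{\wedge}{E}{0}$ and $\RuleLabel{+}{\wedge}{E}{1}$ (the latter followed by $\RuleLabel{+}{\neg}{E}{}$) to extract $\plusop{A_0^\ast}$ and $\minusop{A_1^\ast}$, feed the latter into the subderivation to get $\minusop{A_0^\ast}$, derive $\bot$ by $\Noncontradict$, and close with $\Reductio{}$ to obtain $\minusop{A_0^\ast \wedge \negop{A_1^\ast}}$. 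For $\RuleLabel{-}{\gets}{E}{}$, from $\minusop{A_0^\ast \wedge \negop{A_1^\ast}}$ I apply $\RuleLabel{-}{\wedge}{E}{}$: the left branch $[\minusop{A_0^\ast}]$ is immediate, and in the right branch I use the assumption $[\minusop{\negop{A_1^\ast}}]$ via $\RuleLabel{-}{\neg}{E}{}$ to get $\plusop{A_1^\ast}$, combine with the other premise $\minusop{A_1^\ast}$ via $\Noncontradict$ to derive $\bot$, and then obtain $\minusop{A_0^\ast}$ by a vacuous application of $\Reductio{}$.

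Finally, if $\varGamma \vdash \mathcal{A}$ is a judgment whose formulae are already $\gets$-free and is provable in $\BiNKgets$, then the induction yields $\varGamma^\ast \vdash \mathcal{A}^\ast$ in $\BiNK$, which coincides with $\varGamma \vdash \mathcal{A}$ since $(\cdot)^\ast$ acts as the identity there. This establishes conservativity. The main obstacle is the simulation of $\RuleLabel{-}{\gets}{I}{}$ and $\RuleLabel{-}{\gets}{E}{}$; both require careful use of the coordination principles, and the second case relies on the fact that $\Reductio{}$ may discharge a hypothesis vacuously, yielding $\BiNK$'s implicit form of explosion.
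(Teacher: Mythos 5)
Your proof is correct, but it takes a genuinely different route from the paper. The paper dispatches the proposition semantically in one line: $\BiNKgets$ is sound with respect to the standard two-valued semantics (reading $A_0 \gets A_1$ as $A_0 \wedge \negop{A_1}$), and $\BiNK$ is complete with respect to it, so any $\gets$-free formula provable in $\BiNKgets$ is valid and hence provable in $\BiNK$. You instead give a syntactic simulation: the formula translation $(A_0 \gets A_1)^\ast = A_0^\ast \wedge \negop{A_1^\ast}$ together with an induction on derivations, and your simulations of the four $\gets$-rules check out --- in particular the two delicate negative-polarity cases, where $\RuleLabel{-}{\gets}{I}{}$ is closed by $\Reductio{}$ on a hypothetical $\plusop{A_0^\ast \wedge \negop{A_1^\ast}}$, and $\RuleLabel{-}{\gets}{E}{}$ goes through $\RuleLabel{-}{\wedge}{E}{}$ with a vacuous discharge supplying the ex-falso step. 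What each approach buys: the paper's argument is shorter but rests on completeness of $\BiNK$ for classical two-valued semantics (implicitly via Rumfitt's equivalence with $\NK$), whereas yours is constructive and self-contained, yielding an effective derivation-to-derivation translation and, as a bonus, establishing conservativity for arbitrary judgments $\varGamma \vdash \mathcal{A}$ (including $\varGamma \vdash \bot$) rather than only for provable formulae. The only points worth making explicit are that your induction must also cover judgments with conclusion $\bot$ and that $\Reductio{}$ indeed permits vacuous discharge, both of which are standard.
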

\begin{proof}
  It is obvious because $\BiNK$ is complete to the standard two-value
  semantics, and $\BiNKgets$ is sound to the semantics.
\end{proof}

$\BiNK$ and $\BiNKgets$ include sub-logics as follows:

\begin{proposition}\label{prop:redundant}
  \begin{enumerate}
  \item
  The inference rules
  $\RuleLabel{-}{\wedge}{I}{0}$,
  $\RuleLabel{-}{\wedge}{I}{1}$,
  $\RuleLabel{-}{\wedge}{E}{}$,
  $\RuleLabel{+}{\vee}{I}{}$,
  $\RuleLabel{+}{\vee}{E}{0}$, and
  $\RuleLabel{+}{\vee}{E}{1}$
  are derivable in $\BiNK$, and
  \item
  The inference rules
  $\RuleLabel{-}{\to}{I}{}$,
  $\RuleLabel{-}{\to}{E}{0}$,
  $\RuleLabel{-}{\to}{E}{1}$,
  $\RuleLabel{+}{\gets}{I}{}$,
  $\RuleLabel{+}{\gets}{E}{0}$, and
  $\RuleLabel{+}{\gets}{E}{1}$
  are derivable in $\BiNKgets$.
  \end{enumerate}
  \end{proposition}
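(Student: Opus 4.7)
The plan is to derive each listed rule by a uniform reductio template: assume the conjugate of the intended conclusion, derive $\bot$ using the coordination principles $\Noncontradict$ and $\Reductio{}$ together with the remaining primitive rules, and then discharge by $\Reductio{}$ to conclude. A preliminary observation I will use throughout is that \emph{explosion} is itself derivable: from $\bot$ and any formula-with-polarity $\mathcal{A}$, one assumes $\mathcal{A}^\ast$, keeps the already-derived $\bot$ in hand, and applies $\Reductio{}$ to conclude $\mathcal{A}$.

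For part~(1), the $\minusop{\wedge}$-rules are recovered from the $\plusop{\wedge}$-rules. To derive $\RuleLabel{-}{\wedge}{I}{0}$ from a premise $\minusop{A_0}$, assume $\plusop{A_0 \wedge A_1}$, apply $\RuleLabel{+}{\wedge}{E}{0}$ to get $\plusop{A_0}$, contradict $\minusop{A_0}$ via $\Noncontradict$, and close with $\Reductio{}$; the mirror case $\RuleLabel{-}{\wedge}{I}{1}$ uses $\RuleLabel{+}{\wedge}{E}{1}$. The case rule $\RuleLabel{-}{\wedge}{E}{}$ requires a doubly-nested reductio: under the assumed conjugate $\mathcal{A}^\ast$ of the intended conclusion, each sub-derivation $\minusop{A_i} \vdash \mathcal{A}$ combined with $\mathcal{A}^\ast$ yields $\bot$ and hence, by $\Reductio{}$ discharging $\minusop{A_i}$, the corresponding $\plusop{A_i}$; combining via $\RuleLabel{+}{\wedge}{I}{}$ and contradicting $\minusop{A_0 \wedge A_1}$ gives a second $\bot$, discharged by $\Reductio{}$ on $\mathcal{A}^\ast$. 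The $\plusop{\vee}$-rules listed are derived by a fully dual argument from the $\minusop{\vee}$-rules, interchanging polarities.

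Part~(2) follows the same template with the function connectives. For $\RuleLabel{-}{\to}{I}{}$, assume $\plusop{A_0 \to A_1}$, apply $\RuleLabel{+}{\to}{E}{}$ to the premise $\plusop{A_0}$ to produce $\plusop{A_1}$, contradict $\minusop{A_1}$, and discharge. The projection rules $\RuleLabel{-}{\to}{E}{0}$ and $\RuleLabel{-}{\to}{E}{1}$ need a doubly-nested reductio: for the first, assume $\minusop{A_0}$; under a further assumption $\plusop{A_0}$ derive $\plusop{A_1}$ by explosion from the two contradictory polarities of $A_0$; abstract via $\RuleLabel{+}{\to}{I}{}$ to obtain $\plusop{A_0 \to A_1}$; contradict the premise $\minusop{A_0 \to A_1}$; discharge $\minusop{A_0}$ by $\Reductio{}$ to conclude $\plusop{A_0}$. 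The $\plusop{\gets}$-rules are handled analogously, with $\RuleLabel{-}{\gets}{I}{}$ and $\RuleLabel{-}{\gets}{E}{}$ playing the role of primary rules from which the derivations proceed.

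The main obstacle is bookkeeping the nested $\Reductio{}$ discharges in $\RuleLabel{-}{\wedge}{E}{}$ and in the projection-style rules, where each assumption must be tagged and released at exactly the right inference so that the residual open assumptions at the root are precisely those of the claimed rule; laying each derivation out as an explicit proof tree in the style of Figures~\ref{fig:BiNK} and~\ref{fig:gets} reduces the verification to routine tree-pushing.
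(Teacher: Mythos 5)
Your proposal is correct and follows essentially the same route as the paper's proof: each listed rule is obtained by assuming the conjugate of the intended conclusion, deriving $\bot$ via $\Noncontradict$ together with the retained primitive rules (using vacuous-discharge $\Reductio{}$ as derived explosion where needed), and closing with $\Reductio{}$, with the case-style and projection-style rules requiring exactly the nested discharges you describe. The only nitpick is that $\RuleLabel{-}{\to}{E}{1}$ actually needs just a single reductio, since the inner $\RuleLabel{+}{\to}{I}{}$ can discharge $\plusop{A_0}$ vacuously; this does not affect correctness.
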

\begin{proof}
See Appendix~\ref{sec:proofs}.
\end{proof}

\section{Derivation Trees with Proofs in Their Nodes}\label{sec:curry}

In this section, we introduce derivation trees with proofs in their
nodes to mediate between natural deductions and $\lambda$-calculi
introduced in Sections~\ref{sec:bilateralism} and \ref{sec:blc},
respectively.

We add proofs to polarized formulae in the $\neg$-free fragment of
$\BiNKgets$, that is, the $\Noncontradict$, $\Reductio{}$,
$\RuleLabel{+}{\to}{I}{}$, $\RuleLabel{+}{\to}{E}{}$,
$\RuleLabel{+}{\wedge}{I}{}$, $\RuleLabel{+}{\wedge}{E}{0}$,
$\RuleLabel{+}{\wedge}{E}{1}$, $\RuleLabel{-}{\vee}{I}{}$,
$\RuleLabel{-}{\vee}{E}{0}$, $\RuleLabel{-}{\vee}{E}{1}$,
$\RuleLabel{+}{\gets}{I}{}$, $\RuleLabel{+}{\gets}{E}{0}$,
$\RuleLabel{+}{\gets}{E}{1}$, $\RuleLabel{-}{\gets}{I}{}$, and
$\RuleLabel{-}{\gets}{E}{}$ fragment, to construct a symmetric
$\lambda$-calculus.  We note that the other inference rules are
derivable by Proposition~\ref{prop:redundant}.

We assume a set of \emph{proof variables}. We write $\dvar$ for a
proof variable. We define that nodes $\dobj \colon \plusop{A}$ and
$\dobj \colon \minusop{A}$ in the natural deduction respectively
denote that $\dobj$ is a proof for acceptance and rejection of $A$. We
also define that a node $\dcom \colon \bot$ in the natural deduction
denotes that $\dcom$ is a proof for contradiction.

Node $\lamabs{\dvar}{\dobj} \colon \plusop{A_0 \to A_1}$ denotes that
$\lamabs{\dvar}{\dobj}$ is a proof for acceptance of $A_0 \to A_1$ if
$\dvar$ is a proof variable for acceptance of $A_0$ and $\dobj$ is a
proof for acceptance of $A_1$.
Node $\lamabs{\dvar}{\dobj} \colon \minusop{A_0 \gets A_1}$ denotes
that $\lamabs{\dvar}{\dobj}$ is a proof for acceptance of $A_0 \to
A_1$ if $\dvar$ is a proof variable for acceptance of $A_1$ and $\dobj$ is a
proof for acceptance of $A_0$.

Node $\app{\dobj_0}{\dobj_1} \colon \plusop{A_1}$ denotes that
$\app{\dobj_0}{\dobj_1}$ is a proof for acceptance of $A_1$ if
$\dobj_0$ is a proof for acceptance of $A_0 \to A_1$ and $\dobj_1$ is a proof
for acceptance of $A_0$.
Node $\app{\dobj_0}{\dobj_1} \colon \minusop{A_0}$ denotes that
$\app{\dobj_0}{\dobj_1}$ is a proof for rejection of $A_1$ if
$\dobj_0$ is a proof for rejection of $A_0 \gets A_1$ and $\dobj_1$ is a
proof for rejection of $A_1$.

Node $\pair{\dobj_0}{\dobj_1} \colon \plusop{A_0 \wedge A_1}$ denotes
that $\pair{\dobj_0}{\dobj_1}$ is a proof for acceptance of $A_0 \wedge
A_1$ if $\dobj_0$ is a proof for acceptance of $A_0$ and $\dobj_1$ is
a proof for acceptance of $A_1$.
Node $\pair{\dobj_0}{\dobj_1} \colon \minusop{A_0 \vee A_1}$ denotes
that $\pair{\dobj_0}{\dobj_1}$ is a proof for rejection of $A_0 \vee
A_1$ if $\dobj_0$ is a proof for rejection of $A_0$ and $\dobj_1$ is a
proof for rejection of $A_1$.

Node $\fst{\dobj} \colon \plusop{A_0}$ denotes that $\fst{\dobj}$ is a
proof for acceptance of $A_0$ if $\dobj_0$ is a proof for acceptance
of $A_0 \wedge A_1$.
Node $\fst{\dobj_0} \colon \minusop{A_0}$ denotes that $\fst{\dobj_0}$
is a proof for rejection of $A_0$ if $\dobj_0$ is a proof for
rejection of $A_0 \vee A_1$.
Nodes $\snd{\dobj_0} \colon \plusop{A_1}$ and $\snd{\dobj_0} \colon
\minusop{A_1}$ are similar.

Node $\bracket{\dobj_0}{\dobj_1} \colon \bot$ denotes that
$\bracket{\dobj_0}{\dobj_1}$ is a proof of contradiction if $\dobj_0$ is a
proof for acceptance of $A$ and $\dobj_1$ is a proof for rejection of
$A_0$.

Node $\muabs{\dvar}{\dcom} \colon \plusop{A}$ denotes that
$\muabs{\dvar}{\dcom}$ is a proof for acceptance of $A$ if $\dvar$ is
a proof variable for rejection of $A$ and $\dcom$ is a proof of
contradiction.
Node $\muabs{\dvar}{\dcom} \colon \minusop{A}$ denotes that
$\muabs{\dvar}{\dcom}$ is a proof for rejection of $A$ if $\dvar$ is a
proof variable for acceptance of $A$ and $\dcom$ is a proof of
contradiction.

We formally define the set of proofs in a Curry-style bilateral
$\lambda$-calculus, as shown in Figure~\ref{fig:proofs} where $\dcst$
ranges over constants for adding logical axioms.

\begin{figure}
\begin{align*}
  & \mbox{(proofs)} &
  \dobj
  & \Coloneqq \dcst
  \mid \dvar
  \mid \lamabs{\dvar}{\dobj}
  \mid \app{\dobj}{\dobj}
  \mid \pair{\dobj}{\dobj} \mid \fst{\dobj}
  \mid \snd{\dobj}
  \mid \muabs{\dvar}{\dcom} &
  \dcom
  & \Coloneqq \bracket{\dobj}{\dobj}
\end{align*}
\caption{Proofs of the negation-free natural
  deduction.}\label{fig:proofs}
\end{figure}

Let $\varGamma$ be a set of nodes.  Judgment $\varGamma \vdash \dobj
\colon \plusop{A}$ denotes that $\dobj$ is a proof for acceptance of
$A$ under $\varGamma$. Judgment $\varGamma \vdash \dobj \colon
\minusop{A}$ denotes that $\dobj$ is a proof for rejection of $A$
under $\varGamma$. Judgment $\varGamma \vdash \dcom \colon \bot$
denotes that $\dcom$ is a proof for contradiction under $\varGamma$.

\section{Bilateral Lambda-Calculi}\label{sec:blc}

In this section, we construct a Church-style symmetric
$\lambda$-calculus based on bilateralism and define a call-by-value
sub-calculus.

\subsection{Definition and Basic Properties}

We respectively call proofs for acceptance and rejection
\emph{expressions} and \emph{continuations}.  We distinguish proof
variables for acceptance from those for rejection.  We construct an
alternative symmetric $\lambda$-calculus called a bilateral
$\lambda$-calculus ($\BLC$).

We define types, polarized types, expressions, continuations,
commands, and syntactical objects as shown in Figure~\ref{fig:blc}.


\begin{figure}[t]
\begin{align*}
  & \mbox{(types)} & A & \Coloneqq o \mid (A \to A) \mid (A \gets A) \mid (A \wedge A) \mid (A \vee A)\\
  & \mbox{(expressions)} &
  \expr
  & \Coloneqq \cst^o
  \mid \evar^A
  \mid \lamabs{\evar^A}{\expr}
  \mid \app{\expr}{\expr}
  \mid \pair{\expr}{\expr} \mid \fst{\expr}
  \mid \snd{\expr}
  \mid \muabs{\cvar^A}{\neutral}\\
  & \mbox{(continuations)} &
  \cont
  & \Coloneqq \bullet^o
  \mid \cvar^A
  \mid \lamabs{\cvar^A}{\cont}
  \mid \app{\cont}{\cont}
  \mid \pair{\cont}{\cont} \mid \fst{\cont}
  \mid \snd{\cont}
  \mid \muabs{\evar^A}{\neutral}\\
  & \mbox{(commands)} &
  \neutral
  & \Coloneqq \bracket{\expr}{\cont}\\
  & \mbox{(syntactical objects)} &
  D
  & \Coloneqq \expr \mid \cont \mid \neutral
\end{align*}
\caption{The bilateral lambda-calculus $\BLC$.}\label{fig:blc}
\end{figure}

Expression $\cst^o$ denotes a constant. Expression $\evar^A$ denotes
an expression variable.
Expression $\lamabs{\evar^{A}}{\expr}$ denotes a
$\lambda$-abstraction of expression $\expr$ by
$\evar^{A}$.
Expression $\app{\expr_0}{\expr_1}$ denotes an application of function
$\expr_0$ to expression $\expr_1$.
Expression
$\pair{\expr_0}{\expr_1}$ denotes a pair of expressions $\expr_0$ and
$\expr_1$.
Expressions $\fst{\expr}$ and $\snd{\expr}$ are projections.

Continuations are defined symmetrically to expressions. Continuation
$\bullet^o$ denotes the unique constant denoting a continuation of
$o$. By the definition based on bilateralism, the calculus is
involutive on the notion of polarities.

Commands are first-class citizens. A command can be abstracted by
expression variable $\evar^A$ or continuation variable $\cvar^A$.  Command
$\neutral$ abstracted by $\cvar^A$ is expression
$\muabs{\cvar^A}{\neutral}$. A command abstracted by $\evar^A$ is
continuation $\muabs{\evar^A}{\neutral}$.  A similar idea can be seen in
$\CH$-calculus which was proposed by Curien and Herbelin~\cite{CurienICFP00}.


Expressions, continuations, and commands are called syntactical objects.

We assume that the connective powers of applications are stronger than
those of $\lambda$-abstractions.  We omit superscripts that
denote types when the context renders them obvious.

\begin{figure*}[t]
  \begin{center}
    \AxiomC{$\plusj{\varGamma}{\expr}{A}$}
    \AxiomC{$\minusj{\varGamma}{\cont}{A}$}
    \RightLabel{$\Noncontradict$}
    \BinaryInfC{$\zeroj{\varGamma}{\bracket{\expr}{\cont}}$}
    \DisplayProof
  \end{center}
  \begin{minipage}{.49\textwidth}
  \begin{center}
    \AxiomC{$\zeroj{\varPi; \varSigma, \cvar \colon A}{\neutral}$}
    \RightLabel{$\Reductio{+}$}
    \UnaryInfC{$\plusj{\varPi; \varSigma}{\muabs{\cvar^A}{\neutral}}{A}$}
    \DisplayProof
    \\\vspace{\baselineskip}
    \AxiomC{\phantom{$X^{Y^Z}$}}
    \RightLabel{(Constant$_+$)}
    \UnaryInfC{$\plusj{\varGamma}{\cst^o}{o}$}
    \DisplayProof
    \\\vspace{\baselineskip}
    \AxiomC{\phantom{$X^{Y^Z}$}}
    \RightLabel{$\Identity{+}$}
    \UnaryInfC{$\plusj{\varGamma, \evar^A \colon A}{\evar^A}{A}$}
    \DisplayProof
    \\\vspace{\baselineskip}
    \AxiomC{$\plusj{\varPi, \evar \colon A_0; \varSigma}{\expr}{A_1}$}
    \RightLabel{$\RuleLabel{+}{\to}{I}{}$}
    \UnaryInfC{$\plusj{\varPi; \varSigma}{\lamabs{\evar^{A_0}}{\expr}}{A_0 \to A_1}$}
    \DisplayProof
    \\\vspace{\baselineskip}
    \AxiomC{$\plusj{\varGamma}{\expr_0}{A_0 \to A_1}$}
    \AxiomC{$\plusj{\varGamma}{\expr_1}{A_0}$}
    \RightLabel{$\RuleLabel{+}{\to}{E}{}$}
    \BinaryInfC{$\plusj{\varGamma}{\app{\expr_0}{\expr_1}}{A_1}$}
    \DisplayProof
    \\\vspace{\baselineskip}
    \AxiomC{$\plusj{\varGamma}{\expr_0}{A_0}$}
    \AxiomC{$\plusj{\varGamma}{\expr_1}{A_1}$}
    \RightLabel{$\RuleLabel{+}{\wedge}{I}{}$}
    \BinaryInfC{$\plusj{\varGamma}{\pair{\expr_0}{\expr_1}}{A_0 \wedge A_1}$}
    \DisplayProof
    \\\vspace{\baselineskip}
    \AxiomC{$\plusj{\varGamma}{\expr}{A_0 \wedge A_1}$}
    \RightLabel{$\RuleLabel{+}{\wedge}{E}{0}$}
    \UnaryInfC{$\plusj{\varGamma}{\fst{\expr}}{A_0}$}
    \DisplayProof
    \\\vspace{\baselineskip}
    \AxiomC{$\plusj{\varGamma}{\expr}{A_0 \wedge A_1}$}
    \RightLabel{$\RuleLabel{+}{\wedge}{E}{1}$}
    \UnaryInfC{$\plusj{\varGamma}{\snd{\expr}}{A_1}$}
    \DisplayProof
   \end{center}
  \end{minipage}
  \:
  \begin{minipage}{.49\textwidth}
    \begin{center}
    \AxiomC{$\zeroj{\varPi; \varSigma, \evar \colon A}{\neutral}$}
    \RightLabel{$\Reductio{-}$}
    \UnaryInfC{$\minusj{\varPi; \varSigma}{\muabs{\evar^A}{\neutral}}{A}$}
    \DisplayProof
    \\\vspace{\baselineskip}
    \AxiomC{\phantom{$X^{Y^Z}$}}
    \RightLabel{(Constant$_{-}$)}
    \UnaryInfC{$\minusj{\varGamma}{\bullet^o}{o}$}
    \DisplayProof
    \\\vspace{\baselineskip}
    \AxiomC{\phantom{$X^{Y^Z}$}}
    \RightLabel{$\Identity{-}$}
    \UnaryInfC{$\minusj{\varGamma, \cvar^A \colon A}{\cvar^A}{A}$}
    \DisplayProof
    \\\vspace{\baselineskip}
    \AxiomC{$\minusj{\varPi; \varSigma, \cvar \colon A_1}{\cont}{A_0}$}
    \RightLabel{$\RuleLabel{-}{\gets}{I}{}$}
    \UnaryInfC{$\minusj{\varPi; \varSigma}{\lamabs{\cvar^{A_1}}{\cont}}{A_0 \gets A_1}$}
    \DisplayProof
    \\\vspace{\baselineskip}
    \AxiomC{$\minusj{\varGamma}{\cont_0}{A_0 \gets A_1}$}
    \AxiomC{$\minusj{\varGamma}{\cont_1}{A_1}$}
    \RightLabel{$\RuleLabel{-}{\gets}{E}{}$}
    \BinaryInfC{$\minusj{\varGamma}{\app{\cont_0}{\cont_1}}{A_0}$}
    \DisplayProof
    \\\vspace{\baselineskip}
    \AxiomC{$\minusj{\varGamma}{\cont_0}{A_0}$}
    \AxiomC{$\minusj{\varGamma}{\cont_1}{A_1}$}
    \RightLabel{$\RuleLabel{-}{\vee}{I}{}$}
    \BinaryInfC{$\minusj{\varGamma}{\pair{\cont_0}{\cont_1}}{A_0 \vee A_1}$}
    \DisplayProof
    \\\vspace{\baselineskip}
    \AxiomC{$\minusj{\varGamma}{\cont}{A_0 \vee A_1}$}
    \RightLabel{$\RuleLabel{-}{\vee}{E}{0}$}
    \UnaryInfC{$\minusj{\varGamma}{\fst{\cont}}{A_0}$}
    \DisplayProof
    \\\vspace{\baselineskip}
    \AxiomC{$\minusj{\varGamma}{\cont}{A_0 \vee A_1}$}
    \RightLabel{$\RuleLabel{-}{\vee}{E}{1}$}
    \UnaryInfC{$\minusj{\varGamma}{\snd{\cont}}{A_1}$}
    \DisplayProof
  \end{center}
\end{minipage}
\caption{A type system of $\BLC$.}\label{fig:bilateralcalculus}
\end{figure*}

Figure~\ref{fig:bilateralcalculus} shows the type system of $\BLC$
consisting of judgments $\plusj{\varGamma}{\expr}{A}$,
$\minusj{\varGamma}{\cont}{A}$, and $\zeroj{\varGamma}{\neutral}$,
where type environments
$\varGamma$ are defined as follows:
\begin{alignat*}{4}
  & \mbox{(type environments)} & \quad
  \varGamma & \Coloneqq \varPi; \varSigma & \qquad\quad
  \varPi & \Coloneqq \varnothing \mid \varPi, \evar \colon A & \qquad\quad
  \varSigma & \Coloneqq \varnothing \mid \varSigma, \cvar \colon A \enspace .
\end{alignat*}

Judgments $\plusj{\varPi; \varSigma}{\expr}{A}$, $\minusj{\varPi;
  \varSigma}{\cont}{A}$, and $\zeroj{\varPi; \varSigma}{\neutral}$
correspond to $\{\, \plusop{A} \mid A \in \varPi \,\}, \{\,
\minusop{A} \mid A \in \varSigma\,\} \vdash \plusop{A}$, $\{\,
\plusop{A} \mid A \in \varPi \,\}, \{\, \minusop{A} \mid A \in
\varSigma\,\} \vdash \minusop{A}$, and $\{\, \plusop{A} \mid A \in
\varPi \,\}, \{\, \minusop{A} \mid A \in \varSigma\,\} \vdash \bot$,
respectively


The type system contains rules about commands. Rule
$\Noncontradict$ defines a command from an expression and a
continuation. Additionally, even if a command occurs in a derivation,
the derivation does not necessarily end and may be continued by
$\Reductio{+}$ or $\Reductio{-}$.
The other inference rules about expressions are defined in a standard
manner.  The inference rules about continuations are defined
symmetrically to expressions.

Substitutions
$[\expr/\evar]$ and $[\cont/\cvar]$ (denoted by $\theta$) are
inductively defined in a standard component-wise and capture-avoiding
manner. We write
$\fev{\expr}$ and $\fev{\cont}$ for free expression variables in
$\expr$ and $\cont$, respectively. We also write $\fcv{\expr}$ and
$\fcv{\cont}$ for free continuation variables in $\expr$ and $\cont$,
respectively.

The bilateral $\lambda$-calculus is well designed. The so-called
weakening holds as follows:
\begin{proposition}
  \begin{enumerate}
  \item $\plusj{\varPi; \varSigma}{\expr}{A_0}$
    implies $\plusj{\varPi, \evar \colon A; \varSigma}{\expr}{A_0}$ and $\plusj{\varPi; \varSigma,
    \cvar \colon A}{\expr}{A_0}$,
  \item $\minusj{\varPi; \varSigma}{\cont}{A_0}$
    implies $\minusj{\varPi, \evar \colon A; \varSigma}{\cont}{A_0}$ and $\minusj{\varPi; \varSigma,
    \cvar \colon A}{\cont}{A_0}$, and
  \item $\zeroj{\varPi; \varSigma}{\neutral}$ implies $\zeroj{\varPi,
    \evar \colon A; \varSigma}{\neutral}$ and $\zeroj{\varPi;
    \varSigma, \cvar \colon A}{\neutral}$.
  \end{enumerate}
\end{proposition}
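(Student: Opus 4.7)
The plan is to prove the three clauses simultaneously by mutual induction on the structure of the typing derivation. A single joint induction is necessary because the rules in Figure~\ref{fig:bilateralcalculus} interleave expressions, continuations, and commands: $\Noncontradict$ turns a typed expression and a typed continuation into a typed command, while $\Reductio{+}$ and $\Reductio{-}$ turn a typed command back into a typed expression or continuation. Hence the induction hypotheses for all three judgments are needed even to establish any one of the clauses. I will write out only the case of adding $\evar \colon A$ to $\varPi$ in detail; the case of adding $\cvar \colon A$ to $\varSigma$ is completely symmetric by the left--right duality of the calculus.

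For the axiom rules ($\Identity{+}$, $\Identity{-}$, Constant$_+$, Constant$_{-}$) the conclusion is immediate: the new hypothesis $\evar \colon A$ in $\varPi$ does not disturb membership of the referenced variable or the typing of the constants $\cst^o$ and $\bullet^o$. For the rules whose conclusion binds no new variable --- namely $\Noncontradict$, $\RuleLabel{+}{\to}{E}{}$, $\RuleLabel{-}{\gets}{E}{}$, $\RuleLabel{+}{\wedge}{I}{}$, $\RuleLabel{-}{\vee}{I}{}$, and the four projection rules --- one simply applies the induction hypothesis to each premise (using the appropriate clause of the statement, whichever of $\plusj{}{}{}$, $\minusj{}{}{}$, $\zeroj{}{}$ the premise carries) and then reapplies the same rule with the enlarged context.

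The only cases that require care are the binding rules $\Reductio{+}$, $\Reductio{-}$, $\RuleLabel{+}{\to}{I}{}$, and $\RuleLabel{-}{\gets}{I}{}$, where the premise already extends the type environment by a bound variable. For instance, in $\RuleLabel{+}{\to}{I}{}$ the premise is $\plusj{\varPi, \evar' \colon A_0'; \varSigma}{\expr}{A_1}$, and naively extending $\varPi$ by $\evar \colon A$ could produce a clash if $\evar' = \evar$. This is handled by the standard Barendregt convention: before applying the induction hypothesis, $\alpha$-rename the bound variable $\evar'$ in $\lamabs{\evar'^{A_0'}}{\expr}$ so that it is fresh for $\evar$ and for $\varPi$; the resulting derivation has the same structure, the induction hypothesis supplies typing under $\varPi, \evar \colon A, \evar' \colon A_0'; \varSigma$, and reapplication of the rule yields the desired judgment. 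The cases of $\Reductio{+}$ and $\Reductio{-}$ are analogous, using the clause for commands.

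The main obstacle, such as it is, is purely bureaucratic: making sure that the mutual induction is stated correctly so that in the $\Noncontradict$ case the expression and continuation clauses feed the command clause, and in the $\Reductio{\pm}$ cases the command clause feeds the expression and continuation clauses. Once the three clauses are bundled into a single induction hypothesis and $\alpha$-conversion is admitted implicitly, every case is mechanical, and no genuinely difficult step remains.
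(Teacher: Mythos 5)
Your proposal is correct and matches the paper's proof, which simply states ``By induction on derivation''; you have filled in the same mutual induction on the typing derivation, with the standard treatment of the binding cases via $\alpha$-renaming. No discrepancy.
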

\begin{proof}
  By induction on derivation.
\end{proof}

The substitution lemma definitely holds as follows:
\begin{lemma}\label{lem:subst}
  \begin{enumerate}
  \item Assume $\plusj{\varPi, \evar \colon A_0;
    \varSigma}{\expr'}{A_1}$ and $\plusj{\varPi;
    \varSigma}{\expr}{A_0}$. Then, $\plusj{\varPi;
    \varSigma}{[\expr/\evar] \expr'}{A_1}$ holds.
  \item Assume $\minusj{\varPi, \evar \colon A_0;
    \varSigma}{\cont}{A_1}$ and $\plusj{\varPi;
    \varSigma}{\expr}{A_0}$. Then, $\minusj{\varPi; \varSigma}{[\expr
      /\evar]\cont}{A_1}$ holds.
  \item Assume $\zeroj{\varPi, \evar \colon A; \varSigma}{\neutral}$
    and $\plusj{\varPi; \varSigma}{\expr}{A}$. Then, $\zeroj{\varPi;
      \varSigma}{[\expr /\evar]\neutral}$ holds.
  \item Assume $\plusj{\varPi; \varSigma, \cvar \colon
    A_0}{\expr}{A_1}$ and $\minusj{\varPi;
    \varSigma}{\cont}{A_0}$. Then, $\plusj{\varPi; \varSigma}{[\cont
      /\cvar]\expr}{A_1}$ holds.
  \item Assume $\minusj{\varPi; \varSigma, \cvar \colon
    A_0}{\cont'}{A_1}$ and $\minusj{\varPi;
    \varSigma}{\cont}{A_0}$. Then, $\minusj{\varPi;
    \varSigma}{[\cont/\cvar] \cont'}{A_1}$ holds.
  \item Assume $\zeroj{\varPi; \varSigma, \cvar \colon A}{\neutral}$
    and $\minusj{\varPi; \varSigma}{\cont}{A}$. Then, $\zeroj{\varPi;
      \varSigma}{[\cont /\cvar]\neutral}$ holds.
  \end{enumerate}
\end{lemma}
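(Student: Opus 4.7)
The plan is to prove all six clauses simultaneously by induction on the structure of the typing derivation of the first hypothesis. A mutual formulation is unavoidable because expressions, continuations, and commands are mutually defined: commands arise from one of each via $\Noncontradict$, while expressions and continuations may embed commands through $\Reductio{+}$ and $\Reductio{-}$. Hence, substituting into a term in one syntactic category will, in general, require the induction hypothesis for another.

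The cases split according to the last rule used. For the axiomatic rules $\Identity{+}$ and $\Identity{-}$ I distinguish whether the displayed variable is the one being substituted: if it is, the conclusion is exactly the second hypothesis (possibly after weakening); otherwise, the same identity rule re-applies. The constant rules give terms that substitution leaves untouched. For each introduction and elimination rule of $\to$, $\gets$, $\wedge$, and $\vee$, the step is to apply the appropriate induction hypothesis to each premise and reassemble by the same rule; in $\RuleLabel{+}{\to}{I}{}$ and $\RuleLabel{-}{\gets}{I}{}$ an $\alpha$-renaming of the bound variable suffices to keep it out of $\fev{\expr}$ or $\fcv{\cont}$ and thus forbid capture.

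The cross-clause cases are what drive the mutual induction. For $\Noncontradict$, which concludes a command $\bracket{\expr'}{\cont'}$, clause 3 reduces to clauses 1 and 2 applied to the two premises, and clause 6 reduces symmetrically to clauses 4 and 5. For $\Reductio{+}$ in clause 1, the term has the form $\muabs{\cvar^{B}}{\neutral}$ with premise $\zeroj{\varPi, \evar \colon A_0; \varSigma, \cvar \colon B}{\neutral}$; after $\alpha$-renaming $\cvar$ away from $\fcv{\expr}$, clause 3 of the induction hypothesis yields $\zeroj{\varPi; \varSigma, \cvar \colon B}{[\expr/\evar]\neutral}$, and $\Reductio{+}$ delivers the desired conclusion. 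The $\Reductio{-}$ case in clause 2 appeals to clause 3 in the same way, and the $\Reductio{\pm}$ cases arising in clauses 4 and 5 appeal to clause 6.

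No deep obstacle is expected; the entire argument is routine once the simultaneous formulation is fixed. The only point that requires care is that clauses 1--3 must be grouped together (they interact via $\Reductio{+}$ and $\Noncontradict$) and likewise clauses 4--6; attempting to prove any clause in isolation would fail exactly at the rules that cross syntactic categories. Weakening, already established, is used freely to match environments in the identity and $\alpha$-renaming cases.
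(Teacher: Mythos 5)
Your proposal is correct and matches the paper's proof, which is simply stated as ``by induction on derivation''; you have fleshed out exactly the case analysis and mutual-induction structure that argument relies on. The observations about grouping clauses 1--3 and 4--6 and about $\alpha$-renaming in the binder cases are the right points of care, and nothing in your sketch diverges from the intended argument.
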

\begin{proof}
  By induction on derivation.
\end{proof}

The bilateral $\lambda$-calculus enjoys the type uniqueness property, that
is, every expression and continuation has a unique positive and
negative type, respectively, as follows:
\begin{proposition}
  \begin{enumerate}
  \item If $\plusj{\varGamma}{\expr}{A_0}$ and
    $\plusj{\varGamma}{\expr}{A_1}$, then $A_0$ and
    $A_1$ are the same.
  \item If $\minusj{\varGamma}{\cont}{A_0}$ and
    $\minusj{\varGamma}{\cont}{A_1}$, then $A_0$ and $A_1$ are the
    same.
  \end{enumerate}
\end{proposition}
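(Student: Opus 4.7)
The plan is a straightforward mutual induction on the structure of the syntactical objects $\expr$ and $\cont$, exploiting the Church-style type annotations on bound variables. Since every variable is tagged $\evar^A$ or $\cvar^A$, and since the typing rules of Figure~\ref{fig:bilateralcalculus} are syntax-directed, each constructor determines its conclusion's type from the annotations and from the types of its immediate subobjects. So I prove (1) and (2) simultaneously by induction on $\expr$ and $\cont$.

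First I dispose of the leaves. For $\cst^o$ the only applicable rule is (Constant$_+$), forcing $A_0 = A_1 = o$; symmetrically for $\bullet^o$. For $\evar^A$ the only applicable rule is $\Identity{+}$, forcing the type to be exactly the annotation $A$; symmetrically for $\cvar^A$. Next, for each compound expression constructor I observe that the last rule of any derivation is uniquely determined by the head symbol: $\lamabs{\evar^A}{\expr'}$ can only be concluded by $\RuleLabel{+}{\to}{I}{}$, so its type is $A \to A'$ where $A'$ is the (by IH unique) type of $\expr'$; $\app{\expr_0}{\expr_1}$ only by $\RuleLabel{+}{\to}{E}{}$, and by IH the type $A_0 \to A_1$ of $\expr_0$ is unique, hence $A_1$ is unique; $\pair{\expr_0}{\expr_1}$ only by $\RuleLabel{+}{\wedge}{I}{}$, and the IH on each component gives uniqueness of $A_0$ and $A_1$; $\fst{\expr'}$ and $\snd{\expr'}$ only by $\RuleLabel{+}{\wedge}{E}{0}$ and $\RuleLabel{+}{\wedge}{E}{1}$, and by IH the unique type $A_0 \wedge A_1$ of $\expr'$ determines the projection. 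Finally, $\muabs{\cvar^A}{\neutral}$ can only be concluded by $\Reductio{+}$, which forces the conclusion type to be the annotation $A$.

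For continuations the cases are dual: $\lamabs{\cvar^A}{\cont'}$ forces $A_0 \gets A$ via $\RuleLabel{-}{\gets}{I}{}$; $\app{\cont_0}{\cont_1}$ uses $\RuleLabel{-}{\gets}{E}{}$ and the IH giving uniqueness of the type $A_0 \gets A_1$ of $\cont_0$; pairs and projections use $\RuleLabel{-}{\vee}{I}{}$, $\RuleLabel{-}{\vee}{E}{0}$, $\RuleLabel{-}{\vee}{E}{1}$; and $\muabs{\evar^A}{\neutral}$ gets its type from the annotation via $\Reductio{-}$. The mutual recursion bottoms out because the $\mu$-cases reduce to the command $\neutral$, whose typing (via $\Noncontradict$) only requires subderivations for a strictly smaller expression and continuation, to which the IH applies.

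The argument has no real obstacle: the only subtle point is that the $\mu$-constructors cross from one syntactic category to the other through a command, so the induction must be on the combined size of $\expr$, $\cont$, and $\neutral$ (or equivalently, a simultaneous structural induction on all three). Because the annotation on the bound variable in a $\mu$-abstraction directly fixes the conclusion type, we actually never need the IH to compare types inside a $\mu$; the IH is only invoked for the pure expression/continuation constructors, where the head symbol uniquely selects the last typing rule. Hence the uniqueness propagates cleanly in both components.
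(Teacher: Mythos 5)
Your proof is correct and is essentially the argument the paper has in mind: the paper simply asserts that the proposition "holds immediately from the definition of the type system," which is exactly your observation that the rules are syntax-directed and the Church-style annotations on bound variables (including the $\mu$-binders) determine every conclusion type. Your write-up just makes the implicit simultaneous structural induction explicit.
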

\begin{proof}
  The proposition holds immediately from the definition of the type system.
\end{proof}

\subsection{The Call-by-Value Lambda-Calculus $\cbvBLCeq$}\label{sec:cbvblc}

We define a call-by-value bilateral $\lambda$-calculus $\cbvBLCeq$. 
Types, expressions, continuations, commands, and typing rules
are the same as those of $\BLC$.
The values and the call-by-value evaluation contexts for expressions of $\cbvBLCeq$ are defined
as shown in Figure~\ref{fig:cbvvalueandcontext}.

\begin{figure}[t]
\begin{alignat*}{2}
  & \hbox{(values)}
  & \eValue & \Coloneqq
  \cst
  \mid \evar
  \mid \lamabs{\evar}{\expr}
  \mid \pair{\eValue}{\eValue}
  \mid \fst{\eValue}
  \mid \snd{\eValue}
  \mid \muabs{\cvar}{\bracket{\eValue}{\fst{\cvar}}}
  \mid \muabs{\cvar}{\bracket{\eValue}{\snd{\cvar}}}
  \\
  & \hbox{(contexts)}\;\;
  & \eEvalSymb & \Coloneqq
  \{-\}
  \mid \app{\eEvalSymb}{\expr}
  \mid \app{\eValue}{\eEvalSymb}
  \mid \pair{\eEvalSymb}{\expr}
  \mid \pair{\eValue}{\eEvalSymb}
  \mid \fst{\eEvalSymb}
  \mid \snd{\eEvalSymb}
  \enspace .
\end{alignat*}
\caption{Values and contexts of $\cbvBLCeq$.}\label{fig:cbvvalueandcontext}
\end{figure}


An evaluation context $\eEvalSymb$ is a expression with a hole $\{-\}$.
The expression obtained by filling the hole of $\eEvalSymb$ with an expression $\expr$
is denoted by $\eEval{\expr}$.
The equations of $\cbvBLCeq$ are shown in Figure~\ref{fig:cbvbieq}.

\begin{figure*}[t]
  \begin{minipage}[t]{.50\textwidth}
    \begin{align*}
      \app{(\lamabs{\evar}{\expr})}{\eValue} & \cbveq [\eValue/\evar] \expr\\
      \lamabs{\evar}{\app{\eValue}{\evar}} & \cbveq \eValue \tag*{if $\evar \not\in \fev{\eValue}$}\\
      \fst{\pair{\eValue_0}{\eValue_1}} & \cbveq \eValue_0\\
      \snd{\pair{\eValue_0}{\eValue_1}} & \cbveq \eValue_1\\
      \pair{\fst{\eValue}}{\snd{\eValue}} & \cbveq \eValue\\
      \muabs{\cvar}{\bracket{\expr}{\cvar}} & \cbveq \expr \tag*{if $\cvar \not\in \fcv{\expr}$}      
    \end{align*}
  \end{minipage}
  \begin{minipage}[t]{.49\textwidth}
    \begin{align*}
      \app{(\lamabs{\cvar}{\cont_0})}{\cont_1} & \cbveq [\cont_1/\cvar] \cont_0\\
      \lamabs{\cvar}{\app{\cont}{\cvar}} & \cbveq \cont \tag*{if $\cvar \not\in \fcv{\cont}$} \\
      \fst{\pair{\cont_0}{\cont_1}} & \cbveq \cont_0\\
      \snd{\pair{\cont_0}{\cont_1}} & \cbveq \cont_1\\
      \pair{\fst{\cont}}{\snd{\cont}} & \cbveq \cont\\
      \muabs{\evar}{\bracket{\evar}{\cont}} & \cbveq \cont \tag*{if $\evar \not\in \fev{\cont}$}
    \end{align*}
  \end{minipage}
\vspace{.1\baselineskip}
\begin{center}
  $\bracket{\eValue}{\muabs{\evar}{\neutral}} \cbveq [\eValue/\evar] N$
  \hspace{3cm}
  $\bracket{\muabs{\cvar}{\neutral}}{\cont} \cbveq [\cont/\cvar] N$
  \\[5pt]
  $\bracket{\eEval{\expr}}{\cont} \cbveq \bracket{\expr}{\muabs{\evar}{\bracket{\eEval{\evar}}{\cont}}}$
  \;\;
  \mbox{if $\evar$ is fresh}
\end{center}
  \caption{The equations of $\cbvBLCeq$.}\label{fig:cbvbieq}
\end{figure*}

Although careful readers will wonder why $\fst{\eValue}$ and
$\snd{\eValue}$ are values, they can often be  seen in
$\lambda$-calculi based on categorical semantics (cf.\ Definition 7.7
in Selinger's paper~\cite{Sel:concdc} and Figure 2 in
Wadler's paper~\cite{Wadler05}).
We also note that $\muabs{\cvar}{\bracket{\eValue}{\fst{\cvar}}}$ and 
$\muabs{\cvar}{\bracket{\eValue}{\snd{\cvar}}}$ are values for $A \vee B$, namely,
they mean the left and the right injections of $\eValue$, respectively.
We can define case expressions using pairs of continuations as follows:
\begin{center}
  $\inl{\expr} \equiv \muabs{\cvar}{\bracket{\expr}{\fst{\cvar}}} \qquad\qquad
  \inr{\expr}  \equiv \muabs{\cvar}{\bracket{\expr}{\snd{\cvar}}}$\\
  $\caseterm{\expr}{\evar_0}{\expr_0}{\evar_1}{\expr_1} \equiv \muabs{\cvar}{\bracket{\expr}{\pair{\muabs{\evar_0}{\bracket{\expr_0}{\cvar}}}{\muabs{\evar_1}{\bracket{\expr_1}{\cvar}}}}}$
\end{center}
\begin{center}
  \AxiomC{$\plusj{\varGamma}{\expr}{A_0}$}
  \UnaryInfC{$\plusj{\varGamma}{\inl{\expr}}{A_0 \vee A_1}$}
  \DisplayProof
  \qquad\qquad\qquad
  \AxiomC{$\plusj{\varGamma}{\expr}{A_1}$}
  \UnaryInfC{$\plusj{\varGamma}{\inr{\expr}}{A_0 \vee A_1}$}
  \DisplayProof
  \\\vspace{\baselineskip}
  \AxiomC{$\plusj{\varPi; \varSigma}{\expr}{A_0 \vee A_1}$}
  \AxiomC{$\plusj{\varPi, \evar_0 \colon A_0; \varSigma}{\expr_0}{A}$}
  \AxiomC{$\plusj{\varPi, \evar_1 \colon A_1; \varSigma}{\expr_1}{A}$}
  \TrinaryInfC{$\plusj{\varPi; \varSigma}{\caseterm{\expr}{\evar_0}{\expr_0}{\evar_1}{\expr_1}}{A}$}
  \DisplayProof
\end{center}
\vspace{-\baselineskip}
\begin{align*}
  & \bracket{\caseterm{\inl{\eValue}}{\evar_0}{\expr_0}{\evar_1}{\expr_1}}{\cont}\\
  & \equiv \bracket{\muabs{\cvar}{\bracket{\muabs{\cvar_2}{\bracket{\eValue}{\fst{\cvar_2}}}}{\pair{\muabs{\evar_0}{\bracket{\expr_0}{\cvar}}}{\muabs{\evar_1}{\bracket{\expr_1}{\cvar}}}}}}{\cont}\\
  & \cbveq \bracket{\muabs{\cvar_2}{\bracket{\eValue}{\fst{\cvar_2}}}}{\pair{\muabs{\evar_0}{\bracket{\expr_0}{\cont}}}{\muabs{\evar_1}{\bracket{\expr_1}{\cont}}}}\\
  & \cbveq \bracket{\eValue}{\fst{\pair{\muabs{\evar_0}{\bracket{\expr_0}{\cont}}}{\muabs{\evar_1}{\bracket{\expr_1}{\cont}}}}}
  \cbveq \bracket{\eValue}{\muabs{\evar_0}{\bracket{\expr_0}{\cont}}}
  \cbveq \bracket{[\eValue / \evar_0] \expr_0}{\cont} \enspace .
\end{align*}

  \begin{figure}[t]
  \begin{align*}
    & \mbox{(types)} &
    \dctype & \Coloneqq \chi \mid (\dctype \land \dctype) \mid (\dctype \vee \dctype) \mid (\negop{\dctype})
    \\
    & \mbox{(terms)} &
    \dcterm & \Coloneqq \dcvar \mid \dcpairr{\dcterm}{\dcterm} \mid \dcinl{\dcterm} \mid \dcinr{\dcterm} \mid \dcnotr{\dccoterm} \mid \dcabsr{\dccovar}{\dcstat}
    \\
    & \mbox{(coterms)} &
    \dccoterm & \Coloneqq \dccovar \mid \dcpairl{\dccoterm}{\dccoterm} \mid \dcfst{\dccoterm} \mid \dcsnd{\dccoterm} \mid \dcnotl{\dcterm} \mid \dcabsl{\dcvar}{\dcstat}
    \\
    & \mbox{(statements)} &
    \dcstat & \Coloneqq \dccut{\dcterm}{\dccoterm}
    \\
    & \mbox{(syntactical objects)} &
    \dcexpr & \Coloneqq \dcterm \mid \dccoterm \mid \dcstat
  \end{align*}
  \caption{The syntax of the dual calculus.}
  \label{fig:syntax_DC_short}
  \end{figure}

The calculus $\cbvBLCeq$ is $\cbvSubDCeq$ which is a sub-calculus of
an extension with the but-not connective of the call-by-value dual
calculus by Wadler~\cite{Wadler05}.
Types, terms, coterms, statements, and syntactical objects are shown
in Figure~\ref{fig:syntax_DC_short}.  A key difference from BLC is
that the dual calculus adopts $\neg$ as a primitive connective and
function types are syntactic sugar.
See Wadler's papers~\cite{Wadler05} or Appendix A for the details.
We can define a translation from $\cbvBLCeq$.  Consequently, the
consistency of our call-by-value calculus is obtained from the
consistency of the call-by-value dual calculus.
Specifically, we can obtain the following:
\begin{theorem}
  There exist translations $\fuga{-}$ from $\cbvBLCeq$ into
  $\cbvSubDCeq$ and $\gafu{-}$ from $\cbvSubDCeq$ into $\cbvBLCeq$,
  which satisfy:
  \begin{itemize}
  \item $D_0 \cbveq D_1$ implies $\fuga{D_0} \dcveq \fuga{D_1}$,
  \item $O_0 \dcveq O_1$ implies $\gafu{O_0} \cbveq \gafu{O_1}$,
  \item $\gafu{\fuga{D}} \cbveq D$ holds, and
  \item $\fuga{\gafu{O}} \dcveq O$ holds.
  \end{itemize}
  where $\dcveq$ is the equality relation of $\cbvSubDCeq$.
\end{theorem}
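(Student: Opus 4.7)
The plan is to define $\fuga{-}$ and $\gafu{-}$ by simultaneous structural induction on types and syntactic objects, matching each category of $\cbvBLCeq$ with its corresponding category in $\cbvSubDCeq$: types to types, expressions to terms, continuations to coterms, and commands to statements. Because $\cbvSubDCeq$ restricts the dual calculus to the fragment built from $\to$ and $\gets$ (omitting primitive $\neg$), each BLC type constructor maps to its direct analog and vice versa. The syntactic constructions pair up naturally: $\bracket{\expr}{\cont}$ maps to $\dccut{\fuga{\expr}}{\fuga{\cont}}$, the command binder $\muabs{\cvar}{\neutral}$ to $\dcabsr{\cvar}{\fuga{\neutral}}$, its dual $\muabs{\evar}{\neutral}$ to $\dcabsl{\evar}{\fuga{\neutral}}$, and the pair/projection constructors on expressions and continuations to right-pairs/left-pairs and to $\dcfst{-}$, $\dcsnd{-}$ respectively. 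Applications and $\lambda$-abstractions for both $\to$ and $\gets$ map to their DC analogs as specified in Appendix~A. The inverse $\gafu{-}$ is defined symmetrically, using the fact that every DC constructor has a BLC counterpart in the restricted language.

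With the translations in place, I would first prove type preservation by induction on derivations, which is routine given how the constructors are paired up. The heart of the theorem is equational soundness. For $D_0 \cbveq D_1 \Rightarrow \fuga{D_0} \dcveq \fuga{D_1}$, I would verify that each axiom in Figure~\ref{fig:cbvbieq} translates to a derivable $\dcveq$-equation and that $\dcveq$ is closed under the translations of evaluation contexts. The $\beta$-laws for $\to$, $\gets$, and pairs, the $\eta$-rules, and the $\mu\beta$ rules $\bracket{\eValue}{\muabs{\evar}{\neutral}} \cbveq [\eValue/\evar]\neutral$ and its dual all correspond directly to CBV axioms of the extended dual calculus. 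The context-commuting conversion $\bracket{\eEval{\expr}}{\cont} \cbveq \bracket{\expr}{\muabs{\evar}{\bracket{\eEval{\evar}}{\cont}}}$ is handled by induction on the shape of $\eEval$, invoking the analogous CBV rewriting of $\cbvSubDCeq$. The direction $O_0 \dcveq O_1 \Rightarrow \gafu{O_0} \cbveq \gafu{O_1}$ is treated dually, checking each DC axiom against its $\gafu{-}$-image.

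The round-trip identities $\gafu{\fuga{D}} \cbveq D$ and $\fuga{\gafu{O}} \dcveq O$ are proved by structural induction on $D$ and $O$. The base cases are immediate since variables and constants are fixed by the composition. In the inductive cases the composition yields either the original constructor or an $\eta$-expanded form; the latter is collapsed using $\muabs{\cvar}{\bracket{\expr}{\cvar}} \cbveq \expr$ (with $\cvar \notin \fcv{\expr}$), its dual $\muabs{\evar}{\bracket{\evar}{\cont}} \cbveq \cont$, and the $\eta$-rules for pairs and $\lambda$, together with the corresponding $\eta$-laws on the DC side.

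The main obstacle will be the case analysis for the context-commuting conversion, since one must show that the $\fuga{-}$-image of each shape of $\eEval$ admits the analogous context decomposition under $\dcveq$, and symmetrically for $\gafu{-}$. A secondary subtlety is the bookkeeping of fresh variables introduced by the $\mu$-binders in both translations; this is handled uniformly by the Barendregt convention. Once these are in place, the four stated properties follow from the inductive framework described above.
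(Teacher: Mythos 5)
Your overall architecture (inductive translations, type preservation, case analysis on each equation, round\hyphenation{trip} induction collapsing $\eta$-expansions) matches the paper's proof in Appendix~\ref{sec:cbv_eq}. But there is a genuine gap in the step you treat as routine: the definition of the translations themselves. You claim the constructors ``pair up naturally,'' mapping BLC's projections to $\dcfst{-}$, $\dcsnd{-}$ and applications to their ``DC analogs.'' This does not typecheck, because BLC is natural-deduction style while the dual calculus is sequent-calculus style: in BLC, $\fst{\expr}$ and $\app{\expr_0}{\expr_1}$ are \emph{expression} formers (eliminations yielding terms), whereas in the dual calculus $\dcfst{\dccoterm}$ and $\dcappl{\dcterm}{\dccoterm}$ are \emph{coterm} formers (left rules). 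A direct constructor-to-constructor map sends a term to a coterm and cannot be the translation. The paper's $\fuga{-}$ resolves this by wrapping every ND-style elimination in a $\mu$-binder over a cut, e.g.\ $\fuga{\fst{\expr}} \equiv \dcabsr{\dccovar}{\dccut{\fuga{\expr}}{\dcfst{\dccovar}}}$, $\fuga{\app{\expr_0}{\expr_1}} \equiv \dcabsr{\dccovar}{\dccut{\fuga{\expr_0}}{(\dcappl{\fuga{\expr_1}}{\dccovar})}}$, and dually $\fuga{\fst{\cont}} \equiv \dcabsl{\dcvar}{\dccut{\dcinl{\dcvar}}{\fuga{\cont}}}$ (note that BLC's continuation projection on $A_0 \vee A_1$ has no coterm analog at all and must be simulated via $\dcinl{-}$). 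This wrapping is the actual content of the translation --- the paper describes it as mapping natural-deduction judgments to sequents --- and without it the value-preservation lemma, the substitution lemma, and every equation case are not even statable.

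Two further ingredients you elide. First, $\cbvSubDCeq$ is not simply ``the dual calculus without $\neg$'': the paper must first prove that $\to$ and $\gets$ are \emph{definable} in the dual calculus via $\negop{(\dctype_0\land\negop{\dctype_1})}$ and $\dctype_0\land\negop{\dctype_1}$, and must derive their $\beta$, $\eta$, and $\zeta$ laws (Propositions~\ref{prop:dcimp} and \ref{prop:dcpmi}); these derivations are several lines of nontrivial equational reasoning and are presupposed by your case analysis. Second, for the $\zeta$-rule case you need more than ``induction on the shape of $\eEvalSymb$'': because the image of an evaluation context is not itself a dual-calculus evaluation context, the paper introduces a coterm-indexed translation $\fuga{\eEvalSymb}_{\dccoterm}$ of contexts and proves the auxiliary equation $\dccut{\fuga{\eEval{\expr}}}{\dccoterm} \dcveq \dccut{\fuga{\expr}}{\dcabsl{\dcvar}{\fuga{\eEvalSymb}_{\dccoterm}\{\dcvar\}}}$ before the case analysis can go through. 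These are the pieces your proposal would have to supply to become a proof.
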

\begin{proof}
  See Appendix~\ref{sec:cbv_eq}.
\end{proof}

The theorem reasons about the call-by-value variant of BLC 
via the call-by-value dual calculus. 
Furthermore, the theorem shows that the but-not type $A\gets B$
in the call-by-value dual calculus is considered as
the function type for continuations. 
The theorem also reveals the difference between the dual calculus, 
whose negation type $\negop{A}$ is not involutive, and BLC, 
whose polarities $\plusop{A}$ and $\minusop{A}$ are involutive.

The negation type of the dual calculus can appear anywhere in a
type.  The negation type enables encoding of a coterm, say $\dccoterm$,
of type $\dctype$ to a term $\dcnotr{\dccoterm}$ of type
$\negop{\dctype}$, and handling of the encoded coterms as a part of
terms.
For instance, $\dcnotr{\dcpairl{K_1}{K_2}}$ of type $\negop{(\dctype_1 \vee \dctype_2)}$ is a term
which encodes the pair of coterms $K_1$ and $K_2$,
and functions, such as 
$\lambda\dcvar_2.\dcnotr{\dcabsl{\dcvar_1}{\dccut{\dcvar_2}{\dcnotl{\dcinl{\dcvar_1}}}}}$
of type $\negop{(\dctype_1\vee\dctype_2)} \to \negop{\dctype_1}$ that handles such terms,
are definable in the dual calculus.
The expressive power of BLC is strictly weaker than the dual calculus, 
since BLC does not permit defining such functions.
The theorem also raises a question whether 
BLC offers an adequate theoretical framework for expressing practical control operators. 
We conjecture that the polarities of BLC are enough for this purpose.
This is future work.


\section{Justifying the Duality of Functions}\label{sec:neutral}


In this section, we reason about the duality of functions in
Filinski's symmetric $\lambda$-calculus using the bilateral
$\lambda$-calculus.

\subsection{Filinski's Symmetric Lambda-Calculus}

A function of the type $A_0 \to A_1$ from expressions of the type
$A_0$ to expressions of the type $A_1$ can be regarded as a function
from continuations of the type $A_1$ to continuations of the type
$A_0$, and vice versa.  This property of functions is called the
duality of functions.

Filinski adopted the duality as a principle and constructed a
symmetric
$\lambda$-calculus~\cite{Filinski:89:DeclarativeContinuations:CTCS,Filinski:mthesis}. The
symmetric $\lambda$-calculus consists of functions $\func$,
expressions $\expr$, and continuations $\cont$. Functions consist of
$\lambda$-abstractions of expressions, decodings of expressions,
$\lambda$-abstractions of continuations, and decodings of
continuations as follows:
\begin{alignat*}{2}
  &
  \mbox{(functions)} & \qquad
\func^{A_0}_{A_1}
& \Coloneqq \Fielam{\eVar^{A_0}}{\expr_{A_1}}
\mid \edecode{\expr_{[A_0 \to A_1]}}
\mid \Ficlam{\cVar_{A_1}}{\cont^{A_0}}
\mid \cdecode{\cont^{[A_1 \gets A_0]}}
\enspace .
\end{alignat*}

Let $A_0 \to A_1$ be a function type. Filinski defined a function type
$[A_0 \to A_1]$ for an expression, which denotes an exponential object
${A_1}^{A_0}$ in categorical semantics, where $A_0$ and $A_1$ are
objects that correspond to types $A_0$ and $A_1$. We note that $A_2
\times A_0 \to A_1$ is bijective to $A_2 \to {A_1}^{A_0}$ in
categorical semantics.
Similarly, Filinski defined a function type $[A_1 \gets A_0]$ for a
continuation, which denotes a coexponential object ${A_0}_{A_1}$, and
$A_0 \to A_2 + A_1$ is bijective to ${A_0}_{A_1} \to A_2$.

Expressions and continuations consist of constants, variables,
applications of functions, and encodings of functions as follows:
\begin{alignat*}{3}
  & \mbox{(expressions)} &
  \expr_{o} & \Coloneqq
  \cst_{o}
  \mid \evar_{o}
  \mid \app{\func^{A}_{o}}{\expr_{A}} & \quad
  \expr_{[A_0 \to A_1]} & \Coloneqq
  \evar_{[A_0 \to A_1]}
  \mid \app{\func^{A}_{[A_0 \to A_1]}}{\expr_{A}}
  \mid \eencode{\func^{A_0}_{A_1}}\\
  & \mbox{(continuations)} & \;\;
  \cont^o & \Coloneqq
  \bullet^o
  \mid \cvar^o
  \mid \app{\func^{o}_A}{\cont^A}&
  \cont^{[A_1 \gets A_0]} & \Coloneqq
  \cvar^{[A_1 \gets A_0]}
  \mid \app{\func^{[A_1 \gets A_0]}_{A}}{\cont^{A}}
  \mid \cencode{\func^{A_0}_{A_1}}
  \enspace .
\end{alignat*}

We note that the encodings and decodings are defined to be
\emph{primitive} operators because the duality is adopted as a
principle.

We explain commands in Filinski's symmetric $\lambda$-calculus, which
is a triple called a \emph{configuration}:
\begin{center}
  \AxiomC{$\vdash \expr \colon \plusop{A_0}$}
  \AxiomC{$\vdash \func \colon A_0 \to A_1$}
  \AxiomC{$\vdash \cont \colon \negop{A_1}$}
  \TrinaryInfC{$\vdash \uatriple{\expr}{\func}{\cont}$}
  \DisplayProof
\end{center}
for the symmetric $\lambda$-calculus where $\vdash \expr \colon
\plusop{A_0}$ and $\vdash \cont \colon \negop{A_1}$ for expression $\expr$
of type $A_0$ and continuation $\cont$ of type $A_1$, respectively.
The notation was introduced by Ueda and Asai~\cite{uadiku}.
A difference from commands in the bilateral $\lambda$-calculus is that configurations are not pairs consisting of
expressions and continuations, but triples.
Another difference is that any configuration cannot be
abstracted by expression or continuation variables.
One other difference is that the continuation types are represented
using the negation connective in Filinski's calculus.


We can see that the configuration notion is also based on the duality principle.
If $\func$ is regarded as a function from expressions of the type
$A_0$ to expressions of the type $A_1$, then $\func$ is applied to
$\expr$ and an expression of the type $A_1$ that is consistent with
$\cont$ of the type $A_1$ is generated. Similarly, if $\func$ is
regarded as a function from continuations of the type $A_1$ to
continuations of the type $A_0$, then $\func$ is applied to $\cont$ and
a continuation of the type $A_0$ that is consistent with $\expr$ of the
type $A_0$ is generated. The configuration notion includes both
cases.

\subsection{Mutual Transformations between Functions}

Let us see how the duality occurs in the bilateral
$\lambda$-calculus.  The bilateral $\lambda$-calculus does not permit
anything neutral that is neither expression nor continuation.
Even if we want to define a neutral function, we must decide whether
the type of the function is either $\plusop{A_0 \to A_1}$ or
$\minusop{A_0 \gets A_1}$.
If we define a function between expressions which is applied to a
continuation, then the function cannot be \emph{as-is} applied to the
continuation, and vice versa.

However, we can define encodings $\cencode{\expr} =
\lamabs{\cvar}{\muabs{\evar}{\bracket{\app{\expr}{\evar}}{\cvar}}}$
and $\eencode{\cont} =
\lamabs{\evar}{\muabs{\cvar}{\bracket{\evar}{\app{\cont}{\cvar}}}}$ to
continuations and expressions in the bilateral $\lambda$-calculus,
respectively, and the encodings are mutual transformations as follows:
\begin{theorem}\label{thm:trans}
The following inferences are derivable:
\begin{center}
  \AxiomC{$\plusj{\varGamma}{\expr}{A_0 \to A_1}$}
  \UnaryInfC{$\minusj{\varGamma}{\cencode{\expr}}{A_0 \gets A_1}$}
  \DisplayProof
  \qquad\qquad
  \AxiomC{$\minusj{\varGamma}{\cont}{A_0 \gets A_1}$}
  \UnaryInfC{$\plusj{\varGamma}{\eencode{\cont}}{A_0 \to A_1}$}
  \DisplayProof
  \enspace .
\end{center}
\end{theorem}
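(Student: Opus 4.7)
The plan is to construct the required derivation trees directly from the typing rules of $\BLC$, since both $\cencode{\expr}$ and $\eencode{\cont}$ are built from a $\lambda$-abstraction wrapped around a $\mu$-abstraction wrapped around a command. The structure of each encoding matches a fixed sequence of introduction rules, so the derivations are essentially forced.

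For the first inference, I would build the derivation of $\minusj{\varGamma}{\lamabs{\cvar}{\muabs{\evar}{\bracket{\app{\expr}{\evar}}{\cvar}}}}{A_0 \gets A_1}$ from the outside in. The outermost $\lambda$-abstraction over the continuation variable $\cvar$ calls for an application of $\RuleLabel{-}{\gets}{I}{}$, so the goal reduces to $\minusj{\varGamma, \cvar \colon A_1}{\muabs{\evar}{\bracket{\app{\expr}{\evar}}{\cvar}}}{A_0}$. An application of $\Reductio{-}$ then leaves the command-level obligation $\zeroj{\varGamma, \cvar \colon A_1, \evar \colon A_0}{\bracket{\app{\expr}{\evar}}{\cvar}}$. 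By $\Noncontradict$, this splits into $\plusj{\varGamma, \cvar \colon A_1, \evar \colon A_0}{\app{\expr}{\evar}}{A_1}$ and $\minusj{\varGamma, \cvar \colon A_1, \evar \colon A_0}{\cvar}{A_1}$; the right-hand judgment is immediate from $\Identity{-}$, and the left-hand judgment follows by $\RuleLabel{+}{\to}{E}{}$ applied to a weakening of the hypothesis $\plusj{\varGamma}{\expr}{A_0 \to A_1}$ together with $\plusj{\varGamma, \cvar \colon A_1, \evar \colon A_0}{\evar}{A_0}$ obtained by $\Identity{+}$.

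The second inference is entirely symmetric. Starting from $\minusj{\varGamma}{\cont}{A_0 \gets A_1}$, I would apply $\RuleLabel{+}{\to}{I}{}$ outermost, then $\Reductio{+}$, then $\Noncontradict$, reducing to $\plusj{\varGamma, \evar \colon A_0, \cvar \colon A_1}{\evar}{A_0}$, handled by $\Identity{+}$, and $\minusj{\varGamma, \evar \colon A_0, \cvar \colon A_1}{\app{\cont}{\cvar}}{A_0}$, handled by $\RuleLabel{-}{\gets}{E}{}$ applied to a weakening of the hypothesis and $\Identity{-}$ on $\cvar$.

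There is no real obstacle in this proof: each step of each derivation is forced by the outermost syntactic constructor of the encoded term or continuation, and the only side condition I need is the weakening principle established just above the statement. The exchange between expression and continuation contexts is taken care of automatically by the way $\varGamma = \varPi ; \varSigma$ is carried along, so I do not need any nontrivial manipulation of type environments.
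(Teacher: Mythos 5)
Your proposal is correct and constructs exactly the derivations given in the paper's proof of Theorem~\ref{thm:trans}: the same sequence of rules $\RuleLabel{-}{\gets}{I}{}$, $\Reductio{-}$, $\Noncontradict$, $\RuleLabel{+}{\to}{E}{}$, and the identity rules for the first inference (and the symmetric sequence for the second), merely read top-down from the conclusion rather than bottom-up from the leaves. The only cosmetic difference is that you make the appeal to weakening explicit where the paper silently places the hypothesis in the extended environment $\varPi'; \varSigma'$.
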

\begin{proof}
See Appendix~\ref{sec:proofs}.
\end{proof}

The mutual transformations enjoy the following property:
\begin{theorem}\label{thm:mutualreduction}
  \begin{enumerate}
  \item $\bracket{\app{\eencode{\cont_0}}{\eValue}}{\cont_1} \cbveq
    \bracket{\eValue}{\app{\cont_0}{\cont_1}}$ holds,
  \item $\bracket{\eValue}{\app{\cencode{\expr}}{\cont}} \cbveq
    \bracket{\app{\expr}{\eValue}}{\cont}$ holds,
  \item $\bracket{\app{\eencode{\cencode{\expr}}}{\eValue}}{\cont} \cbveq
    \bracket{\app{\expr}{\eValue}}{\cont}$ holds, and
  \item $\bracket{\eValue}{\app{\cencode{\eencode{\cont_0}}}{\cont_1}}
    \cbveq \bracket{\eValue}{\app{\cont_0}{\cont_1}}$ holds.
  \end{enumerate}
\end{theorem}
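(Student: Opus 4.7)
The plan is to verify each of the four equalities by direct computation in $\cbvBLCeq$, unfolding the definitions of $\eencode{-}$ and $\cencode{-}$ given just before the theorem and then applying the equations of Figure~\ref{fig:cbvbieq}. There is no need for induction; each item is a short chain of rewrites, and items (3) and (4) reduce to items (2) and (1) respectively after one layer of unfolding.

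First I would prove (1). Unfolding gives $\app{\eencode{\cont_0}}{\eValue} \equiv \app{(\lamabs{\evar}{\muabs{\cvar}{\bracket{\evar}{\app{\cont_0}{\cvar}}}})}{\eValue}$. Since $\eValue$ is a value, the expression $\beta$-equation $\app{(\lamabs{\evar}{\expr})}{\eValue} \cbveq [\eValue/\evar]\expr$ rewrites this to $\muabs{\cvar}{\bracket{\eValue}{\app{\cont_0}{\cvar}}}$ (choosing the bound $\cvar$ fresh with respect to $\cont_1$). Forming the command with $\cont_1$ and applying $\bracket{\muabs{\cvar}{\neutral}}{\cont} \cbveq [\cont/\cvar]\neutral$ gives $\bracket{\eValue}{\app{\cont_0}{\cont_1}}$, which is (1). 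Item (2) is entirely symmetric: unfold $\cencode{\expr}$, use the continuation $\beta$-equation $\app{(\lamabs{\cvar}{\cont_0})}{\cont_1} \cbveq [\cont_1/\cvar]\cont_0$ to obtain $\muabs{\evar}{\bracket{\app{\expr}{\evar}}{\cont}}$, and then use $\bracket{\eValue}{\muabs{\evar}{\neutral}} \cbveq [\eValue/\evar]\neutral$ to reach $\bracket{\app{\expr}{\eValue}}{\cont}$.

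For (3), I would begin with the same two-step reduction as in (1), treating $\cencode{\expr}$ as the ``$\cont_0$'' role: this rewrites $\bracket{\app{\eencode{\cencode{\expr}}}{\eValue}}{\cont}$ to $\bracket{\eValue}{\app{\cencode{\expr}}{\cont}}$, and then (2) finishes the proof. Item (4) is handled dually: start with the same two-step reduction as in (2), treating $\eencode{\cont_0}$ as the ``$\expr$'' role, to rewrite $\bracket{\eValue}{\app{\cencode{\eencode{\cont_0}}}{\cont_1}}$ into $\bracket{\app{\eencode{\cont_0}}{\eValue}}{\cont_1}$, and then invoke (1).

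I do not expect any genuine obstacle. The only subtlety is the usual side condition on bound variables: when unfolding $\eencode{\cont_0}$ or $\cencode{\expr}$ one must α-rename the bound $\evar,\cvar$ to avoid clashes with $\fcv{\cont_1}$, $\fcv{\cont}$, or $\fev{\eValue}$, which is permitted by the capture-avoiding substitution convention already fixed for $\BLC$. Everything else is a straightforward application of the four equations $\beta_\to$, $\beta_\gets$, $\mu^+$, and $\mu^-$ displayed in Figure~\ref{fig:cbvbieq}.
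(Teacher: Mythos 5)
Your proposal is correct and follows essentially the same route as the paper: items (1) and (2) by unfolding the encodings and applying the $\beta$-equation followed by the $\mu$-substitution equation from Figure~\ref{fig:cbvbieq}, and items (3) and (4) by instantiating (1) and (2). The paper states the latter step more tersely, but the argument is the same.
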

\begin{proof}
  The first and second statements hold immediately from the definition
  of $\vred$, $\eencode{\cont}$, and $\cencode{\expr}$ as follows:
  \begin{align*}
    \bracket{\app{\eencode{\cont_0}}{\eValue}}{\cont_1}
    & \equiv \bracket{\app{(\lamabs{\evar}{\muabs{\cvar}{\bracket{\evar}{\app{\cont_0}{\cvar}}}})}{\eValue}}{\cont_1} 
    \cbveq \bracket{\muabs{\cvar}{\bracket{\eValue}{\app{\cont_0}{\cvar}}}}{\cont_1}
    \cbveq \bracket{\eValue}{\app{\cont_0}{\cont_1}}\\
    \bracket{\eValue}{\app{\cencode{\expr}}{\cont}}
    & \equiv \bracket{\eValue}{\app{(\lamabs{\cvar}{\muabs{\evar}{\bracket{\app{\expr}{\evar}}{\cvar}}})}{\cont}} 
    \cbveq \bracket{\eValue}{\muabs{\evar}{\bracket{\app{\expr}{\evar}}{\cont}}}
    \cbveq \bracket{\app{\expr}{\eValue}}{\cont} \enspace .
  \end{align*}
  The third and fourth statements hold from the the first
  and second statements.
\end{proof}

Theorems~\ref{thm:trans} and \ref{thm:mutualreduction} ensure
that we can always \emph{recover} to define functions between
expressions (and continuations) from functions between continuations
(resp.\ expressions) using the mutual transformations.
Thus, we confirm that the duality of functions is derived from
definability of the mutual transformations in the bilateral
$\lambda$-calculus.

\subsection{Dual Proofs for Functions}

We also provide an alternative justification of the duality using
derivation trees with proofs in their nodes introduced in
Section~\ref{sec:curry}.

We define a \emph{polarization}, which is a function from proof
variables and proof constants to expression or continuation variables
with types and expression or continuation constants, respectively. A
polarization for proofs is defined by
\begin{align*}
  p(\lamabs{\dvar}{\dobj}) & = \lamabs{p(\dvar)}{p(\dobj)} &
  p(\app{\dobj_0}{\dobj_1}) & = \app{p(\dobj_0)}{p(\dobj_1)} &
  p(\pair{\dobj_0}{\dobj_1}) & =   \pair{p(\dobj_0)}{p(\dobj_1)}\\
  p(\fst{\dobj}) & = \fst{p(\dobj)} &
  p(\snd{\dobj}) & = \snd{p(\dobj)} &
  p(\muabs{\dvar}{\dcom}) & = \muabs{p(\dvar)}{p(\dcom)}\\
  p(\bracket{\dobj_0}{\dobj_1}) & = \bracket{p(\dobj_0)}{p(\dobj_1)}
  \enspace .
\end{align*}

Let $V$ be a set of proof variables. We define $p(V)$ as the concatenation
of the positive type environments and the negative type environments
of $V$ by $p$.

Polarizations $p$ and $p'$ are equivalent if
\begin{itemize}
\item for any proof variable $\dvar$, $p(\dvar)$ and $p'(\dvar)$ have
  the same polarity, and
\item for any proof variables $\dvar$ and $\dvar'$, $p(\dvar) \equiv
  p(\dvar')$ implies $p'(\dvar) \equiv p'(\dvar')$, vice versa.
\end{itemize}

\begin{proposition}\label{prop:polequiv}
  Assume that $p$ and $p'$ are equivalent. Then,
  \begin{enumerate}
  \item $\plusj{p(V)}{p(\dobj)}{A}$ implies
    $\plusj{p'(V)}{p'(\dobj)}{A}$
  \item $\minusj{p(V)}{p(\dobj)}{A}$ implies
    $\minusj{p'(V)}{p'(\dobj)}{A}$, and
  \item $\zeroj{p(V)}{p(\dobj)}$ implies $\zeroj{p'(V)}{p'(\dobj)}$.
  \end{enumerate}
\end{proposition}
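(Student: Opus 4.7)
The plan is to prove all three parts simultaneously by induction on the structure of the proof $\dobj$ (or command $\dcom$), or equivalently on the derivation of the given $\BLC$ typing judgment. The two equivalence conditions on $p$ and $p'$ are exactly what is needed to transport derivations between the two polarizations: the first guarantees that $p(\dvar)$ and $p'(\dvar)$ are both expression variables or both continuation variables of the same type, so the split of $V$ into positive and negative parts coincides in $p(V)$ and $p'(V)$; the second guarantees that $p$ and $p'$ induce the same pattern of identifications on $V$, so the type environments $p(V)$ and $p'(V)$ are structurally the same up to renaming.

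For the base cases, constants are mapped by any polarization to constants of the fixed polarity, so the corresponding constant rule applies uniformly under $p$ and under $p'$. For a proof variable $\dvar$, the assumed derivation ends in $\Identity{+}$ or $\Identity{-}$ on $p(\dvar)$; by equivalence, $p'(\dvar)$ has the same polarity and the same type at the corresponding position in $p'(V)$, so the same identity rule applies.

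For the inductive cases with no binder, namely applications, pairings, projections, and the command constructor $\bracket{-}{-}$, it suffices to apply the induction hypothesis to each immediate subproof and reapply the same typing rule, since polarization commutes with these constructors by the defining clauses of $p$. The binder cases $\lamabs{\dvar}{\dobj'}$ and $\muabs{\dvar}{\dcom}$ require extending the variable set to $V \cup \{\dvar\}$; by the Barendregt convention we may assume $\dvar$ is fresh for $V$, and the restrictions of $p$ and $p'$ to $V \cup \{\dvar\}$ remain equivalent, so the induction hypothesis yields the typing of the body under the extended environment, after which $\RuleLabel{+}{\to}{I}{}$, $\RuleLabel{-}{\gets}{I}{}$, $\Reductio{+}$, or $\Reductio{-}$ delivers the conclusion.

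The main subtlety lies in these binder cases, where one must verify that extending $p$ and $p'$ by a fresh bound proof variable preserves both equivalence conditions. Since $\dvar$ is fresh, no new collapses are introduced on either side, so the injectivity-pattern condition is inherited; the polarity of $\dvar$ is pinned by the introduction rule being used (positive for $\lamabs{\dvar}{\dobj'}$ typed as $A_0 \to A_1$, negative for $\lamabs{\dvar}{\cont'}$ typed as $A_0 \gets A_1$, and dually for $\mu$-abstractions), so the same-polarity condition holds automatically, and the induction goes through cleanly.
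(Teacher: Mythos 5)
Your induction on the derivation is correct and is exactly the routine argument the paper intends (the paper in fact states this proposition without any proof, and proves the neighbouring results ``by induction on derivation''); the base cases, the commutation of $p$ with the non-binding constructors, and the observation that equivalence is stable under passing to the bound variable in the $\lambda$- and $\mu$-cases are all that is needed. One small point: you read the first equivalence condition as guaranteeing that $p(\dvar)$ and $p'(\dvar)$ carry the same \emph{type} as well as the same polarity, which the paper's definition does not literally say but which is indispensable for the variable case, so your reading is the charitable and evidently intended one.
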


A polarization $p$ is a \emph{conjugate} of a polarization $p'$ if for
any variable $\dvar$, if $p(\dvar)$ is an expression variable, then
$p'(\dvar)$ is a continuation variable, and vice versa.

A derivation tree denoting a function has two proofs for
acceptance and rejection as follows:
\begin{theorem}
  \begin{enumerate}
  \item $\plusj{p(V)}{p(\dobj)}{A}$ implies that there exists a
    conjugate $p'$ of $p$ such that $\minusj{p'(V)}{p'(\dobj)}{A}$,
  \item $\minusj{p(V)}{p(\dobj)}{A}$ implies that there exists a
    conjugate $p'$ of $p$ such that $\plusj{p'(V)}{p'(\dobj)}{A}$, and
  \item $\zeroj{p(V)}{p(\dobj)}$ implies that there exists a conjugate
    $p'$ of $p$ such that $\zeroj{p'(V)}{p'(\dobj)}$.
  \end{enumerate}
\end{theorem}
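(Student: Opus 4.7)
The plan is to prove the three statements by simultaneous structural induction on the derivations, exploiting the mirror symmetry between the positive and negative fragments of the type system of $\BLC$ displayed in Figure~\ref{fig:bilateralcalculus}. I would first define a candidate conjugate $p'$ by flipping the polarity of $p$ at every proof variable while retaining the type assignment, and then proceed by cases on the last rule of the given derivation.

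The axioms $\Identity{+}$, $\Identity{-}$, $(\textrm{Constant}_+)$, and $(\textrm{Constant}_-)$ are dual in an evident way, so the polarity flip is immediate. Each positive connective rule pairs with a negative rule that manipulates a proof term of exactly the same shape: $\RuleLabel{+}{\to}{I}{}$ with $\RuleLabel{-}{\gets}{I}{}$, $\RuleLabel{+}{\to}{E}{}$ with $\RuleLabel{-}{\gets}{E}{}$, $\RuleLabel{+}{\wedge}{I}{}$ with $\RuleLabel{-}{\vee}{I}{}$, and the projections $\fst{\cdot}$, $\snd{\cdot}$ on $\wedge$ with those on $\vee$, invoking the derivability results of Proposition~\ref{prop:redundant} where needed. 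In each such case, the same syntactic proof $\dobj$ is routed through the mirror rule under $p'$, yielding the dual judgment from the inductive hypotheses applied to the immediate subderivations.

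The subtler cases are $\Reductio{+}$, $\Reductio{-}$, and $\Noncontradict$, which already cross polarities internally. For $\muabs{\dvar}{\dcom}$ typed via $\Reductio{+}$, the bound proof variable $\dvar$ appears as a continuation variable in the body's derivation of $\dcom$; under $p'$ it becomes an expression variable of the same type, and the dual $\Reductio{-}$ rule then produces $\muabs{\dvar}{\dcom}$ as the required continuation. The rule $\Noncontradict$ is self-dual: an expression/continuation pair under $p$ becomes a continuation/expression pair under $p'$, which again assembles a command via $\Noncontradict$. Across all binary rules I would use Proposition~\ref{prop:polequiv} to canonicalise the independently chosen conjugates so that they agree on shared free variables of the subterms.

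The main obstacle I anticipate is precisely this coherence step: each inductive hypothesis only yields the existence of \emph{some} conjugate for its own subderivation, and these must be merged into one globally coherent $p'$ before the mirror rule can be applied at the root. Proposition~\ref{prop:polequiv} is the instrument that legitimates the merge, so the bulk of the work is verifying in each case that the merged polarization is still a conjugate of $p$ and that the types line up correctly through the dual connective, consistent with the symmetry already witnessed in Proposition~\ref{prop:plusminus}.
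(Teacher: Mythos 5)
Your proposal is correct and follows essentially the same route as the paper: the paper's proof is exactly an induction on the derivation exploiting the mirror symmetry of the rules, with Proposition~\ref{prop:polequiv} invoked to ensure that the conjugates produced for different subderivations can be identified up to equivalence. The only superfluous element is your appeal to Proposition~\ref{prop:redundant}, which is not needed because the dual rules are already primitive in the type system of Figure~\ref{fig:bilateralcalculus}.
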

\begin{proof}
  By induction on derivation. We note that
  Proposition~\ref{prop:polequiv} ensures differences between
  equivalent polarizations can be ignored.
\end{proof}

\subsection{A Short Remark about the Two Justifications}
Careful readers might think that
\begin{itemize}
\item no distinction of expression variables and continuation
  variables in derivation trees with proofs is better, and
\item BLC, which distinguishes expressions and continuations and
  requires the mutual transformations, is unnecessarily delicate.
\end{itemize}
However, BLC and the mutual transformations have an advantage in cases
that functions and arguments have common variables.  For example, a
function
$\lamabs{\dvar_2}{\muabs{\dvar_1}{\bracket{\dvar_0}{\dvar_2}}}$ cannot
be applied to an argument $\dvar_0$ under any assumption because the
function and argument must have converse polarities to each other.
Because the mutual transformations, which have no variable, can
respectively transform functions between expressions and continuations
to those between continuations and expressions in BLC,
$\app{\eencode{\lamabs{\cvar_2}{\muabs{\evar_1}{\bracket{\evar_0}{\cvar_2}}}}}{\evar_0}$
of the type $\plusop{A \to A}$ can be applied to $\evar_0$ of the type
$\plusop{A}$ where
$\lamabs{\cvar_2}{\muabs{\evar_1}{\bracket{\evar_0}{\cvar_2}}}$ has
the type $\minusop{A \gets A}$.

\section{Related Work and Discussion}\label{sec:related}

In this section, we discuss related work from three viewpoints of
symmetric $\lambda$-calculi on the formulae-as-types and approaches in
structural proof theory.

\subsection{Symmetric Lambda-Calculi}

The first symmetric $\lambda$-calculus was proposed by
Filinski~\cite{Filinski:89:DeclarativeContinuations:CTCS,Filinski:mthesis}.
Filinski described functions between continuations as follows:
\textit{``We can therefore equivalently view a function $f \colon A
  \to B$ as a continuation accepting a pair consisting of an $A$-type
  value and a $B$-accepting continuation. Such a pair will be called
  the context of a function application, and its type written as $[B
    \gets A]$''}.  In our observation on bilateralism, his intuition
is not only computationally but also proof-theoretic semantically
reasonable.  The underlying idea in defining our calculus is that
Filinski's $[A_1 \gets A_0]$ is regarded as $\minusop{A_0 \gets A_1}$.
We elaborate his idea in proof-theoretic semantics and carefully use
Rumfitt's polarities and the but-not connective, instead of simply
using the negation connective as Filinski did.

A symmetric $\lambda$-calculus proposed by Barbanera and Berardi was invented to
extract programs from classical logic proofs. Their calculus contains the involutive negation $A^\bot$
for each type $A$ and has symmetric application similar to commands in BLC.
The essential difference between their calculus and BLC is {\it polarity},
that is, the polarized type $\minusop{(A\vee B)}$ in BLC corresponds to $(A\vee B)^\bot$,
which is identified with $A^\bot \wedge B^\bot$ in their calculus. 
This lack of polarity information makes it difficult to reason about functions of Filinski's calculus. 


A calculus which was proposed by Lovas and Crary is the only symmetric
$\lambda$-calculus that corresponds to classical logic in which
expressions and continuations are symmetric on the
bilateralism~\cite{lovas2006}. They defined $\lambda$-terms similar to
those of the dual calculus which was defined by
Wadler~\cite{journals/sigplan/Wadler03}, and did not analyze the
duality of functions in Filinski's symmetric $\lambda$-calculus.  They
also did not adopt the implication $\to$ but the negation connective
$\neg$ as a primitive type constructor. An expression of function type
$A_0 \to A_1$ has type $\negop{(A_0 \wedge \negop{A_1})}$. Therefore,
it is necessary to use an inference rule that corresponds to reductio
ad absurdum in classical logic \emph{just} to define
$\lambda$-abstractions and applications of expressions in the simply
typed $\lambda$-calculus, unlike ours. We also show that the negative
polarity is suitable for representing continuations rather than the
negation connective on the notion of bilateralism.

Ueda and Asai investigated Filinski's symmetric $\lambda$-calculus,
and provided an explicit definition of commands by writing $\negop{A}$
for a continuation type $A$~\cite{uadiku}. However, they did not attempt to reason
about the neutrality of functions in the symmetric $\lambda$-calculus.
Also, the use of the negation connective to represent continuations is not
reasonable as we have shown in the present paper. Actually, they
also used the negation connective at only the \emph{outermost}
position of formulae. This operator of formulae should not be the
negation connective but the negative polarity on bilateralism.

Curien and Herbelin's $\CH$-calculus~\cite{CurienICFP00} 
corresponds to Gentzen's sequent calculus LK as well as the dual calculus.
This symmetric infrastructure, namely the duality of LK, exhibits the duality
between continuations and programs. 
Its symmetricity corresponds to that of the polarities in BLC
and to that of types $A$ and $\negop{A}$ in Ueda and Asai's calculus.
The calculi based on LK naturally contain the (not involutive) negation type, 
which provides a more expressive power than BLC, as noted in Section~\ref{sec:cbvblc}.
This observation raises an interesting question: 
What is the role of the negation type in practical programming languages?

\subsection{Approaches in Structural Proof Theory}

%
%

Girard and Parigot constructed calculi corresponding to classical
logic~\cite{Girard1991,Parigotlpar92} and analyzed classical logic
proof-theoretically. Girard also invented linear logic~\cite{Gir87a},
which is very useful for analyzing classical logic. Danos et
al.\ confirmed that classical logic has well behaved fragments using
the positive and negative polarities~\cite{LKTLKQ,DJS1997}. The
calculi invented through their approaches are larger than or
incomparable to ours because their motivations are different from
ours.  A goal of our work is not to analyze classical logic but to
construct a minimal calculus to justify the duality of functions and
the computations that delimited continuations raise.  Although
analyzing negations is a topic of great interest in proof
theory~\cite{Nel:conf,Geach1965,McCall1967,Clark1978,Dosen1984,dunn:93,Dummett1996,Avronnegation,restall:introduction-to-substructural-logics,humberstone_il:2000a,journals/sLogica/DunnZ05},
we investigated the negation-free fragment of bilateral natural
deduction.

Dual intuitionistic logic, which is symmetric to intuitionistic logic,
is well known in structural proof
theory~\cite{Goodman,Urbas,journals/sLogica/Shramko05}. A combined
logic of intuitionistic and dual intuitionistic logics is classical
logic.  Whereas most of the logics are based on sequent calculi,
Wansing constructed a natural deduction that
can perform verification and falsification that corresponds to proving
$\plusop{A}$ and $\minusop{A}$, respectively, in our calculus~\cite{journals/logcom/Wansing16}.
However, a series of his works analyzed \emph{refutation}, which is a
proof for falsification in the context of studying various negations
as seen in structural proof
theory~\cite{conf/aiml/Wansing04,Wansingaiml,journals/logcom/Wansing16}. This
is different from the objective in the present paper.  He also neither
provided a $\lambda$-calculus based on bilateralism nor described
computational aspects, such as continuation controls.
Tranchini also constructed a natural deduction of dual intuitionistic
logic~\cite{journals/sLogica/Tranchini12}.  Our calculus seems to
correspond to a negation-free fragment of his natural deduction.


\section{Conclusion and Future Work}\label{sec:conclusion}

In this paper, we proposed a symmetric $\lambda$-calculus called the
bilateral $\lambda$-calculus with the but-not connective based on
bilateralism in proof-theoretic semantics.
The formulae-as-types notion was extended to consider Rumfitt's
reductio, which corresponds to reductio ad absurdum as a
$\mu$-abstraction of a first-class command in our calculus.  Its
call-by-value calculus can be defined as a sub-calculus of Wadler's
call-by-value dual calculus.
%
We showed that the duality of functions is derived from definability
of the mutual transformations between expressions and continuations in
the bilateral $\lambda$-calculus. We also showed that every typable
function has dual types.
%

In this paper, we have provided a method to justify a few notions in
the theory of $\lambda$-calculi on bilateralism.
The bilateral analysis in this paper targets the duality of functions
in Filinski's symmetric $\lambda$-calculus.
%
%
Bilateral analyses of asymmetric calculi constitute our future work.
%

The call-by-value variant of BLC corresponds to a sub-calculus of the
call-by-value dual calculus with the but-not connective. It is also
future work to clarify what practical uses are derived from the
difference between BLC and the dual calculus.




\newpage
\bibliographystyle{plainurl}
\bibliography{draft}

\newpage


\appendix

\section{The Call-by-Value Calculus of The Bilateral Lambda-Calculus}\label{sec:cbv_eq}

%
We introduce a call-by-value
strategy to BLC and define a computationally consistent call-by-value calculus,
which is equivalent to
a sub-calculus of the call-by-value dual calculus by Wadler~\cite{Wadler05} without negation.
Consequently, the consistency of our call-by-value calculus is obtained 
from the consistency of the call-by-value dual calculus. 


\subsection{The Call-by-Value Dual Calculus $\cbvSubDCeq$}

This subsection compares $\cbvBLCeq$ with dual calculus invented by Wadler~\cite{journals/sigplan/Wadler03,Wadler05}, 
which corresponds to the classical sequent calculus
on the notion of formulae-as-types.
The call-by-value calculus of the dual calculus is known as a well established and computationally consistent system
because it has the so-called \emph{CPS-semantics}~\cite{journals/sigplan/Wadler03}
and is equivalent to the call-by-value $\lambda\mu$-calculus~\cite{Wadler05}. 
We will show that our $\cbvBLCeq$ is equivalent to a sub-calculus of the call-by-value dual calculus
by giving an isomorphism between them. 

We first recall the dual calculus. 
Suppose that countable sets of type variables, term variables, and coterm variables are given. 
Let $\chi$, $\dcvar$, and $\dccovar$ range over type variables, term variables, and coterm variables, respectively.
Types, terms, coterms, statements, and syntactical objects are summarized in Figure~\ref{fig:syntax_DC}. 
Substitution $[\dcterm/\dcvar]\dcexpr$ of $\dcvar$ in an expression $\dcexpr$
for $\dcterm$ is defined in a standard component-wise and capture-avoiding
manner.
Similarly, substitution $[\dccoterm/\dccovar]\dcexpr$ is also defined.

  \begin{figure}[t]
  \begin{align*}
    & \mbox{(types)} &
    \dctype & \Coloneqq \chi \mid (\dctype \land \dctype) \mid (\dctype \vee \dctype) \mid (\negop{\dctype})
    \\
    & \mbox{(terms)} &
    \dcterm & \Coloneqq \dcvar \mid \dcpairr{\dcterm}{\dcterm} \mid \dcinl{\dcterm} \mid \dcinr{\dcterm} \mid \dcnotr{\dccoterm} \mid \dcabsr{\dccovar}{\dcstat}
    \\
    & \mbox{(coterms)} &
    \dccoterm & \Coloneqq \dccovar \mid \dcpairl{\dccoterm}{\dccoterm} \mid \dcfst{\dccoterm} \mid \dcsnd{\dccoterm} \mid \dcnotl{\dcterm} \mid \dcabsl{\dcvar}{\dcstat}
    \\
    & \mbox{(statements)} &
    \dcstat & \Coloneqq \dccut{\dcterm}{\dccoterm}
    \\
    & \mbox{(syntactical objects)} &
    \dcexpr & \Coloneqq \dcterm \mid \dccoterm \mid \dcstat
  \end{align*}
  \caption{The syntax of the dual calculus.}
  \label{fig:syntax_DC}
  \end{figure}

A judgment has the form of
$\varGamma \vdash \varDelta \mid \dcterm \colon \dctype$,
$\varGamma \mid \dcstat \vdash \varDelta$, or
$\dccoterm \colon \dctype \mid \varGamma \vdash \varDelta$,
where
$\varGamma$ is a type environment for terms that is a finite set of the form $\dcvar\colon\dctype$ and
$\varDelta$ is a type environment for coterms that is a finite set of the form $\dccovar\colon\dctype$.
Figure~\ref{fig:typing_DC} shows the inference rules.

  \begin{figure}[t]
    \begin{center}
      \AxiomC{
        $\varGamma \vdash \varDelta \mid \dcterm \colon \dctype$
      }
      \AxiomC{
        $\dccoterm \colon \dctype \mid \varGamma \vdash \varDelta$
      }
      \BinaryInfC{
        $\varGamma \mid \dccut{\dcterm}{\dccoterm} \vdash \varDelta$
      }
      \DisplayProof
      \\\vspace{\baselineskip}
      \AxiomC{}
      \UnaryInfC{
        $\varGamma,\dcvar \colon \dctype \vdash \varDelta \mid \dcvar \colon \dctype$
      }
      \DisplayProof
      \qquad 
      \AxiomC{}
      \UnaryInfC{
        $\dccovar \colon \dctype \mid \varGamma \vdash \varDelta, \dccovar \colon \dctype$
      }
      \DisplayProof
      \\[10pt]
      \AxiomC{
        $\varGamma \vdash \varDelta \mid \dcterm_1 \colon \dctype_1$
      }
      \AxiomC{
        $\varGamma \vdash \varDelta \mid \dcterm_2 \colon \dctype_2$
      }
      \BinaryInfC{
        $\varGamma \vdash \varDelta \mid \dcpairr{\dcterm_1}{\dcterm_2} \colon \dctype_1\land\dctype_2$
      }
      \DisplayProof
      \\\vspace{\baselineskip}
      \AxiomC{
        $\dccoterm \colon \dctype_1 \mid \varGamma \vdash \varDelta$
      }
      \UnaryInfC{
        $\dcfst{\dccoterm} \colon \dctype_1\land\dctype_2 \mid \varGamma \vdash \varDelta$
      }
      \DisplayProof
      \qquad
      \AxiomC{
        $\dccoterm \colon \dctype_2 \mid \varGamma \vdash \varDelta$
      }
      \UnaryInfC{
        $\dcsnd{\dccoterm} \colon \dctype_1\land\dctype_2 \mid \varGamma \vdash \varDelta$
      }
      \DisplayProof
      \\\vspace{\baselineskip}
      \AxiomC{
        $\varGamma \vdash \varDelta \mid \dcterm \colon \dctype_1$
      }
      \UnaryInfC{
        $\varGamma \vdash \varDelta \mid \dcinl{\dcterm} \colon \dctype_1\vee\dctype_2$
      }
      \DisplayProof
      \qquad
      \AxiomC{
        $\varGamma \vdash \varDelta \mid \dcterm \colon \dctype_2$
      }
      \UnaryInfC{
        $\varGamma \vdash \varDelta \mid \dcinr{\dcterm} \colon \dctype_1\vee\dctype_2$
      }
      \DisplayProof
      \\\vspace{\baselineskip}
      \AxiomC{
        $\dccoterm_1 \colon \dctype_1 \mid \varGamma \vdash \varDelta$
      }
      \AxiomC{
        $\dccoterm_2 \colon \dctype_2 \mid \varGamma \vdash \varDelta$
      }
      \BinaryInfC{
        $\dcpairl{\dccoterm_1}{\dccoterm_2} \colon \dctype_1\vee\dctype_2 \mid \varGamma \vdash \varDelta$
      }
      \DisplayProof
      \\\vspace{\baselineskip}
      \AxiomC{
        $\dccoterm \colon \dctype \mid \varGamma \vdash \varDelta$
      }
      \UnaryInfC{
        $\varGamma \vdash \varDelta \mid \dcnotr{\dccoterm} \colon \negop{\dctype}$
      }
      \DisplayProof
      \qquad
      \AxiomC{
        $\varGamma \vdash \varDelta \mid \dcterm \colon \dctype$
      }
      \UnaryInfC{
        $\dcnotl{\dcterm} \colon \negop{\dctype} \mid \varGamma \vdash \varDelta$
      }
      \DisplayProof
      \\\vspace{\baselineskip}
      \AxiomC{
        $\varGamma \mid \dcstat \vdash \varDelta, \dccovar \colon \dctype$
      }
      \UnaryInfC{
        $\varGamma \vdash \varDelta \mid \dcabsr{\dccovar}{\dcstat} \colon \dctype$
      }
      \DisplayProof
      \qquad
      \AxiomC{
        $\dcvar \colon \dctype, \varGamma \mid \dcstat \vdash \varDelta$
      }
      \UnaryInfC{
        $\dcabsl{\dcvar}{\dcstat} \colon \dctype \mid \varGamma \vdash \varDelta$
      }
      \DisplayProof
    \end{center}
    \caption{The inference rules of the dual calculus.}
    \label{fig:typing_DC}
  \end{figure}

  We then recall the call-by-value calculus of the dual calculus.
  The values and the call-by-value evaluation contexts are
  defined as follows:
  \begin{alignat*}{2}
    &\hbox{(values)}&
    \dcValue & \Coloneqq 
    \dcvar
    \mid \dcpairr{\dcValue}{\dcValue}
    \mid \dcinl{\dcValue}
    \mid \dcinr{\dcValue}
    \mid \dcnotr{\dccoterm}\\
    & & &
    \;\;\mid\; \dcabsr{\dccovar}{\dccut{\dcValue}{\dcfst{\dccovar}}}
    \mid \dcabsr{\dccovar}{\dccut{\dcValue}{\dcsnd{\dccovar}}}
    \\
    &\hbox{(contexts)}\quad&
    \dcEvalSymb & \Coloneqq 
    \{-\} 
    \mid \dcpairr{\dcEvalSymb}{\dcterm}
    \mid \dcpairr{\dcValue}{\dcEvalSymb}
    \mid \dcinl{\dcEvalSymb}
    \mid \dcinr{\dcEvalSymb}
  \end{alignat*}
  
  \begin{figure}[t]
    \begin{center}
      \begin{alignat*}{2}
        & (\beta\mathord\land_0)
        & \;\;
        \dccut{\dcpairr{\dcValue_0}{\dcValue_1}}{\dcfst{\dccoterm}}
        & \dcveq
        \dccut{\dcValue_0}{\dccoterm}
        \\
        & (\beta\land_1)
        &
        \dccut{\dcpairr{\dcValue_0}{\dcValue_1}}{\dcsnd{\dccoterm}}
        & \dcveq
        \dccut{\dcValue_1}{\dccoterm}
        \\
        & (\beta\vee_0)
        &
        \dccut{\dcinl{\dcValue}}{\dcpairl{\dccoterm_0}{\dccoterm_1}}
        & \dcveq
        \dccut{\dcValue}{\dccoterm_0}
        \\
        & (\beta\vee_1)
        &
        \dccut{\dcinr{\dcValue}}{\dcpairl{\dccoterm_0}{\dccoterm_1}}
        & \dcveq
        \dccut{\dcValue}{\dccoterm_1}
        \\
        & (\beta\neg)
        &
        \dccut{\dcnotr{\dccoterm}}{\dcnotl{\dcterm}}
        & \dcveq
        \dccut{\dcterm}{\dccoterm}
        \\
        & (\beta R)
        &
        \dccut{\dcValue}{\dcabsl{\dcvar}{\dcstat}}
        & \dcveq
        [\dcValue/\dcvar]\dcstat
        \\
        & (\beta L)
        &
        \dccut{\dcabsr{\dccovar}{\dcstat}}{\dccoterm}
        & \dcveq
        [\dccoterm/\dccovar]\dcstat
\\
        & (\eta R)
        & \;\;
        \dcterm
        & \dcveq
        \dcabsr{\dccovar}{\dccut{\dcterm}{\dccovar}}
         \tag*{if $\dccovar$ is fresh}
        \\
        & (\eta L)
        &
        \dccoterm
        & \dcveq
        \dcabsl{\dcvar}{\dccut{\dcvar}{\dccoterm}}
        \tag*{if $\dcvar$ is fresh}
        \\         
        & (\eta\land)
        &
        \dcValue
        & \dcveq
        \dcpairr{
          \dcabsr{\dccovar}{\dccut{\dcValue}{\dcfst{\dccovar}}}
        }{
          \dcabsr{\dccovar}{\dccut{\dcValue}{\dcsnd{\dccovar}}}
        } \tag*{if $\dccovar$ is fresh}
        \\
        & (\eta\vee)
        &
        \dccoterm
        & \dcveq
        \dcpairl{
          \dcabsl{\dcvar}{\dccut{\dcinl{\dcvar}}{\dccoterm}}
        }{
          \dcabsl{\dcvar}{\dccut{\dcinr{\dcvar}}{\dccoterm}}
        } \tag*{if $\dcvar$ is fresh}
        \\
        & (\eta\neg)
        &
        \dcValue
        & \dcveq
        \dcnotr{
          \dcabsl{\dcvar}{
            \dccut{\dcValue}{\dcnotl{\dcvar}}
          }
        } \tag*{if $\dcvar$ is fresh}
        \\
        & (\zeta) &
        \dccut{\dcEval{\dcterm}}{\dccoterm}
        & \dcveq
        \dccut{\dcterm}{
          \dcabsl{\dcvar}{\dccut{\dcEval{\dcvar}}{\dccoterm}}
        }
        \tag*{if $\dcvar$ is fresh}
      \end{alignat*}
    \end{center}
    \caption{The equations of the call-by-value dual calculus.}
    \label{fig:cbveq_DC}
  \end{figure}

  Figure~\ref{fig:cbveq_DC} presents the call-by-value equation $\dcveq$ of the dual calculus.
  In the call-by-value dual calculus, the implication type $\dctype_0\dcimp\dctype_1$ with its term $\lambda x.M$, coterm $\dcappl{\dcterm}{\dccoterm}$
  can be defined as
  \begin{align*}
    \dctype_0\dcimp\dctype_1
    &\equiv
    \negop{(\dctype_0\land\negop{\dctype_1})}
    &
    \lambda x.M
    &\equiv
    \dcnotr{\dcabsl{\dcvar'}{\dccut{\dcvar'}{\dcfst{\dcabsl{x}{\dccut{\dcvar'}{\dcsnd{\dcnotl{M}}}}}}}}
    \\
    \dcappl{\dcterm}{\dccoterm}
    &\equiv
    \dcnotl{\dcpairr{\dcterm}{\dcnotr{\dccoterm}}}
  \end{align*}
by using $\land$, $\vee$, and $\neg$ as follows:
\begin{proposition}\label{prop:dcimp}
  The following inferences are derivable:
  \begin{center}
    \AxiomC{
      $\varGamma, \dcvar \colon \dctype_0 \vdash \varDelta \mid \dcterm \colon \dctype_1$
    }
    \UnaryInfC{
      $\varGamma \vdash \varDelta \mid \lambda\dcvar.\dcterm \colon \dctype_0\dcimp\dctype_1$
    }
    \DisplayProof
    \qquad
    \AxiomC{
      $\varGamma \vdash \varDelta \mid M \colon \dctype_0$
    }
    \AxiomC{
      $K \colon \dctype_1 \mid \varGamma \vdash \varDelta$
    }    
    \BinaryInfC{
      $\dcappl{\dcterm}{\dccoterm} \colon \dctype_0\dcimp\dctype_1 \mid \varGamma \vdash \varDelta$
    }
    \DisplayProof
    \enspace .
  \end{center}
  Also, $\lambda x.M$ is a value and the following equations hold:
  \begin{alignat*}{2}
    & (\beta\mathord\dcimp)\qquad &
    \dccut{(\lambda x.M_0)}{(\dcappl{M_1}{K})}
    &\dcveq
    \dccut{M_1}{\dcabsl{x}{\dccut{M_0}{K}}}
    \\
    & (\eta\mathord\to) &
    \dcValue
    &\dcveq
    \lambda \dcvar.
    (\dcabsr{
      \dccovar
    }{
      \dccut{\dcValue}{(\dcappl{\dcvar}{\dccovar})}
    })
    \enspace .
  \end{alignat*}
\end{proposition}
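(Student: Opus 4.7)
The plan is to verify each of the four parts by unfolding the abbreviations $\lambda\dcvar.\dcterm$ and $\dcappl{\dcterm}{\dccoterm}$ and then applying the rules of Figures~\ref{fig:typing_DC} and~\ref{fig:cbveq_DC}. For the typing of $\lambda x.M$, I would derive the judgment bottom-up by successively applying the rules for $\dcnotr{\cdot}$, $\dcabsl{\dcvar'}{\cdot}$, a cut, $\dcfst{\cdot}$, $\dcabsl{x}{\cdot}$, another cut, $\dcsnd{\cdot}$, and finally $\dcnotl{\cdot}$ on the hypothesis for $M$; the intermediate type $A_0 \land \negop{A_1}$ threads through the derivation. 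For $\dcappl{M}{K}$, I would apply $\dcnotl{\cdot}$ to $\dcpairr{M}{\dcnotr{K}}$, which is itself typed by $\land$-introduction together with $\dcnotr{\cdot}$ applied to $K$. To see that $\lambda x.M$ is a value, I would observe that its outermost form is $\dcnotr{\dccoterm}$, matching a clause of the value grammar.

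For $(\beta\to)$, I would reduce the left-hand side stepwise. The outermost cut has the shape $\dccut{\dcnotr{\cdot}}{\dcnotl{\cdot}}$, so $(\beta\neg)$ rewrites $\dccut{(\lambda x.M_0)}{\dcappl{M_1}{K}}$ to $\dccut{\dcpairr{M_1}{\dcnotr{K}}}{\dccoterm'}$, where $\dccoterm'$ is the coterm inside the encoding of $\lambda x.M_0$. Since $M_1$ need not be a value, I would next invoke $(\zeta)$ with the evaluation context $\dcpairr{\{-\}}{\dcnotr{K}}$ to obtain $\dccut{M_1}{\dcabsl{y}{\dccut{\dcpairr{y}{\dcnotr{K}}}{\dccoterm'}}}$ for fresh $y$. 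Now $\dcpairr{y}{\dcnotr{K}}$ is a value, so I can apply $(\beta R)$, then $(\beta\land_0)$, then a second $(\beta R)$ that substitutes $y$ for $x$, then $(\beta\land_1)$, and finally $(\beta\neg)$ to reduce the inner statement to $\dccut{[y/x]M_0}{K}$. An alpha-renaming of $y$ back to $x$ then yields the right-hand side $\dccut{M_1}{\dcabsl{x}{\dccut{M_0}{K}}}$.

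For $(\eta\to)$, the plan is to normalize both sides to a common shape. Applying $(\eta\neg)$ to $\dcValue$ of type $\negop{(A_0 \land \negop{A_1})}$ rewrites it as $\dcnotr{\dcabsl{z}{\dccut{\dcValue}{\dcnotl{z}}}}$ for fresh $z$, and unfolding the right-hand side also produces an outer $\dcnotr{\cdot}$, so it suffices to prove the two enclosed coterms equal. To do this I would apply $(\eta\land)$ to the bound variable $z$ (a value of conjunctive type) to expose the pair $\dcpairr{V_0}{V_1}$ with $V_0 \equiv \dcabsr{\beta}{\dccut{z}{\dcfst{\beta}}}$ and $V_1 \equiv \dcabsr{\beta}{\dccut{z}{\dcsnd{\beta}}}$, and then collapse the right-hand coterm into the left-hand one using $(\beta\land_0)$, $(\beta\land_1)$, $(\beta R)$, $(\beta L)$, and $(\beta\neg)$. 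I expect this $(\eta\to)$ case to be the main obstacle, since orchestrating $(\eta\neg)$ together with $(\eta\land)$ and the various $(\beta)$ rules requires verifying at each step that the active side of every cut satisfies the value requirement of the rule being applied, and demands careful bookkeeping of freshness and alpha-conversion.
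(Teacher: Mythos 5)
Your proposal is correct and, for everything the paper actually proves, follows essentially the same route: the typing derivations and the value observation are immediate from the encodings, and your derivation of $(\beta\mathord\dcimp)$ --- $(\beta\neg)$, then $(\zeta)$ with the context $\dcpairr{\{-\}}{\dcnotr{K}}$, then $(\beta R)$, $(\beta\land_0)$, a second $(\beta R)$, $(\beta\land_1)$, $(\beta\neg)$ --- is the same computation as the paper's, except that the paper splits into cases according to whether $M_1$ is a value (using $(\zeta)$ only in the non-value case and finishing with $(\eta L)$ where you use the second $(\beta R)$), whereas you treat both cases uniformly; since $(\zeta)$ carries no value restriction on the plugged term and $\dcveq$ is a congruence, the uniform treatment is fine and slightly cleaner. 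The one substantive difference is $(\eta\mathord\to)$: the paper's written proof simply omits it, so your sketch is the only argument on the table. Your strategy (unfold, apply $(\eta\neg)$ to $\dcValue$, reduce to equality of the enclosed coterms, then expand the bound variable with $(\eta\land)$ and collapse) does work, but note that the collapsing phase cannot be completed with only the $\beta$-rules you list: after $(\eta\land)$ and $(\beta\land_1)$ you reach a statement of the shape $\dccut{V_1}{\dcnotl{N}}$ with $V_1 \equiv \dcabsr{\beta}{\dccut{z}{\dcsnd{\beta}}}$, where $(\beta L)$ merely undoes the preceding step and you go in circles; you need one further application of $(\eta\neg)$, this time to $V_1$ (rewriting it as $\dcnotr{\dcabsl{w}{\dccut{V_1}{\dcnotl{w}}}}$), so that $(\beta\neg)$ can flip the cut and bring $\dcValue$ to the left of it, after which $(\beta L)$ and the reverse directions of $(\eta\neg)$ and $(\eta\land)$ reassemble $\dccut{\dcValue}{\dcnotl{z}}$. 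With that extra step your outline closes, so this is an under-specification in the sketch rather than a wrong turn.
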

\begin{proof}
  The inference part is shown immediately by the definition of $\lambda x.M$ and $\dcappl{M}{K}$.
  The term $\lambda x.M$ is a value, since it has the form $\dcnotr{K}$.
  The equation $(\beta\mathord{\dcimp})$ is shown by the case analysis of $M_1$. 

  (a) If $M_1$ is not a value, then the claim is obtained by using $(\nu\wedge_0)$: 
  \begin{align*}
  & \dccut{(\lambda x.M_0)}{(\dcappl{M_1}{K})}\\
  &\equiv
  \dccut{
    \dcnotr{\dcabsl{\dcvar'}{\dccut{\dcvar'}{\dcfst{\dcabsl{x}{\dccut{\dcvar'}{\dcsnd{\dcnotl{M_0}}}}}}}}
  }{
    \dcnotl{\dcpairr{M_1}{\dcnotr{K}}}
  }
  \\
  &\dcveq
  \dccut{
    \dcpairr{M_1}{\dcnotr{K}}
  }{
    \dcabsl{\dcvar'}{\dccut{\dcvar'}{\dcfst{\dcabsl{x}{\dccut{\dcvar'}{\dcsnd{\dcnotl{M_0}}}}}}}
  }
  \\
  &\dcveq
  \dccut{M_1}{
    \dcabsl{\dcvar''}{
      \dccut{
        \dcpairr{\dcvar''}{\dcnotr{K}}
      }{
        \dcabsl{\dcvar'}{\dccut{\dcvar'}{\dcfst{\dcabsl{x}{\dccut{\dcvar'}{\dcsnd{\dcnotl{M_0}}}}}}}
      }
    }
  }
  \\
  &\dcveq
  \dccut{M_1}{
    \dcabsl{\dcvar''}{
      \dccut{\dcpairr{\dcvar''}{\dcnotr{K}}}{\dcfst{\dcabsl{x}{\dccut{\dcpairr{\dcvar''}{\dcnotr{K}}}{\dcsnd{\dcnotl{M_0}}}}}}
    }
  }
  \\
  &\dcveq
  \dccut{M_1}{
    \dcabsl{\dcvar''}{
      \dccut{\dcvar''}{\dcabsl{x}{\dccut{\dcnotr{K}}{\dcnotl{M_0}}}}
    }
  }
  \\
  &\dcveq
  \dccut{M_1}{
    \dcabsl{x}{\dccut{\dcnotr{K}}{\dcnotl{M_0}}}
  }
  \\
  &\dcveq
  \dccut{M_1}{
    \dcabsl{x}{\dccut{M_0}{K}}
  }
  \enspace.
  \end{align*}

  (b) If $M_1$ is a value (say $\dcValue$), then
  \begin{align*}
  & \dccut{(\lambda x.M_0)}{(\dcappl{\dcValue}{K})}\\
  &\equiv
  \dccut{
    \dcnotr{\dcabsl{\dcvar'}{\dccut{\dcvar'}{\dcfst{\dcabsl{x}{\dccut{\dcvar'}{\dcsnd{\dcnotl{M_0}}}}}}}}
  }{
    \dcnotl{\dcpairr{\dcValue}{\dcnotr{K}}}
  }
  \\
  &\dcveq
  \dccut{
    \dcpairr{\dcValue}{\dcnotr{K}}
  }{
    \dcabsl{\dcvar'}{\dccut{\dcvar'}{\dcfst{\dcabsl{x}{\dccut{\dcvar'}{\dcsnd{\dcnotl{M_0}}}}}}}
  }
  \\
  &\dcveq
  \dccut{
    \dcpairr{\dcValue}{\dcnotr{K}}
  }{
    \dcfst{\dcabsl{x}{\dccut{\dcpairr{\dcValue}{\dcnotr{K}}}{\dcsnd{\dcnotl{M_0}}}}}
  }
  \\
  &\dcveq
  \dccut{\dcValue}{
    \dcabsl{x}{
      \dccut{\dcnotr{K}}{\dcnotl{M_0}}
    }
  }
  \\
  &\dcveq
  \dccut{\dcValue}{
    \dcabsl{x}{
      \dccut{M_0}{K}
    }
  }
  \enspace.
  \end{align*}    
\end{proof}

  The but-not type $\dctype_0\dcpmi\dctype_1$ with its term $\dcappr{M}{K}$, coterm $\lambda\dccovar.K$ 
  are also defined as
  \begin{align*}
    \dctype_0\dcpmi\dctype_1
    &\equiv
    \dctype_0\land\negop{\dctype_1}\\
    \dcappr{\dcterm}{\dccoterm}
    &\equiv
    \dcpairr{\dcterm}{\dcnotr{\dccoterm}}
    &
    \lambda \alpha.K
    &\equiv
    \dcabsl{x}{\dccut{x}{\dcsnd{\dcnotl{\dcabsr{\alpha}{\dccut{x}{\dcfst{K}}}}}}}
    \enspace .
  \end{align*}

\begin{proposition}\label{prop:dcpmi}
  The following inferences are derivable:
  \begin{center}
    \AxiomC{
      $\varGamma \vdash \varDelta \mid M \colon \dctype_0$
    }
    \AxiomC{
      $K \colon \dctype_1 \mid \varGamma \vdash \varDelta$
    }    
    \BinaryInfC{
      $\varGamma \vdash \varDelta \mid \dcappr{\dcterm}{\dccoterm} \colon \dctype_0\dcpmi\dctype_1$
    }    
    \DisplayProof
    \qquad
    \AxiomC{
      $\dccoterm \colon \dctype_0 \mid \varGamma\vdash \varDelta, \dccovar \colon \dctype_1$
    }
    \UnaryInfC{
      $\lambda\dccovar.\dccoterm \colon \dctype_0\dcpmi\dctype_1 \mid \varGamma \vdash \varDelta$
    }
    \DisplayProof
    \enspace .
  \end{center}
  Also, $\dcappr{\dcValue}{K}$ is a value and the following equations hold:
  \begin{alignat*}{2}
    & (\beta\mathord\ot)\quad &
    \dccut{(\dcappr{\dcValue}{K_0})}{(\lambda \alpha.K_1)}
    &\dcveq
    \dccut{\dcabsr{\alpha}{\dccut{\dcValue}{K_1}}}{K_0}
    \\
    & (\eta\mathord\ot) &
    \dccoterm
    &\dcveq
    \lambda\dccovar.
    (\dcabsl{
      \dcvar
    }{
      \dccut{(\dcappl{\dccovar}{\dcvar})}{\dccoterm}
    })
    \\    
    & (\zeta\mathord\ot) &
    \dccut{(\dcappr{M}{K_0})}{K_1}
    &\dcveq
    \dccut{M}{\dcabsl{x}{\dccut{(\dcappr{x}{K_0})}{K_1}}}
    \tag*{if $x$ is fresh.}
  \end{alignat*}
\end{proposition}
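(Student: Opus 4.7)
The plan is to mirror the proof of Proposition~\ref{prop:dcimp}, since the but-not type $\dctype_0 \ot \dctype_1 \equiv \dctype_0 \land \negop{\dctype_1}$ plays a role dual to the implication type. First I would verify the two inference rules by straightforwardly unfolding the definitions. For $\dcappr{M}{K} \equiv \dcpairr{M}{\dcnotr{K}}$, the rule for $\dcnotr{(-)}$ promotes $K \colon \dctype_1$ to $\dcnotr{K} \colon \negop{\dctype_1}$, after which the pair-introduction rule for $\land$ yields $\dcpairr{M}{\dcnotr{K}} \colon \dctype_0 \land \negop{\dctype_1}$. For the coterm $\lambda\dccovar.\dccoterm$, I would introduce a fresh $\dcvar \colon \dctype_0 \land \negop{\dctype_1}$ via $\dcabsl{\dcvar}{-}$, and then build up the statement from $\dcfst{\dccoterm}$ (a coterm of type $\dctype_0 \land \negop{\dctype_1}$) cut against $\dcvar$, wrapped by $\dcabsr{\dccovar}{-}$ and $\dcnotl{-}$ to land a coterm of type $\negop{\dctype_1}$, then by $\dcsnd{-}$ to lift back to $\dctype_0 \land \negop{\dctype_1}$. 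The claim that $\dcappr{\dcValue}{K}$ is a value follows immediately, since $\dcnotr{K}$ is always a value and a pair of values is a value.

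For the equation $(\beta\mathord\ot)$ I would simply unfold both sides and reduce. On the left, $\dccut{\dcpairr{\dcValue}{\dcnotr{K_0}}}{\dcabsl{\dcvar}{\dccut{\dcvar}{\dcsnd{\dcnotl{\dcabsr{\dccovar}{\dccut{\dcvar}{\dcfst{K_1}}}}}}}}$ reduces by $(\beta R)$ (legal because $\dcpairr{\dcValue}{\dcnotr{K_0}}$ is a value), producing a statement of the form $\dccut{\dcpairr{\dcValue}{\dcnotr{K_0}}}{\dcsnd{\dcnotl{\dcabsr{\dccovar}{\dccut{\dcpairr{\dcValue}{\dcnotr{K_0}}}{\dcfst{K_1}}}}}}$. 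Now $(\beta\land_1)$ strips the pair against $\dcsnd{-}$ to leave $\dccut{\dcnotr{K_0}}{\dcnotl{(\cdots)}}$, and $(\beta\neg)$ flips the cut to $\dccut{\dcabsr{\dccovar}{\dccut{\dcpairr{\dcValue}{\dcnotr{K_0}}}{\dcfst{K_1}}}}{K_0}$. Finally $(\beta\land_0)$ applied inside reduces $\dccut{\dcpairr{\dcValue}{\dcnotr{K_0}}}{\dcfst{K_1}}$ to $\dccut{\dcValue}{K_1}$, yielding the desired right-hand side.

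The equation $(\zeta\mathord\ot)$ is almost immediate: since $\dcappr{M}{K_0} \equiv \dcpairr{M}{\dcnotr{K_0}}$, the context $\dcEvalSymb = \dcpairr{\{-\}}{\dcnotr{K_0}}$ is a legitimate call-by-value context, so a single application of the general $(\zeta)$ rule of Figure~\ref{fig:cbveq_DC} produces $\dccut{M}{\dcabsl{\dcvar}{\dccut{\dcpairr{\dcvar}{\dcnotr{K_0}}}{K_1}}}$, which is $\dccut{M}{\dcabsl{\dcvar}{\dccut{\dcappr{\dcvar}{K_0}}{K_1}}}$ by definition. For $(\eta\mathord\ot)$, I would proceed by $\eta$-expansion: starting from $\dccoterm$, apply $(\eta L)$ to get $\dcabsl{\dcvar}{\dccut{\dcvar}{\dccoterm}}$ with $\dcvar \colon \dctype_0 \land \negop{\dctype_1}$, then $(\eta\land)$ to expand $\dcvar$ into a pair of projections, and $(\eta\neg)$ on the second component to expose an inner $\dcnotr{(-)}$. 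After collecting $\eta R$-style expansions around the inner continuation variable $\dccovar$, the resulting coterm should match the right-hand side on the nose after the definition of $\lambda\dccovar.(\cdots)$ is unfolded.

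The chief obstacle will be the bookkeeping in $(\eta\mathord\ot)$, where several $\eta$-rules must be applied in exactly the right order to rebuild the definiens of $\lambda\dccovar.(\cdots)$ without introducing spurious cuts. In the $(\beta\mathord\ot)$ argument, a minor subtlety is ensuring at each step that the subject of a cut is a value, so that $(\beta R)$ and the conjunction rules apply; this is automatic because $\dcpairr{\dcValue}{\dcnotr{K_0}}$ and each of its components are values by construction.
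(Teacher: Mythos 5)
Your proposal is correct and follows essentially the same route as the paper: the typing rules are checked by unfolding the definitions of $\dcappr{\dcterm}{\dccoterm}$ and $\lambda\dccovar.\dccoterm$, the value claim is immediate from the shape $\dcpairr{\dcValue}{\dcnotr{K}}$, and your $(\beta\mathord\ot)$ computation is the paper's chain $(\beta R)$, $(\beta\land_1)$/$(\beta\land_0)$, $(\beta\neg)$ with the inner $(\beta\land_0)$ merely postponed one step, while $(\zeta\mathord\ot)$ is the same single application of $(\zeta)$ with the context $\dcpairr{\{-\}}{\dcnotr{K_0}}$. The one divergence is $(\eta\mathord\ot)$: the paper's proof silently omits that case altogether, whereas you sketch an $\eta$-expansion strategy via $(\eta L)$, $(\eta\land)$, and $(\eta\neg)$ but acknowledge the bookkeeping is not carried out, so that equation remains unverified in both accounts.
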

\begin{proof}
  The inference part is shown immediately by the definition of $\dcappr{\dcterm}{\dccoterm}$ and $\lambda \alpha.K$.
  By the definition, it is immediately checked that a term of the form $\dcappr{\dcValue}{\dccoterm}$ is a value.
  
  The first equation $(\beta\mathord{\dcpmi})$ is shown as follows: 
  \begin{align*}
    \dccut{(\dcappr{\dcValue}{K_0})}{(\lambda \alpha.K_1)}
    &\equiv
    \dccut{
      \dcpairr{\dcValue}{\dcnotr{K_0}}
    }{
      \dcabsl{x}{\dccut{x}{\dcsnd{\dcnotl{\dcabsr{\alpha}{\dccut{x}{\dcfst{K_1}}}}}}}
    }
    \\
    &\dcveq
    \dccut{
      \dcpairr{\dcValue}{\dcnotr{K_0}}
    }{
      \dcsnd{\dcnotl{\dcabsr{\alpha}{
            \dccut{
              \dcpairr{\dcValue}{\dcnotr{K_0}}
            }{\dcfst{K_1}}}}}
    }
    \\
    &\dcveq
    \dccut{\dcnotr{K_0}}{
      \dcnotl{\dcabsr{\alpha}{\dccut{\dcValue}{K_1}}}
    }
    \\
    &\dcveq
    \dccut{
      \dcabsr{\alpha}{\dccut{\dcValue}{K_1}}
    }{
      K_0
    }
    \\
  \end{align*}
  
  The second equation $(\zeta\mathord{\dcpmi})$ is shown with $(\nu\wedge_0)$ as follows:
  \begin{align*}
    \dccut{(\dcappr{M}{K_0})}{K_1}
    &\equiv
    \dccut{
      \dcpairr{M}{\dcnotr{K_0}}
    }{K_1}
    \\
    &\dcveq
    \dccut{M}{
      \dcabsl{x}{
        \dccut{
          \dcpairr{x}{\dcnotr{K_0}}
        }{K_1}
      }      
    }
    \\
    &\equiv
    \dccut{M}{
      \dcabsl{x}{
        \dccut{
          (\dcappr{x}{K_0})
        }{K_1}
      }      
    }    
  \end{align*}
\end{proof}

We define a sub-calculus $\cbvSubDCeq$ of the call-by-value dual calculus that is obtained
by removing the negation type $\negop{A}$ and adding the implication and but-not types with their syntactical objects, typing rules, and equations
as primitives.
The calculus $\cbvSubDCeq$ can be roughly understood as a sub-calculus of the call-by-value dual calculus
that forbids free occurrences of the negation connective and allows only the occurrences necessary
to define the implication and the but-not connectives.

\subsection{Equivalence between $\cbvBLCeq$ and $\cbvSubDCeq$}

We define a translation $\fuga{-}$ from $\cbvBLCeq$ into
$\cbvSubDCeq$.  The translation is designed so that judgments in the
bilateral natural deduction are mapped to sequents in the sequent
calculus including proofs. 

We assume that there exist variables $\dccst{o}$ and covariables
$\dcbullet{o}$ of $\cbvSubDCeq$ for any constant expressions $\cst^o$
and constant continuations $\bullet^o$ of $\cbvBLCeq$, respectively.

Term $\fuga{\expr}$, coterm $\fuga{\cont}$, and statement $\fuga{\neutral}$
are defined inductively as shown in Figure~\ref{fig:BLC2DC}.
We show that the translation preserves typability.
Let ${\rm Cons}^+$ and ${\rm Cons}^-$ be the sets of constants of the form $\cst^o$ and $\bullet^o$, respectively.
For any $X \subseteq_{\rm {fin}} {\rm Cons}^+$ and $Y \subseteq_{\rm {fin}} {\rm Cons}^-$,
we respectively define
\begin{align*}
  \fuga{\varPi}_X = \varPi \cup \{x_{\cst^o} \colon o \mid \cst^o \in X\}
  \enspace \mbox{and} \enspace
  \fuga{\varSigma}_Y =
  \varSigma \cup \{\alpha_{\bullet^o} \colon o \mid \bullet^o \in Y\}
  \enspace .
\end{align*}

\begin{figure}[t]
\begin{center}
  \begin{minipage}{0.49\textwidth}
    \begin{align*}
      \fuga{ \cst^o } &\equiv \dccst{o}
      \\
      \fuga{ \evar^A } &\equiv \dcvar
      \\
      \fuga{ \lamabs{\evar^A}{\expr} } &\equiv \lamabs{\dcvar}{\fuga{\expr}}
      \\
      \fuga{ \app{\expr_0}{\expr_1} } &\equiv \dcabsr{\dccovar}{\dccut{\fuga{\expr_0}}{(\dcappl{\fuga{\expr_1}}{\dccovar})} }
      \\
      \fuga{ \pair{\expr_0}{\expr_1} } &\equiv \dcpairr{\fuga{\expr_0}}{\fuga{\expr_1}}
      \\
      \fuga{ \fst{\expr} } &\equiv \dcabsr{\dccovar}{\dccut{\fuga{\expr}}{\dcfst{\dccovar}}}
      \\
      \fuga{ \snd{\expr} } &\equiv \dcabsr{\dccovar}{\dccut{\fuga{\expr}}{\dcsnd{\dccovar}}}
      \\
      \fuga{ \muabs{\cvar^A}{\neutral} } &\equiv \dcabsr{\dccovar}{\fuga{\neutral}}
    \end{align*}
  \end{minipage}
  \begin{minipage}{0.49\textwidth}
    \begin{align*}
      \fuga{ \bullet^o } &\equiv \dcbullet{o}
      \\
      \fuga{ \cvar^A } &\equiv \dccovar
      \\
      \fuga{ \lamabs{\cvar^A}{\cont} } &\equiv \lamabs{\dccovar}{\fuga{\cont}}
      \\
      \fuga{ \app{\cont_0}{\cont_1} } &\equiv \dcabsl{\dcvar}{\dccut{(\dcappr{\dcvar}{\fuga{\cont_1}})}{\fuga{\cont_0}} }
      \\
      \fuga{ \pair{\cont_0}{\cont_1} } &\equiv \dcpairl{\fuga{\cont_0}}{\fuga{\cont_1}}
      \\
      \fuga{ \fst{\cont} } &\equiv \dcabsl{\dcvar}{\dccut{\dcinl{\dcvar}}{\fuga{\cont}}}
      \\
      \fuga{ \snd{\cont} } &\equiv \dcabsl{\dcvar}{\dccut{\dcinr{\dcvar}}{\fuga{\cont}}}
      \\
      \fuga{ \muabs{\evar^A}{\neutral} } &\equiv \dcabsl{\dcvar}{\fuga{\neutral}}
    \end{align*}
  \end{minipage}
  \\
\[
\fuga{\bracket{\expr}{\cont}} \equiv \dccut{\fuga{\expr}}{\fuga{\cont}} 
\]
\end{center}
\caption{A translation from $\cbvBLCeq$ into $\cbvSubDCeq$}\label{fig:BLC2DC}
\end{figure}

\begin{proposition}\label{prop:typing_trans}
  \begin{enumerate}
  \item
    $\plusj{\varPi; \varSigma}{\expr}{A}$
    implies
    $\fuga{\varPi}_X \vdash \fuga{\varSigma}_Y \mid \fuga{\expr}  \colon  A$
    for any ${\rm Cons}^+(\expr) \subseteq X$ and ${\rm Cons}^-(\expr) \subseteq Y$,
  \item
    $\minusj{\varPi;\varSigma}{\cont}{A}$
    implies
    $\fuga{\cont}  \colon  A \mid \fuga{\varPi}_X \vdash \fuga{\varSigma}_Y$
    for any ${\rm Cons}^+(\cont) \subseteq X$ and ${\rm Cons}^-(\cont) \subseteq Y$, and
  \item
    $\zeroj{\varPi;\varSigma}{\neutral}$
    implies
    $\fuga{\varPi}_X \mid \fuga{\neutral} \vdash \fuga{\varSigma}_Y$
    for any ${\rm Cons}^+(\neutral) \subseteq X$ and ${\rm Cons}^-(\neutral) \subseteq Y$.
  \end{enumerate}
\end{proposition}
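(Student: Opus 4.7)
The plan is to prove the three statements simultaneously by mutual induction on the height of the derivation of the three BLC typing judgments, following the structure of the translation in Figure~\ref{fig:BLC2DC}. Each BLC inference rule corresponds to exactly one clause in the translation, so the induction hypothesis will be applied directly to immediate subderivations and the resulting DC typing facts will be combined using the DC inference rules together with the derived rules of Propositions~\ref{prop:dcimp} and~\ref{prop:dcpmi}.

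First I would dispatch the base cases. For $\plusj{\varPi;\varSigma}{\cst^o}{o}$, since $\cst^o \in {\rm Cons}^+(\cst^o) \subseteq X$, the assumption $\dccst{o}\colon o$ lies in $\fuga{\varPi}_X$, so the DC identity rule yields the claim; the case for $\bullet^o$ is symmetric. The cases for variables $\evar^A$ and $\cvar^A$ are also immediate from the DC identity rules. Next, for the inductive cases on expressions, the rule $\RuleLabel{+}{\to}{I}{}$ is handled by applying the induction hypothesis to the premise and then using the derived rule for $\lambda\dcvar.\dcterm$ from Proposition~\ref{prop:dcimp}; the rule $\RuleLabel{+}{\to}{E}{}$ uses the induction hypothesis on both premises, the derived rule for $\dcappl{\dcterm}{\dccoterm}$ from Proposition~\ref{prop:dcimp}, the DC identity rule on a fresh $\dccovar$, a cut, and finally the $(\mathord R)$-abstraction rule. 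The cases for $\pair{\expr_0}{\expr_1}$, $\fst{\expr}$, and $\snd{\expr}$ are analogous, using the $\wedge$-rules of DC combined with cuts and $\dccovar$-abstractions where needed. The $\Reductio{+}$ case reduces immediately to the $(\mathord R)$-abstraction rule on the premise.

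The continuation cases are completely symmetric: the but-not introduction and elimination rules are translated via Proposition~\ref{prop:dcpmi}, and the $\vee$-rules of DC handle pairs and projections on continuations. The $\Noncontradict$ case for commands is a direct use of the DC cut rule.

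The main subtlety is not mathematical depth but rather bookkeeping. I need to ensure that, when invoking the induction hypothesis on a subderivation, the chosen sets $X$ and $Y$ contain at least the constants of the corresponding subterm, which is immediate since ${\rm Cons}^\pm$ of a subterm is contained in ${\rm Cons}^\pm$ of the whole term, and to add a fresh $\dccovar$ or $\dcvar$ used by the translation one simply appeals to DC weakening (together with the fact that enlarging $X$ or $Y$ only extends the environment by fresh labelled hypotheses). A minor obstacle is the case of $\muabs{\cvar^A}{\neutral}$ and its dual: here one must observe that the translation is $\dcabsr{\dccovar}{\fuga{\neutral}}$, and the induction hypothesis for the statement component gives $\fuga{\varPi}_X \mid \fuga{\neutral} \vdash \fuga{\varSigma}_Y, \dccovar\colon A$, which is precisely the premise needed by the DC $(\mathord R)$-rule. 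Overall, the proof is routine once the cases for $\to$ and $\gets$ are handled via the derived rules of Propositions~\ref{prop:dcimp} and~\ref{prop:dcpmi}; no new ideas beyond those propositions are required.
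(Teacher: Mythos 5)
Your proposal is correct and matches the paper's proof, which is exactly a simultaneous induction on the BLC typing derivation (the paper states only this one line and leaves the case analysis implicit). Your elaboration of the cases via the derived rules of Propositions~\ref{prop:dcimp} and~\ref{prop:dcpmi}, and the observation that ${\rm Cons}^\pm$ of a subterm is contained in that of the whole term, fills in the intended details faithfully.
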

\begin{proof}
  The claims are shown by simultaneous induction on the derivation of the bilateral $\lambda$-calculus.
\end{proof}

\begin{lemma}\label{lem:subst_trans}
  \begin{enumerate}
  \item $\expr$ is a value of $\cbvBLCeq$ if and only if $\fuga{\expr}$ is a value of $\cbvSubDCeq$ , and\label{lem:subst_trans_equiv}
  \item $\fuga{[\eValue/x]D} \equiv [\fuga{\eValue}/x]\fuga{D}$ and
    $\fuga{[\cont/\alpha]D} \equiv [\fuga{\cont}/\alpha]\fuga{D}$
    \enspace.\label{lem:subst_trans_subst}
  \end{enumerate}
\end{lemma}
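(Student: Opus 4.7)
The plan is to prove both parts by structural induction, with part~(\ref{lem:subst_trans_equiv}) by induction on the form of $\expr$ and part~(\ref{lem:subst_trans_subst}) by simultaneous induction on the structure of the syntactic object $D$.

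For part~(\ref{lem:subst_trans_equiv}), I would run through each value form of $\cbvBLCeq$ and check that its image under $\fuga{-}$ is a $\cbvSubDCeq$ value. Variables and constants map to variables; $\fuga{\lamabs{\evar}{\expr}} \equiv \lambda\dcvar.\fuga{\expr}$ is a value by Proposition~\ref{prop:dcimp} (since $\lambda x.M$ is defined as $\dcnotr{(\cdots)}$); pairs translate to pairs of values via the inductive hypothesis; and $\fuga{\fst{\eValue}}$, $\fuga{\snd{\eValue}}$ are exactly the value shapes $\dcabsr{\dccovar}{\dccut{\dcValue}{\dcfst{\dccovar}}}$ and $\dcabsr{\dccovar}{\dccut{\dcValue}{\dcsnd{\dccovar}}}$ allowed for $\land$-projections. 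The $\mu$-abstracted injection values $\muabs{\cvar}{\bracket{\eValue}{\fst{\cvar}}}$ and $\muabs{\cvar}{\bracket{\eValue}{\snd{\cvar}}}$ translate to $\dcabsr{\dccovar}{\dccut{\fuga{\eValue}}{\dcabsl{\dcvar}{\dccut{\dcinl{\dcvar}}{\dccovar}}}}$ and its right-counterpart, which are the corresponding value shapes admitted for disjunctions in $\cbvSubDCeq$. The converse direction is by inspection: for every non-value form of $\expr$, the outermost shape of $\fuga{\expr}$ fails to match any admitted value pattern (e.g.\ $\fuga{\app{\expr_0}{\expr_1}}$ has the form $\dcabsr{\dccovar}{\dccut{\fuga{\expr_0}}{(\dcappl{\fuga{\expr_1}}{\dccovar})}}$, which is never a value).

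For part~(\ref{lem:subst_trans_subst}), I would prove by simultaneous induction on $D$ that both $\fuga{[\eValue/\evar]D} \equiv [\fuga{\eValue}/\dcvar]\fuga{D}$ and $\fuga{[\cont/\cvar]D} \equiv [\fuga{\cont}/\dccovar]\fuga{D}$ hold. The base cases (variables, constants, $\bullet^o$) are immediate by the definition of $\fuga{-}$. Each inductive case unfolds the translation, applies the inductive hypothesis to the subexpressions, and appeals to the standard convention that the fresh dual-calculus bound names introduced by $\fuga{-}$ --- for instance the $\dccovar$ in $\fuga{\app{\expr_0}{\expr_1}}$, $\fuga{\fst{\expr}}$, $\fuga{\snd{\expr}}$ and the $\dcvar$ in $\fuga{\app{\cont_0}{\cont_1}}$, $\fuga{\fst{\cont}}$, $\fuga{\snd{\cont}}$ --- are chosen distinct from all free variables of $D$, $\eValue$, and $\cont$. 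Under this convention substitution passes through these binders without capture. The binder cases $\lamabs{\evar^A}{\expr}$, $\lamabs{\cvar^A}{\cont}$, $\muabs{\cvar^A}{\neutral}$, and $\muabs{\evar^A}{\neutral}$ additionally use the usual $\alpha$-conversion.

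The main obstacle is the bookkeeping for part~(\ref{lem:subst_trans_equiv}): one must verify that every BLC value form lands in an admitted $\cbvSubDCeq$ value shape, and in particular the $\mu$-abstracted injections must be matched to the expected eta-packaged disjunction-value forms of the sub-calculus rather than to bare $\dcinl{\dcValue}$ and $\dcinr{\dcValue}$. Part~(\ref{lem:subst_trans_subst}) is essentially routine once the $\alpha$-conversion convention on the translation's fresh bound names is fixed uniformly throughout the induction.
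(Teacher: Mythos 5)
Your overall strategy coincides with the paper's: part~(1) is a direct inspection of the value forms (the paper dispatches it with ``can be shown immediately''), and part~(2) is an induction on the syntactic object with the usual freshness/$\alpha$-conversion convention for the bound names introduced by $\fuga{-}$ --- exactly the paper's induction on $S$, which writes out the variable/constant base cases and a representative $\fst{\expr}$ and $\fst{\cont}$ case and leaves the rest to the induction hypothesis. Part~(2) of your proposal is fine, and merging the two substitution claims into one simultaneous induction is an inessential difference.

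The step that does not go through as written is precisely the injection case of part~(1), which you identified as the main obstacle and then asserted away. You correctly compute
$\fuga{\muabs{\cvar}{\bracket{\eValue}{\fst{\cvar}}}} \equiv \dcabsr{\dccovar}{\dccut{\fuga{\eValue}}{\dcabsl{\dcvar}{\dccut{\dcinl{\dcvar}}{\dccovar}}}}$,
but this is \emph{not} an admitted value shape of the call-by-value dual calculus as defined in Appendix~A: the only $\dcabsr{\dccovar}{\dcstat}$-shaped values there are $\dcabsr{\dccovar}{\dccut{\dcValue}{\dcfst{\dccovar}}}$ and $\dcabsr{\dccovar}{\dccut{\dcValue}{\dcsnd{\dccovar}}}$ (the $\wedge$-projection values), while the disjunction values are the bare $\dcinl{\dcValue}$ and $\dcinr{\dcValue}$; no BLC continuation translates to $\dcfst{\dccovar}$ or $\dcsnd{\dccovar}$, so in fact no $\mu$-abstraction lands in the value grammar at all. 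The term you obtain is only $\dcveq$-equal to $\dcinl{\fuga{\eValue}}$ (one $(\beta R)$ step followed by $(\eta R)$), not syntactically a value. Hence the ``only if'' direction of~(1) fails literally for the two injection values unless one either enlarges the value grammar of $\cbvSubDCeq$ or weakens the claim to ``value up to $\dcveq$'' --- note that the paper is careful to state the inverse-direction analogue, Lemma~\ref{lem:subst_trans_inv}(1), only up to $\cbveq$, but leaves this forward direction hidden behind ``shown immediately''. Your write-up needs one of those two repairs (and a corresponding check that the uses of Lemma~\ref{lem:subst_trans}(1) in Theorem~\ref{thm:BLC2DC} survive it) before the case analysis closes.
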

\begin{proof}
  The claim (\ref{lem:subst_trans_equiv}) can be shown immediately.

  The former claim of (2) is shown by induction on $S$. 
  Note that, by (1),
  $\nyo{\expr}$ is a value if and only if $\nyo{[\eValue/x]\expr}$ is a value. 
  
  The case of $S \equiv \cst^o$:
  \begin{align*}
    \nyo{[\eValue/x]\cst^o}
    \equiv \nyo{\cst^o}
    \equiv x_{\cst^o}
    \equiv [\nyo{\eValue}/x]x_{\cst^o}
    \equiv [\nyo{\eValue}/x]\nyo{\cst^o}
  \end{align*}    
  
  The case of $S \equiv x^A$:
  \begin{align*}
    \nyo{[\eValue/x]x^A}
    \equiv \nyo{\eValue}
    \equiv [\nyo{\eValue}/x]x
    \equiv [\nyo{\eValue}/x]\nyo{x^A}
  \end{align*}

  The case of $S \equiv x_0^{A_0}$, where $x^A \not\equiv x_0^{A_0}$:
  \begin{align*}
    \nyo{[\eValue/x]x_0^{A_0}}
    \equiv \nyo{x_0^{A_0}}
    \equiv x_0
    \equiv [\nyo{\eValue}/x]x_0
    \equiv [\nyo{\eValue}/x]\nyo{x_0^{A_0}}
  \end{align*}

  The case of $S \equiv \fst{\expr}$ is shown by using induction hypothesis. 
  \begin{align*}
    \nyo{[\eValue/x]\fst{\expr}}
    &\equiv \nyo{\fst{[\eValue/x]\expr}}
    \\
    &\equiv
    \dcabsr{\overline{\alpha}}{
      \dccut{
        \nyo{[\eValue/x]\expr}
      }{
        \dcabsl{\overline{x'}}{
          \dccut{x'}{\dcfst{\alpha}}
        }
      }
    }
    \\
    &\equiv
    \dcabsr{\overline{\alpha}}{
      \dccut{
        [\nyo{\eValue}/x]\nyo{\expr}
      }{
        \dcabsl{\overline{x'}}{
          \dccut{x'}{\dcfst{\alpha}}
        }
      }
    }
    \tag*{by induction hypothesis}
    \\
    &\equiv
    [\nyo{\eValue}/x]
    (\dcabsr{\overline{\alpha}}{
      \dccut{
        \nyo{\expr}
      }{
        \dcabsl{\overline{x'}}{
          \dccut{x'}{\dcfst{\alpha}}
        }
      }
    })
    \\
    &\equiv
    [\nyo{\eValue}/x](\fst{\expr})
  \end{align*}

  The other cases
  $S \equiv \snd{\expr}$,
  $\pair{\expr_0}{\expr_1}$,
  $\lamabs{\evar}{\expr}$,
  $\app{\expr_0}{\expr_1}$,
  $\muabs{\cvar}{\neutral}$,
  $\bullet^o$,
  $\cvar^A$,
  $\fst{\cont}$,
  $\snd{\cont}$,
  $\pair{\cont_0}{\cont_1}$,
  $\lamabs{\cvar}{\cont}$, 
  $\app{\cont_0}{\cont_1}$, 
  $\muabs{\evar}{\neutral}$, and 
  $\bracket{\expr}{\cont}$
  are also shown straightforwardly by using induction hypothesis. 
  
  The latter claim of (2) is shown by induction on $S$.

  The case of $S \equiv \bullet^o$:
  \begin{align*}
    \nyo{[\cont/\alpha]\bullet^o}
    \equiv
    \nyo{\bullet^o}
    \equiv 
    \alpha_{\bullet^o}
    \equiv
    [\nyo{\cont}/\alpha]\alpha_{\bullet^o}
    \equiv
    [\nyo{\cont}/\alpha]\nyo{\bullet^o}
  \end{align*}
  
  The case of $S \equiv \cvar^A$:
  \begin{align*}
    \nyo{[\cont/\alpha]\cvar^A}
    \equiv
    \nyo{\cont}
    \equiv
    [\nyo{\cont}/\alpha]\alpha
    \equiv
    [\nyo{\cont}/\alpha]\nyo{\cvar^A}
  \end{align*}
  
  The case of $S \equiv \cvar_0^{A_0}$, where $\cvar^A \not\equiv \cvar_0^{A_0}$:
  \begin{align*}
    \nyo{[\cont/\alpha]\cvar_0^{A_0}}
    \equiv
    \nyo{\cvar_0^{A_0}}
    \equiv
    \cvar_0
    \equiv
    [\nyo{\cont}/\alpha]\cvar_0
    \equiv
    [\nyo{\cont}/\alpha]\nyo{\cvar_0^{A_0}}
  \end{align*}

  The case of $S \equiv \fst{\cont_0}$ is shown by using induction hypothesis:
  \begin{align*}
    \nyo{[\cont/\alpha]\fst{\cont_0}}
    &\equiv
    \nyo{\fst{[\cont/\alpha]\cont_0}}
    \\
    &\equiv
    \dcabsl{x}{\dccut{\dcinl{x}}{\nyo{[\cont/\alpha]\cont_0}}}
    \\
    &\equiv
    \dcabsl{x}{\dccut{\dcinl{x}}{[\nyo{\cont}/\alpha]\nyo{\cont_0}}}
    \tag*{by induction hypothesis}
    \\
    &\equiv
    [\nyo{\cont}/\alpha](\dcabsl{x}{\dccut{\dcinl{x}}{\nyo{\cont_0}}})
    \\
    &\equiv
    [\nyo{\cont}/\alpha]\nyo{\fst{\cont_0}}
  \end{align*}

  The other cases
  $S \equiv \cst^o$, 
  $x^A$,
  $\fst{\expr}$,
  $\snd{\expr}$,
  $\pair{\expr_0}{\expr_1}$,
  $\lamabs{\evar}{\expr}$,
  $\app{\expr_0}{\expr_1}$,
  $\muabs{\cvar}{\neutral}$,
  $\snd{\cont}$,
  $\pair{\cont_0}{\cont_1}$,
  $\lamabs{\cvar}{\cont}$, 
  $\app{\cont_0}{\cont_1}$, 
  $\muabs{\evar}{\neutral}$, and 
  $\bracket{\expr}{\cont}$
  are also shown straightforwardly by using induction hypothesis.
\end{proof}

\begin{theorem}\label{thm:BLC2DC}
    $D_0 \cbveq D_1$
    implies
    $\fuga{D_0} \dcveq \fuga{D_1}$.
\end{theorem}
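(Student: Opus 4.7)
The plan is to proceed by induction on the derivation of $D_0 \cbveq D_1$, viewing $\cbveq$ as the least congruence on syntactic objects generated by the axioms in Figure~\ref{fig:cbvbieq}. The reflexivity, symmetry, transitivity, and congruence cases are immediate: $\fuga{-}$ is defined by structural recursion on syntactic objects, and $\dcveq$ is itself a congruence on $\cbvSubDCeq$, so these cases follow directly from the induction hypothesis. The substantive work reduces to checking that the translation of each axiom is derivable in $\cbvSubDCeq$.

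For the axiom cases, the key tool is Lemma~\ref{lem:subst_trans}, which both ensures that $\fuga{\eValue}$ is a DC value (so that $(\beta R)$ applies) and lets us commute $\fuga{-}$ with substitution. A representative case is the beta axiom $\app{(\lamabs{\evar}{\expr})}{\eValue} \cbveq [\eValue/\evar]\expr$: unfolding the translation yields $\dcabsr{\dccovar}{\dccut{\lamabs{\dcvar}{\fuga{\expr}}}{\dcappl{\fuga{\eValue}}{\dccovar}}}$, which by $(\beta\mathord{\dcimp})$ of Proposition~\ref{prop:dcimp} rewrites to $\dcabsr{\dccovar}{\dccut{\fuga{\eValue}}{\dcabsl{\dcvar}{\dccut{\fuga{\expr}}{\dccovar}}}}$, and then by $(\beta R)$ and $(\eta R)$ to $[\fuga{\eValue}/\dcvar]\fuga{\expr}$, which by Lemma~\ref{lem:subst_trans} equals $\fuga{[\eValue/\evar]\expr}$. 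The remaining expression-side beta and eta axioms, and their continuation-side duals, reduce analogously using $(\beta\mathord{\ot})$ of Proposition~\ref{prop:dcpmi}, the product/coproduct rules $(\beta\wedge_i), (\beta\vee_i), (\eta\wedge), (\eta\vee)$, and the translations of projections. The two command axioms $\bracket{\eValue}{\muabs{\evar}{\neutral}} \cbveq [\eValue/\evar]\neutral$ and $\bracket{\muabs{\cvar}{\neutral}}{\cont} \cbveq [\cont/\cvar]\neutral$ translate directly to instances of $(\beta R)$ and $(\beta L)$, again using Lemma~\ref{lem:subst_trans}.

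The main obstacle is the evaluation-context axiom $\bracket{\eEval{\expr}}{\cont} \cbveq \bracket{\expr}{\muabs{\evar}{\bracket{\eEval{\evar}}{\cont}}}$, because BLC contexts built from applications and projections do not translate to DC evaluation contexts—for instance, $\fuga{\app{\eEvalSymb}{\expr_2}}$ wraps the translated context under a fresh $\mu$-binding rather than producing a DC context in the sense of $\dcEvalSymb$. I plan to handle this by first establishing, by induction on the structure of $\eEvalSymb$, the auxiliary lemma
\[
  \dccut{\fuga{\eEval{\expr}}}{K} \dcveq \dccut{\fuga{\expr}}{\dcabsl{\dcvar}{\dccut{\fuga{\eEval{\dcvar}}}{K}}}
\]
for any BLC expression $\expr$, DC coterm $K$, and fresh $\dcvar$. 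The base case $\eEvalSymb \equiv \{-\}$ follows from $(\eta L)$. Each inductive case unfolds $\fuga{-}$ on the outer context constructor, absorbs the resulting $\mu$-binder into its enclosing coterm via one application of $(\beta L)$, and then invokes the induction hypothesis. Specialising $K$ to $\fuga{\cont}$ produces exactly the translation of the evaluation-context axiom and completes the proof.
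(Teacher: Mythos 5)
Your proposal is correct in outline and follows essentially the same route as the paper: the congruence closure is dispatched trivially, each axiom is checked using Lemma~\ref{lem:subst_trans} together with the derived $\beta$/$\eta$ rules for $\to$ and $\gets$ (Propositions~\ref{prop:dcimp} and~\ref{prop:dcpmi}), and the evaluation-context axiom is reduced to a context lemma proved by induction on $\eEvalSymb$. The paper states that lemma via an auxiliary coterm-indexed translation $\fuga{-}_{\dccoterm}$ of contexts rather than directly as $\dccut{\fuga{\eEval{\expr}}}{\dccoterm} \dcveq \dccut{\fuga{\expr}}{\dcabsl{\dcvar}{\dccut{\fuga{\eEval{\evar}}}{\dccoterm}}}$, but the two formulations are interconvertible by a single $(\beta R)$/$(\beta L)$ step, so this is a cosmetic difference rather than a different method.

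One detail in your sketch of the context lemma is off as literally stated. You claim every inductive case is discharged by ``one application of $(\beta L)$'' followed by the induction hypothesis; this works only when the hole sits in head position after unfolding, namely for $\app{\eEvalSymb}{\expr}$, $\fst{\eEvalSymb}$, and $\snd{\eEvalSymb}$. For $\pair{\eEvalSymb}{\expr}$ and $\pair{\eValue}{\eEvalSymb}$ the translation is a bare pair $\dcpairr{-}{-}$ with no $\mu$-binder at all, and what you need is the dual calculus's $(\zeta)$ (naming) rule to hoist the hole into head position before invoking the induction hypothesis. For $\app{\eValue}{\eEvalSymb}$, after the $(\beta L)$ step the hole is buried inside the coterm $\dcappl{-}{K}$, i.e.\ under $\dcnotl{\cdot}$, where no DC evaluation context reaches; extracting it requires an extra detour through the negation encoding (using $(\eta\neg)$, $(\beta\neg)$, $(\zeta)$, and $(\beta R)$), analogous to case (a) in the proof of Proposition~\ref{prop:dcimp}. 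The lemma itself is true, and the paper's own proof of $(\ast)$ elides exactly these same cases, but your stated inductive step would fail for these three constructors and needs to be repaired along the above lines.
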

\begin{proof}
First, we define a coterm $\dccoterm$-indexed translation
$\fuga{-}_\dccoterm$ from contexts for expressions of $\cbvBLCeq$ into
contexts of $\cbvSubDCeq$ as follows:
\begin{align*}
  \fuga{\{-\}}_\dccoterm & \equiv \dccut{\{-\}}{\dccoterm}\\
  \fuga{\eValue\eEvalSymb}_\dccoterm & \equiv \fuga{\eEvalSymb}_{\dcabsl{\dcvar}{\dccut{\fuga{\eValue}}{(\dcappl{\dcvar}{\dccoterm})}}} &
  \fuga{\eEvalSymb\expr}_\dccoterm & \equiv \fuga{\eEvalSymb}_{\dcappl{\fuga{\expr}}{\dccoterm}} &
  \\
  \fuga{\pair{\eValue}{\eEvalSymb}}_\dccoterm & \equiv \fuga{\eEvalSymb}_{\dcabsl{\dcvar}{\dccut{\dcpairr{\fuga{\eValue}}{\dcvar}}{\dccoterm}}} &
  \fuga{\pair{\eEvalSymb}{\expr}}_\dccoterm & \equiv \fuga{\eEvalSymb}_{\dcabsl{\dcvar}{\dccut{\dcpairr{\dcvar}{\fuga{\expr}}}{\dccoterm}}} &
  \\
  \fuga{\fst{\eEvalSymb}}_\dccoterm & \equiv \fuga{\eEvalSymb}_{\dcfst{\dccoterm}} &
  \fuga{\snd{\eEvalSymb}}_\dccoterm & \equiv \fuga{\eEvalSymb}_{\dcsnd{\dccoterm}}
  \enspace .
\end{align*}

Next, we can immediately confirm
\begin{align*}
\dccut{\fuga{\eEval{\expr}}}{\dccoterm} \dcveq \dccut{\fuga{\expr}}{\dcabsl{\dcvar}{\fuga{\eEvalSymb}_{\dccoterm}\{\dcvar\}}} \tag{$\ast$}
\end{align*}
 by induction on $\eEvalSymb$.

 Finally, the theorem can be shown by the case analysis of $\cbveq$ and Lemma~\ref{lem:subst_trans}.

  The case of $\app{(\lamabs{\evar}{\expr})}{\eValue} \cbveq [\eValue/\evar] \expr$.
  By using Lemma~\ref{lem:subst_trans} (1) and (2), we have 
  \begin{align*}
  \fuga{\app{(\lamabs{\evar}{\expr})}{\eValue}}
  &\equiv
  \dcabsr
      {
        \dccovar
      }{
        \dccut{
          \lambda x.\fuga{\expr}
        }{
          (\dcappl{\fuga{\eValue}}{\dccovar})
        }
      }
  \dcveq
  \dcabsr{
    \dccovar
  }{
    \dccut{
      \fuga{\eValue}
    }{
      \dcabsl{x}{
        \dccut{\fuga{\expr}}{\dccovar}
      }
    }
  }
  \\
  &\dcveq
  \dcabsr{
    \dccovar
  }{
    \dccut{[\fuga{\eValue}/\evar]\fuga{\expr}}{\dccovar}
  }
  \equiv
  \dcabsr{
    \dccovar
  }{
    \dccut{\fuga{[\eValue / \evar] \expr}}{\dccovar}
  }
  \dcveq
  \fuga{[\eValue / \evar] \expr}\enspace. 
  \end{align*}
  
  The case of $\lamabs{\evar}{\app{\eValue}{\evar}} \cbveq \eValue$. 
  By using Lemma~\ref{lem:subst_trans} (1), we have
  \begin{align*}
  \fuga{\lamabs{\evar}{\app{\eValue}{\evar}}}
  &\equiv
  \lamabs{\dcvar}
         {(
           \dcabsr{
             \dccovar
           }{
             \dccut{
               \fuga{\eValue}
             }{
               (\dcappl{
                 \dcvar
               }{
                 \dccovar
               })
             }
           }
         )}
  \dcveq
  \fuga{\eValue}
  \enspace.
  \end{align*}
  
  The case of $\fst{\pair{\eValue_0}{\eValue_1}} \cbveq \eValue_0$. 
  By using Lemma~\ref{lem:subst_trans} (1), we have
  \begin{eqnarray*}
    \fuga{
      \fst{\pair{\eValue_0}{\eValue_1}}
    }
    \equiv
    \dcabsr{
      \dccovar
    }{
      \dccut{
        \dcpairr{\fuga{\eValue_0}}{\fuga{\eValue_1}}
      }{
        \dcfst{\dccovar}
      }
    }
    \dcveq
    \dcabsr{\dccovar}{
      \dccut{
        \fuga{\eValue_0}
      }{
        \dccovar
      }
    }
    \dcveq
    \fuga{\eValue_0}
    \enspace.
  \end{eqnarray*}
  
  The case of $\snd{\pair{\eValue_0}{\eValue_1}} \cbveq \eValue_1$ is shown similarly. 

  The case of $\pair{\fst{\eValue}}{\snd{\eValue}} \cbveq \eValue$.
  By using Lemma~\ref{lem:subst_trans} (1), we have
  \begin{eqnarray*}
    \fuga{
      \pair{\fst{\eValue}}{\snd{\eValue}}
    }
    \equiv
    \dcpairr{
      \dcabsr{
        \dccovar
      }{
        \dccut{
          \fuga{\eValue}
        }{
          \dcfst{\dccovar}
        }
      }
    }{
      \dcabsr{
        \dccovar
      }{
        \dccut{
          \fuga{\eValue}
        }{
          \dcsnd{\dccovar}
        }
      }
    }
    \dcveq
    \fuga{\eValue}
    \enspace.
  \end{eqnarray*}  

  The case of $\muabs{\cvar}{\bracket{\expr}{\cvar}} \cbveq \expr$.
  We have 
  \begin{eqnarray*}
    \fuga{
      \muabs{\cvar}{\bracket{\expr}{\cvar}}
    }
    \equiv
    \dcabsr{
      \dccovar
    }{
      \dccut{
        \fuga{\expr}
      }{
        \dccovar
      }
    }
    \dcveq
    \fuga{\expr}
    \enspace.
  \end{eqnarray*}  

  The case of $\bracket{\eValue}{\muabs{\evar}{\neutral}} \cbveq [\eValue/\evar] N$:
  By using Lemma~\ref{lem:subst_trans} (1) and (2), we have
  \begin{align*}
    \fuga{
      \bracket{\eValue}{\muabs{\evar}{\neutral}}
    }
    \equiv
    \dccut{
      \fuga{\eValue}
    }{
      \dcabsl{
        \dcvar
      }{
        \fuga{\neutral}
      }
    }
    \dcveq
    [\fuga{\eValue}/\dcvar] \fuga{\neutral}
    \equiv
    \fuga{[\eValue/\dcvar]\neutral}
    \enspace.
  \end{align*}

  The case of $\bracket{\eEval{\expr}}{\cont} \cbveq \bracket{\expr}{\muabs{\evar}{\bracket{\eEval{\evar}}{\cont}}}$. 
  By using $(\ast)$. we have
  \begin{align*}
    \fuga{
      \bracket{\eEval{\expr}}{\cont}
    }
    &\equiv
    \dccut{
      \fuga{\eEval{\expr}}
    }{
      \fuga{\cont}
    }
    \dcveq
    \dccut{
      \fuga{\expr}
    }{
      \dcabsl{
        \dcvar
      }{
        \fuga{\eEvalSymb}_{\fuga{\cont}}\{\dcvar\}
      }
    }
    \\
    &\dcveq
    \dccut{
      \fuga{\expr}
    }{
      \dcabsl{
        \dcvar
      }{
        \dccut{
          \fuga{\dcvar}
        }{
          \dcabsl{
            \dcvar
          }{
            \fuga{\eEvalSymb}_{\fuga{\cont}}\{\dcvar\}
          }
        }
      }
    }
    \\
    &\dcveq
    \dccut{
      \fuga{\expr}
    }{
      \dcabsl{
        \dcvar
      }{
        \dccut{
          \fuga{\eEval{\evar}}
        }{
          \fuga{\cont}
        }
      }
    }
    \equiv
    \fuga{
      \bracket{
        \expr
      }{
        \muabs{\evar}{\bracket{\eEval{\evar}}{\cont}}
      }
    }
    \enspace.
  \end{align*}

  The case of $\app{(\lamabs{\cvar}{\cont_0})}{\cont_1} \cbveq [\cont_1/\cvar] \cont_0$. 
  By using Lemma~\ref{lem:subst_trans} (2), we have
  \begin{align*}
    \fuga{
      \app{(\lamabs{\cvar}{\cont_0})}{\cont_1}
    }
    &\equiv
    \dcabsl{
      \dcvar
    }{
      \dccut{
        (\dcappr{\fuga{\cont_1}}{\dcvar})
      }{
        \lamabs{\dccovar}{\fuga{\cont_0}}
      }
    }
    \dcveq
    \dcabsl{
      \dcvar
    }{
      \dccut{
        \dcabsr{\dccovar}{
          \dccut{
            \dcvar
          }{
            \fuga{\cont_0}
          }
        }
      }{
        \fuga{\cont_1}
      }
    }
    \\
    &\dcveq
    \dcabsl{
      \dcvar
    }{
      \dccut{
        \dcvar
      }{
        [\fuga{\cont_1}/\dccovar]\fuga{\cont_0}
      }
    }
    \dcveq
    [\fuga{\cont_1}/\dccovar]\fuga{\cont_0}
    \equiv
    \fuga{ [\cont_1/\dccovar]\cont_0 }
    \enspace.
  \end{align*}

  The case of $\lamabs{\cvar}{\app{\cont}{\cvar}} \cbveq \cont$.
  We have
  \begin{align*}
    \fuga{
      \lamabs{\cvar}{\app{\cont}{\cvar}}
    }
    &\equiv
    \lamabs{\dccovar}{
      (\dcabsl{\dcvar}{
        \dccut{
          (\dcappr{\dccovar}{\dcvar})
        }{
          \fuga{\cont}
        }
      })
    }
    \dcveq
    \fuga{\cont}
    \enspace.
  \end{align*}

  The case of $\fst{\pair{\cont_0}{\cont_1}} \cbveq \cont_0$. 
  We have
  \begin{align*}
    \fuga{
      \fst{\pair{\cont_0}{\cont_1}}
    }
    &\equiv
    \dcabsl{
      \dcvar
    }{
      \dccut{
        \dcinl{\dcvar}
      }{
        \dcpairl{\fuga{\cont_0}}{\fuga{\cont_1}}
      }
    }
    \dcveq
    \dcabsl{
      \dcvar
    }{
      \dccut{
        \dcvar
      }{
        \fuga{\cont_0}
      }
    }
    \dcveq    
    \fuga{\cont_0}
    \enspace.
  \end{align*}

  The case of $\snd{\pair{\cont_0}{\cont_1}} \cbveq \cont_1$ is also shown similarly. 

  The case of $\pair{\fst{\cont}}{\snd{\cont}} \cbveq \cont$. 
  We have
  \begin{align*}
    \fuga{
      \pair{\fst{\cont}}{\snd{\cont}}
    }
    &\equiv
    \dcpairl{
      \dcabsl{
        \dcvar
      }{
        \dccut{
          \dcinl{\dcvar}
        }{
          \fuga{\cont}
        }
      }
    }{
      \dcabsl{
        \dcvar
      }{
        \dccut{
          \dcinr{\dcvar}
        }{
          \fuga{\cont}
        }
      }      
    }
    \dcveq
    \fuga{\cont}
    \enspace.
  \end{align*}

  The case of $\muabs{\evar}{\bracket{\evar}{\cont}} \cbveq \cont$. 
  We have
  \begin{align*}
    \fuga{
      \muabs{\evar}{\bracket{\evar}{\cont}}
    }
    &\equiv
    \dcabsl{
        \dcvar
    }{
      \dccut{
        \dcvar
      }{
        \fuga{\cont}
      }
    }
    \dcveq
    \fuga{\cont}
    \enspace.
  \end{align*}

  
  The case of $\bracket{\muabs{\cvar}{\neutral}}{\cont} \cbveq [\cont/\cvar]$. 
  By using Lemma~\ref{lem:subst_trans} (2), we have
  \begin{align*}
    \fuga{
      \bracket{\muabs{\cvar}{\neutral}}{\cont}
    }
    &\equiv
    \dccut{
      \dcabsr{
        \dccovar
      }{
        \fuga{\neutral}
      }
    }{
     \fuga{\cont} 
    }
    \dcveq
    [\fuga{\cont}/\dccovar]\fuga{\neutral}
    \equiv
    \fuga{ [\cont/\dccovar]\neutral }
    \enspace.
  \end{align*}
\end{proof}

We next define a translation $\gafu{-}$
from $\cbvSubDCeq$ into $\cbvBLCeq$.
Expression $\gafu{\dcterm}$, continuation $\gafu{\dccoterm}$, and command $\gafu{\dcstat}$
for any typable $\dcterm$, $\dccoterm$, and $\dcstat$ in $\cbvSubDCeq$ are defined inductively as shown in Figure~\ref{fig:DC2BLC}.

We show that this translation preserves typability.
Let $\varGamma$ be
$\vec{\dccst{o'}\colon o'},\vec{\dcvar\colon A}$.
Then we define $\gafu{\varGamma}$ by $\vec{x\colon\plusop{A}}$.
Similarly, we also define $\gafu{\varDelta}$.

\begin{figure}[tp]
\begin{center}
  \begin{minipage}{0.49\textwidth}
    \begin{align*}
      \gafu{ \dccst{o} } &\equiv \cst^o
      \\
      \gafu{ \dcvar } &\equiv \evar^A
      \tag*{if $\dcvar$ has a type $A$}
      \\
      \gafu{ \lamabs{\dcvar}{\dcterm} } &\equiv \lamabs{\evar^A}{\gafu{\dcterm}}
      \tag*{if $\lamabs{\dcvar}{\dcterm}$ has a type $A\to A_1$}
      \\
      \gafu{ \dcappr{\dcterm}{\dccoterm} } &\equiv \muabs{\cvar}{\bracket{\gafu{\dcterm}}{\app{\cvar}{\gafu{\dccoterm}}}}
      \\
      \gafu{ \dcpairr{\expr_0}{\expr_1} } &\equiv \pair{\gafu{\expr_0}}{\gafu{\expr_1}}
      \\
      \gafu{ \dcinl{\dcterm} } &\equiv \muabs{\cvar}{\bracket{\gafu{\dcterm}}{\fst{\cvar}}}
      \\
      \gafu{ \dcinr{\dcterm} } &\equiv \muabs{\cvar}{\bracket{\gafu{\dcterm}}{\snd{\cvar}}}
      \\
      \gafu{ \dcabsr{\dccovar}{\dcstat} } &\equiv \muabs{\cvar^A}{\gafu{\dcstat}}
      \quad \hbox{if $\dccovar\colon A$}
    \end{align*}
  \end{minipage}
  \begin{minipage}{0.49\textwidth}
    \begin{align*}
      \gafu{ \dcbullet{o} } &\equiv \bullet^o
      \\
      \gafu{ \dccovar } &\equiv \cvar^A
      \tag*{if $\dccovar$ has a type $A$}
      \\
      \gafu{ \lamabs{\dccovar}{\dccoterm} } &\equiv \lamabs{\cvar^A}{\gafu{\dccoterm}}
      \tag*{if $\lamabs{\dccovar}{\dccoterm}$ has a type $A_1 \leftarrow A$}
      \\
      \gafu{ \dcappl{\dcterm}{\dccoterm} } &\equiv \muabs{\evar}{\bracket{\app{\evar}{\gafu{\dcterm}}}{\gafu{\dccoterm}} }
      \\
      \gafu{ \dcpairl{\dccoterm_0}{\dccoterm_1} } &\equiv \pair{\gafu{\dccoterm_0}}{\gafu{\dccoterm_1}}
      \\
      \gafu{ \dcfst{\dccoterm} } &\equiv \dcabsl{\evar}{\bracket{\fst{\evar}}{\gafu{\dccoterm}}}
      \\
      \gafu{ \dcsnd{\dccoterm} } &\equiv \dcabsl{\evar}{\bracket{\snd{\evar}}{\gafu{\dccoterm}}}
      \\
      \gafu{ \dcabsl{\dcvar}{\dcstat} } &\equiv \muabs{\evar^A}{\gafu{\dcstat}}
      \quad \hbox{if $\dcvar\colon A$}
    \end{align*}
  \end{minipage}
\end{center}
  \[
  \gafu{\dccut{\dcterm}{\dccoterm}} \equiv \bracket{\gafu{\dcterm}}{\gafu{\dccoterm}}
  \]
\caption{A translation from $\cbvSubDCeq$ into $\cbvBLCeq$.}\label{fig:DC2BLC}
\end{figure}

\begin{proposition}
  \begin{enumerate}
  \item
    $\varGamma \vdash \varDelta \mid \dcterm\colon A$
    implies
    $\plusj{\gafu{\varGamma};\gafu{\varDelta}}{\gafu{\dcterm}}{A}$,
  \item
    $\dccoterm\colon A \mid \varGamma \vdash \varDelta$
    implies
    $\minusj{\gafu{\varGamma};\gafu{\varDelta}}{\gafu{\dccoterm}}{A}$, and
  \item
    $\varGamma \mid \dcstat \vdash \varDelta$
    implies
    $\zeroj{\gafu{\varGamma};\gafu{\varDelta}}{\gafu{\dcstat}}$. 
  \end{enumerate}
\end{proposition}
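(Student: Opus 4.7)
The proposition will be proved by simultaneous induction on the typing derivation of $\cbvSubDCeq$. The three statements must be established together because the term, coterm, and statement formation rules share the cut and the two $\mu$-binders: the induction hypothesis on a subderivation of one form is used to build the conclusion of another. For each inference rule of $\cbvSubDCeq$ I would produce the corresponding derivation in BLC by tracing the translation table in Figure~\ref{fig:DC2BLC} and invoking the appropriate rule of Figure~\ref{fig:bilateralcalculus}, extending the environment as needed via the weakening proposition already proved.

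The base cases are immediate. Variables $\dcvar$ and covariables $\dccovar$ translate to $\evar^A$ and $\cvar^A$ and are discharged by $\Identity{+}$ and $\Identity{-}$; the constants $\dccst{o}$ and $\dcbullet{o}$ are handled by the two constant rules. A cut $\dccut{\dcterm}{\dccoterm}$ translates to $\bracket{\gafu{\dcterm}}{\gafu{\dccoterm}}$ and is closed by $\Noncontradict$. The $\mu$-binders $\dcabsr{\dccovar}{\dcstat}$ and $\dcabsl{\dcvar}{\dcstat}$ apply $\Reductio{+}$ and $\Reductio{-}$ to the translation of their body. The pairs $\dcpairr{-}{-}$ and $\dcpairl{-}{-}$ use $\RuleLabel{+}{\wedge}{I}{}$ and $\RuleLabel{-}{\vee}{I}{}$, and the $\lambda$-abstractions of the implication and but-not types use $\RuleLabel{+}{\to}{I}{}$ and $\RuleLabel{-}{\gets}{I}{}$.

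The more composite cases are the projections $\dcfst{\dccoterm}$, $\dcsnd{\dccoterm}$, the injections $\dcinl{\dcterm}$, $\dcinr{\dcterm}$, and the two kinds of application $\dcappl{\dcterm}{\dccoterm}$, $\dcappr{\dcterm}{\dccoterm}$. In each of these the translation wraps the inductive premises inside a $\mu$-binder on a fresh variable of the appropriate compound type; we apply an elimination rule of BLC to that fresh variable, cut against the induction hypothesis by $\Noncontradict$, and close with $\Reductio{+}$ or $\Reductio{-}$. For example, for $\dcinl{\dcterm}$ typed at $A_0\vee A_1$ with $\dcterm$ typed at $A_0$, the translation $\muabs{\cvar}{\bracket{\gafu{\dcterm}}{\fst{\cvar}}}$ is typed by applying $\RuleLabel{-}{\vee}{E}{0}$ to a fresh $\cvar^{A_0\vee A_1}$ to obtain a continuation of type $\minusop{A_0}$, combining it by $\Noncontradict$ with the induction hypothesis $\plusj{\gafu{\varGamma};\gafu{\varDelta}}{\gafu{\dcterm}}{A_0}$, and discharging $\cvar$ via $\Reductio{+}$. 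The application cases $\dcappl{\dcterm}{\dccoterm}$ and $\dcappr{\dcterm}{\dccoterm}$ additionally invoke $\RuleLabel{+}{\to}{E}{}$ and $\RuleLabel{-}{\gets}{E}{}$ respectively, and the remaining projection and injection cases are analogous or dual.

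No step presents a genuine obstacle. The only bookkeeping is that whenever the translation introduces a fresh $\evar$ or $\cvar$ inside a $\mu$-binder, the BLC environment must be extended to contain it before applying the relevant elimination; this is handled uniformly by the weakening proposition. Since $\gafu{-}$ preserves the split $\varGamma;\varDelta$ by sending $\varGamma$ to the positive half $\gafu{\varGamma}$ and $\varDelta$ to the negative half $\gafu{\varDelta}$, no realignment between the two halves of the environment is ever required.
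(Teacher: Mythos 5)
Your proposal is correct and follows essentially the same route as the paper, which proves this proposition by simultaneous induction on the derivation of judgments of $\cbvSubDCeq$; your case analysis (identity/constant rules for variables and constants, $\Noncontradict$ for cuts, $\Reductio{\pm}$ for the two abstractions, and the elimination-rule-plus-cut-plus-reductio pattern for injections, projections, and applications, with weakening to accommodate the fresh bound variable) is exactly the content the paper leaves implicit.
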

\begin{proof}
  The claims can be shown by simultaneous induction on the derivation of judgments of $\cbvSubDCeq$.
\end{proof}

We show that the translation $\gafu{-}$ is an inverse of $\fuga{-}$ up to $\cbveq$ as follows:
\begin{theorem}\label{lem:involution_trans_inv}
  \begin{enumerate}
  \item $\gafu{\fuga{D}} \cbveq D$ holds, and\label{lem:involution_trans_inv_BLC}
  \item $\fuga{\gafu{O}} \dcveq O$ holds.\label{lem:involution_trans_inv_DC}
  \end{enumerate}
\end{theorem}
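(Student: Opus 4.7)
The plan is to prove both parts by simultaneous structural induction on the syntactical object. The base cases---$\cst^o$, $\bullet^o$, $\evar^A$, $\cvar^A$ for part~(\ref{lem:involution_trans_inv_BLC}) and $\dccst{o}$, $\dcbullet{o}$, $\dcvar$, $\dccovar$ for part~(\ref{lem:involution_trans_inv_DC})---follow directly from the definitions in Figures~\ref{fig:BLC2DC} and~\ref{fig:DC2BLC}. For the inductive cases, the recipe is to apply the induction hypothesis to the immediate sub-objects and then to use the equational theories $\cbveq$ and $\dcveq$, together with the derived equations of Propositions~\ref{prop:dcimp} and~\ref{prop:dcpmi}, to strip away the auxiliary $\mu$-wrappers and bindings that $\fuga{-}$ and $\gafu{-}$ introduce.

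To illustrate part~(\ref{lem:involution_trans_inv_BLC}), take $D \equiv \app{\expr_0}{\expr_1}$. Unfolding both translations in turn yields
\[
\gafu{\fuga{\app{\expr_0}{\expr_1}}} \equiv \muabs{\cvar}{\bracket{\gafu{\fuga{\expr_0}}}{\muabs{\evar}{\bracket{\app{\evar}{\gafu{\fuga{\expr_1}}}}{\cvar}}}},
\]
which the induction hypothesis rewrites inside-out to $\muabs{\cvar}{\bracket{\expr_0}{\muabs{\evar}{\bracket{\app{\evar}{\expr_1}}{\cvar}}}}$. The $\zeta$-rule with evaluation context $\eEvalSymb \equiv \app{\{-\}}{\expr_1}$ then collapses this to $\muabs{\cvar}{\bracket{\app{\expr_0}{\expr_1}}{\cvar}}$, and a final use of $\muabs{\cvar}{\bracket{\expr}{\cvar}} \cbveq \expr$ yields $\app{\expr_0}{\expr_1}$. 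The cases $\fst{\expr}$, $\snd{\expr}$ and their continuation duals $\app{\cont_0}{\cont_1}$, $\fst{\cont}$, $\snd{\cont}$ are handled analogously with the appropriate $\eEvalSymb$, while $\lamabs{\evar}{\expr}$, $\pair{\expr_0}{\expr_1}$, $\muabs{\cvar}{\neutral}$, $\bracket{\expr}{\cont}$ (and their continuation-side siblings) reduce directly from the induction hypothesis without invoking any equation.

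Part~(\ref{lem:involution_trans_inv_DC}) proceeds symmetrically. The representative non-trivial case is $O \equiv \dcappr{\dcterm}{\dccoterm}$: unfolding the two translations and invoking the induction hypothesis gives
\[
\fuga{\gafu{\dcappr{\dcterm}{\dccoterm}}} \dcveq \dcabsr{\dccovar}{\dccut{\dcterm}{\dcabsl{\dcvar}{\dccut{\dcappr{\dcvar}{\dccoterm}}{\dccovar}}}},
\]
at which point the derived rule $(\zeta\mathord\ot)$ of Proposition~\ref{prop:dcpmi} reshapes the inner statement to $\dccut{\dcappr{\dcterm}{\dccoterm}}{\dccovar}$, and $(\eta R)$ collapses the outer $\dcabsr{\dccovar}{-}$. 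The case $O \equiv \dcappl{\dcterm}{\dccoterm}$ is dual, finishing with $(\beta L)$ followed by $(\eta L)$; the injections $\dcinl{\dcterm}$, $\dcinr{\dcterm}$ and projections $\dcfst{\dccoterm}$, $\dcsnd{\dccoterm}$ dispatch via $(\zeta)$ combined with $(\eta R)$ or $(\eta L)$; and the remaining constructors $\dcpairr{\dcterm_0}{\dcterm_1}$, $\dcpairl{\dccoterm_0}{\dccoterm_1}$, $\dcabsr{\dccovar}{\dcstat}$, $\dcabsl{\dcvar}{\dcstat}$, $\dccut{\dcterm}{\dccoterm}$ reduce directly from the induction hypothesis.

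The main obstacle is recognizing, in each compound case, that the $\mu$-wrapped form produced by composing the two translations is exactly the $\zeta$-expansion of the desired object around its outermost constructor; once this pattern is spotted, every case boils down to a short $\zeta$-then-$\eta$ chain. No substitutions ever cross the translation boundary, so Lemma~\ref{lem:subst_trans} plays no role here---the entire argument lives inside the equational theory.
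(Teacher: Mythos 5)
Your proof is correct and takes essentially the same approach as the paper, whose entire argument is the one-line remark that part (1) follows by induction on $D$ and part (2) by induction on $O$; your write-up supplies the case analysis the paper omits, and the representative computations (the $\zeta$-then-$\eta$ collapse for $\app{\expr_0}{\expr_1}$, and $(\zeta\mathord\ot)$ then $(\eta R)$ for $\dcappr{\dcterm}{\dccoterm}$) are accurate. One negligible mislabel: for the projections $\dcfst{\dccoterm}$ and $\dcsnd{\dccoterm}$ the inner redex is a $\dcabsr$ cut against a coterm, so it is closed by $(\beta L)$ rather than $(\zeta)$, after which $(\eta L)$ finishes exactly as you say.
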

\begin{proof}
(\ref{lem:involution_trans_inv_BLC}) is shown by induction on $D$. 
(\ref{lem:involution_trans_inv_DC}) is shown by induction on $O$. 
\end{proof}

\begin{lemma}\label{lem:subst_trans_inv}
  \begin{enumerate}
  \item
    $\dcterm$ is a value of $\cbvSubDCeq$ if and only if
    there exists a value $\eValue$ of $\cbvBLCeq$ such that $\eValue \cbveq \gafu{\dcterm}$, and\label{lem:subst_trans_inv_surj}
  \item
    $\gafu{[\dcValue/\dcvar]O} \equiv [\gafu{\dcValue}/\evar]\gafu{O}$ and
    $\gafu{[\dccoterm/\dccovar]O} \equiv [\gafu{\dccoterm}/\cvar]\gafu{O}$.\label{lem:subst_trans_inv_subst}
  \end{enumerate}
\end{lemma}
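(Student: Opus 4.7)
The plan is to establish both parts by structural induction, following the same pattern as the proof of Lemma~\ref{lem:subst_trans}.

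Part (2), the routine substitution lemma for $\gafu{-}$, proceeds by induction on the syntactic object $O$. Because the translation is compositional with the base cases $\gafu{\dcvar} \equiv \evar$ and $\gafu{\dccovar} \equiv \cvar$, substitution commutes with translation step by step. Every inductive case---pairs, injections, projections, applications ($\dcappl$, $\dcappr$), and the $\lambda$- and $\mu$-binders---reduces to an immediate appeal to the induction hypothesis after the standard $\alpha$-renaming to avoid capture, exactly mirroring the argument for Lemma~\ref{lem:subst_trans}(\ref{lem:subst_trans_subst}).

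For the forward direction of part (1), I would induct on the structure of the DC value $\dcValue$. The case $\dcvar$ is immediate since $\gafu{\dcvar} \equiv \evar$ is a BLC value. For $\dcpairr{\dcValue_0}{\dcValue_1}$, the IH yields BLC values $\eValue_0, \eValue_1$ with $\gafu{\dcValue_i} \cbveq \eValue_i$, and so $\gafu{\dcpairr{\dcValue_0}{\dcValue_1}} \cbveq \pair{\eValue_0}{\eValue_1}$ by congruence. For $\dcinl{\dcValue}$ (and symmetrically $\dcinr{\dcValue}$), the IH gives $\gafu{\dcValue} \cbveq \eValue$, so $\gafu{\dcinl{\dcValue}} \equiv \muabs{\cvar}{\bracket{\gafu{\dcValue}}{\fst{\cvar}}} \cbveq \muabs{\cvar}{\bracket{\eValue}{\fst{\cvar}}}$, a BLC value by definition. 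The delicate cases are $\dcabsr{\dccovar}{\dccut{\dcValue}{\dcfst{\dccovar}}}$ and its $\dcsnd$ analog. Unfolding yields
\[
  \gafu{\dcabsr{\dccovar}{\dccut{\dcValue}{\dcfst{\dccovar}}}} \equiv \muabs{\cvar}{\bracket{\gafu{\dcValue}}{\muabs{\evar}{\bracket{\fst{\evar}}{\cvar}}}}.
\]
Using the IH and $\cbveq$-congruence to replace $\gafu{\dcValue}$ with an actual BLC value $\eValue$, applying the $\mu$-rule $\bracket{\eValue}{\muabs{\evar}{\neutral}} \cbveq [\eValue/\evar]\neutral$, and finishing with the $\eta$-rule $\muabs{\cvar}{\bracket{\expr}{\cvar}} \cbveq \expr$, this simplifies to $\fst{\eValue}$, which is a BLC value.

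For the reverse direction of part (1), suppose $\eValue \cbveq \gafu{\dcterm}$ for some BLC value $\eValue$. Applying Theorem~\ref{thm:BLC2DC} gives $\fuga{\eValue} \dcveq \fuga{\gafu{\dcterm}}$, and Theorem~\ref{lem:involution_trans_inv} identifies the right-hand side with $\dcterm$ up to $\dcveq$. Hence $\dcterm \dcveq \fuga{\eValue}$, and by Lemma~\ref{lem:subst_trans}(\ref{lem:subst_trans_equiv}) the DC term $\fuga{\eValue}$ is a value. The main technical obstacle is in the two $\dcabsr$-cases of the forward direction: the $\mu$-reduction syntactically requires a value in the expression position of a command, but the induction hypothesis only supplies equivalence to a value. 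This is resolved by first using $\cbveq$-congruence to rewrite the subterm into its equivalent syntactic value, and only then applying the $\beta\mu$ step.
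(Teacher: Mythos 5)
The paper itself disposes of this lemma in one line (claim~(1) ``is shown immediately'', claim~(2) by induction on $O$), and your part~(2) and the forward cases of part~(1) that you do treat are consistent with that: in particular your handling of the delicate case $\dcabsr{\dccovar}{\dccut{\dcValue}{\dcfst{\dccovar}}}$ --- first rewriting $\gafu{\dcValue}$ into a syntactic value by congruence, then applying $\bracket{\eValue}{\muabs{\evar}{\neutral}} \cbveq [\eValue/\evar]\neutral$ and the $\mu$-$\eta$ rule to land on $\fst{\eValue}$ --- is exactly right. Two things are off, however. First, your forward induction covers only the values inherited from the $\land$/$\vee$ fragment; $\cbvSubDCeq$ also has $\dccst{o}$, and it adds $\lambda\dcvar.\dcterm$ and $\dcappr{\dcValue}{\dccoterm}$ as \emph{primitive} values (Propositions~\ref{prop:dcimp} and~\ref{prop:dcpmi}). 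The $\lambda$-case is trivial ($\gafu{\lambda\dcvar.\dcterm}\equiv\lamabs{\evar}{\gafu{\dcterm}}$), but $\gafu{\dcappr{\dcValue}{\dccoterm}} \equiv \muabs{\cvar}{\bracket{\gafu{\dcValue}}{\app{\cvar}{\gafu{\dccoterm}}}}$ is not in the value grammar of Figure~\ref{fig:cbvvalueandcontext} and is not obviously $\cbveq$ to anything that is; this is the one case that actually requires an argument (or an extension of the BLC value grammar), and it is missing.

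Second, and more seriously, your reverse direction of part~(1) proves the wrong statement. From $\eValue \cbveq \gafu{\dcterm}$ you correctly derive $\dcterm \dcveq \fuga{\eValue}$ with $\fuga{\eValue}$ a value, but being $\dcveq$-equivalent to a value does not make $\dcterm$ a value: syntactic valuehood is not closed under the equational theory. For instance $\dcabsr{\dccovar}{\dccut{\dcvar}{\dccovar}} \dcveq \dcvar$ by $(\eta R)$, yet the left-hand side is not a value, while $\gafu{\dcabsr{\dccovar}{\dccut{\dcvar}{\dccovar}}} \equiv \muabs{\cvar}{\bracket{\evar}{\cvar}} \cbveq \evar$ \emph{is} equivalent to a BLC value. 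So your argument (and indeed the biconditional as literally stated) cannot establish that $\dcterm$ is a value; only the forward implication is both provable and actually used in the paper (in the $(\beta\land_0)$, $(\beta R)$, $(\eta\mathord\to)$, and $(\eta\land)$ cases of Theorem~\ref{thm:DC2BLC}). You should either restrict to the forward direction or reformulate the converse as ``$\dcterm$ is $\dcveq$-equivalent to a value''.
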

\begin{proof}
  The claim (\ref{lem:subst_trans_inv_surj}) is shown immediately.
  The claims of (\ref{lem:subst_trans_inv_subst}) are shown by induction on $O$. 
\end{proof}

For any $\dcEvalSymb$ (contexts of $\cbvSubDCeq$),
we define $\gafu{\dcEvalSymb}$ (contexts for expressions of $\cbvBLCeq$)
as follows:
  \begin{align*}
    \gafu{\{-\}} &\equiv \{-\}&
    \gafu{\dcappr{\dcEvalSymb}{\dccoterm}} &\equiv \muabs{\cvar}{\bracket{\gafu{\dcEvalSymb}}{\app{\cvar}{\gafu{\dccoterm}}}}
    \\
    \gafu{\dcpairr{\dcValue}{\dcEvalSymb}} &\equiv \pair{\gafu{\dcValue}}{\gafu{\dcEvalSymb}} &
    \gafu{\dcpairr{\dcEvalSymb}{\dcterm}} &\equiv \pair{\gafu{\dcEvalSymb}}{\gafu{\dcterm}}\\
    \gafu{\dcinl{\dcEvalSymb}} &\equiv \muabs{\cvar}{\bracket{\gafu{\dcEvalSymb}}{\fst{\cvar}}} &
    \gafu{\dcinr{\dcEvalSymb}} &\equiv \muabs{\cvar}{\bracket{\gafu{\dcEvalSymb}}{\snd{\cvar}}}
    \enspace .
  \end{align*}

\begin{lemma}\label{lem:evalcontext_trans_inv}
  \begin{enumerate}
  \item $\gafu{\dcEval{\dcterm}} \cbveq \gafu{\dcEvalSymb}\{\gafu{\dcterm}\}$ holds, and
  \item $\bracket{\gafu{\dcEvalSymb}\{\expr\}}{\cont} \cbveq
    \bracket{\expr}{\muabs{\evar}{\bracket{\gafu{\dcEvalSymb}\{\evar\}}{\cont}}}$ holds.
  \end{enumerate}
\end{lemma}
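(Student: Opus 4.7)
The plan is to prove both claims by induction on the structure of $\dcEvalSymb$, using the clause-by-clause definition of $\gafu{\dcEvalSymb}$ given just above the lemma. For claim (1), the base case $\dcEvalSymb \equiv \{-\}$ is immediate since both sides equal $\gafu{\dcterm}$. In each inductive case, unfold $\gafu{-}$ on the outermost context constructor in both $\gafu{\dcEval{\dcterm}}$ and $\gafu{\dcEvalSymb}\{\gafu{\dcterm}\}$, invoke the inductive hypothesis on the sub-context $\dcEvalSymb'$, and close by congruence of $\cbveq$ under the BLC constructors. For instance, in the case $\dcEvalSymb \equiv \dcappr{\dcEvalSymb'}{\dccoterm}$ both sides unfold to $\muabs{\cvar}{\bracket{H}{\app{\cvar}{\gafu{\dccoterm}}}}$ with $H = \gafu{\dcEvalSymb'\{\dcterm\}}$ on one side and $H = \gafu{\dcEvalSymb'}\{\gafu{\dcterm}\}$ on the other, which are equated by the IH.

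For claim (2), the base case $\dcEvalSymb \equiv \{-\}$ reduces to $\bracket{\expr}{\cont} \cbveq \bracket{\expr}{\muabs{\evar}{\bracket{\evar}{\cont}}}$, which is the $\eta$-law $\muabs{\evar}{\bracket{\evar}{\cont}} \cbveq \cont$ with $\evar$ taken fresh in $\cont$. The inductive cases split by the shape of $\gafu{\dcEvalSymb}$. In shape (a), which covers $\dcappr{\dcEvalSymb'}{\dccoterm}$, $\dcinl{\dcEvalSymb'}$, and $\dcinr{\dcEvalSymb'}$, the translation wraps the hole in a $\mu$-binder, so applying the $\mu$-rule $\bracket{\muabs{\cvar}{\neutral}}{\cont} \cbveq [\cont/\cvar]\neutral$ on the outside of the left-hand side reduces it to $\bracket{\gafu{\dcEvalSymb'}\{\expr\}}{\cont'}$ for a new continuation $\cont'$ built from $\cont$ and the auxiliary data (e.g.\ $\cont' = \app{\cont}{\gafu{\dccoterm}}$ in the $\dcappr$ case). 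The IH applied with this $\cont'$ yields $\bracket{\expr}{\muabs{\evar}{\bracket{\gafu{\dcEvalSymb'}\{\evar\}}{\cont'}}}$, and applying the same $\mu$-rule backwards inside the fresh $\muabs{\evar}{-}$ recovers the right-hand side.

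Shape (b) covers the pair-contexts $\dcpairr{\dcEvalSymb'}{\dcterm}$ and $\dcpairr{\dcValue}{\dcEvalSymb'}$, where $\gafu{\dcEvalSymb}\{-\}$ is a pair with the hole in one component. These are handled by invoking the BLC rule $\bracket{\eEval{\expr'}}{\cont} \cbveq \bracket{\expr'}{\muabs{\evar'}{\bracket{\eEval{\evar'}}{\cont}}}$ with the BLC evaluation context $\pair{\{-\}}{\gafu{\dcterm}}$ or $\pair{\eValue}{\{-\}}$: a forward application pulls $\gafu{\dcEvalSymb'}\{\expr\}$ out of the pair, the IH is then applied with the resulting continuation, and a backward application inside the fresh $\muabs{\evar}{-}$ re-embeds $\gafu{\dcEvalSymb'}\{\evar\}$ back into the pair. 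For the sub-case $\dcpairr{\dcValue}{\dcEvalSymb'}$, Lemma~\ref{lem:subst_trans_inv}(\ref{lem:subst_trans_inv_surj}) is first used to replace $\gafu{\dcValue}$ by a syntactic BLC value $\eValue$ (absorbed by congruence of $\cbveq$) so that $\pair{\eValue}{\{-\}}$ qualifies as a BLC evaluation context.

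The main obstacle is purely bookkeeping: choosing fresh $\cvar$, $\evar$, $\evar'$ during the $\mu$-rule applications so that the substitutions are capture-avoiding, and recognizing that the pair cases of shape (b) require applying the BLC context-pulling rule twice (once forwards and once backwards), rather than the single-step $\mu$-reduction that suffices in shape (a).
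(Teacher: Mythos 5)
Your proposal is correct and takes essentially the same approach as the paper: the paper's own proof is just the one-line remark that both claims follow by induction on $\dcEvalSymb$, and your case analysis (the $\mu$-reduction for the $\mu$-wrapped contexts, the forwards-and-backwards use of the BLC rule $\bracket{\eEval{\expr}}{\cont} \cbveq \bracket{\expr}{\muabs{\evar}{\bracket{\eEval{\evar}}{\cont}}}$ for the pair contexts, and Lemma~\ref{lem:subst_trans_inv}(1) to replace $\gafu{\dcValue}$ by a syntactic value) is exactly the elaboration that induction requires. No gaps.
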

\begin{proof}
  The claims (1) and (2) are shown by induction on $\dcEvalSymb$.
\end{proof}

\begin{theorem}\label{thm:DC2BLC}
    $O_0 \dcveq O_1$
    implies
    $\gafu{O_0} \cbveq \gafu{O_1}$.
\end{theorem}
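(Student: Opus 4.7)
The plan is to proceed by case analysis on the equations defining $\dcveq$, verifying for each one that both sides have $\cbveq$-equivalent images under $\gafu{-}$. The basic infrastructure is already provided by Lemma~\ref{lem:subst_trans_inv} (substitution commutes with $\gafu{-}$ up to $\cbveq$) and Lemma~\ref{lem:evalcontext_trans_inv} (translation of evaluation contexts and a $\zeta$-style unfolding in $\cbvBLCeq$). The strategy mirrors the proof of Theorem~\ref{thm:BLC2DC}: unfold the translation on both sides and reduce the resulting expression using the equations of $\cbvBLCeq$ in Figure~\ref{fig:cbvbieq}.

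First I would dispatch the straightforward projection/injection equations $(\beta\wedge_i)$ and $(\beta\vee_i)$ by unfolding $\gafu{-}$ and applying the corresponding $\mu$-reduction $\bracket{\eValue}{\muabs{\evar}{\neutral}} \cbveq [\eValue/\evar]\neutral$ together with pair/case projections in $\cbvBLCeq$; note that since values in $\cbvSubDCeq$ are mapped to expressions that are $\cbveq$-equivalent to values in $\cbvBLCeq$ (Lemma~\ref{lem:subst_trans_inv}(1)), the substitutions behave correctly. The equations $(\beta R)$ and $(\beta L)$ are handled by Lemma~\ref{lem:subst_trans_inv}(2), which commutes the $\gafu{-}$-translation with substitution on both sides. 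The $\eta$-rules $(\eta R), (\eta L), (\eta\wedge), (\eta\vee)$ reduce directly to the corresponding $\eta$-rules of $\cbvBLCeq$ after unfolding the translation.

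The main obstacle is the $(\zeta)$ rule, which involves evaluation contexts and fresh variables. This is precisely where Lemma~\ref{lem:evalcontext_trans_inv} is designed to apply: part~(1) translates $\dcEval{\dcterm}$ into $\gafu{\dcEvalSymb}\{\gafu{\dcterm}\}$, and part~(2) rewrites $\bracket{\gafu{\dcEvalSymb}\{\expr\}}{\cont}$ as $\bracket{\expr}{\muabs{\evar}{\bracket{\gafu{\dcEvalSymb}\{\evar\}}{\cont}}}$. Composing these two rewrites on the translated left-hand side of $(\zeta)$ gives exactly the translation of the right-hand side, up to some bookkeeping about the freshness of the abstracted variable.

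Finally, the equations for the implication and but-not connectives added to $\cbvSubDCeq$ as primitives must be treated separately. For $(\beta\to)$ we unfold $\gafu{\dcappl{\dcterm}{\dccoterm}} \equiv \muabs{\evar}{\bracket{\app{\evar}{\gafu{\dcterm}}}{\gafu{\dccoterm}}}$ and $\gafu{\lamabs{\dcvar}{\dcterm}} \equiv \lamabs{\evar}{\gafu{\dcterm}}$ and simplify by the $\beta$-rule for $\lambda$ in $\cbvBLCeq$ together with the $\mu$-rules; $(\eta\to)$ and $(\zeta\ot)$ are analogous. For the but-not equations $(\beta\ot), (\eta\ot), (\zeta\ot)$, I unfold $\gafu{\lamabs{\dccovar}{\dccoterm}}$ and $\gafu{\dcappr{\dcterm}{\dccoterm}}$, then apply the dual $\beta$/$\eta$-rules for continuations and pairs together with Lemma~\ref{lem:subst_trans_inv}. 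No new conceptual ingredient beyond the two lemmas and the equations of $\cbvBLCeq$ is expected; the tedious part will simply be the large number of equations to check.
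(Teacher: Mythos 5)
Your plan is correct and follows essentially the same route as the paper's proof: a case analysis over the equations of $\dcveq$, unfolding $\gafu{-}$ and simplifying with the $\cbveq$-equations, invoking Lemma~\ref{lem:subst_trans_inv} for the value/substitution cases and Lemma~\ref{lem:evalcontext_trans_inv} for $(\zeta)$. The only nuance worth noting is that the negation cases $(\beta\neg)$, $(\eta\neg)$ do not arise since $\cbvSubDCeq$ excludes the free negation type, which your plan implicitly and correctly omits.
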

\begin{proof}
  The claim is shown by the case analysis of $\dcveq$. 

  The case of $(\beta\mathord\to)$ is shown as follows. 
  \begin{align*}
    \gafu{
      \dccut{
        (\lamabs{\dcvar}{\dcterm_0})
      }{
        (\dcappl{\dcterm_1}{\dccoterm})
      }
    }
    &\equiv
    \bracket{
      \lamabs{\evar}{\gafu{\dcterm_0}}
    }{
      \muabs{\evar_1}{
        \bracket{
          \app{\evar_1}{\gafu{\dcterm_1}}
        }{
          \gafu{\dccoterm}
        }
      }
    }
    \\
    &\cbveq
    \bracket{
      \app{(\lamabs{\evar}{\gafu{\dcterm_0}})}{\gafu{\dcterm_1}}
    }{
      \gafu{\dccoterm}
    }
    \\
    &\cbveq
    \bracket{
      \gafu{\dcterm_1}
    }{
      \muabs{\evar}{
        \bracket{
          \app{(\lamabs{\evar}{\gafu{\dcterm_0}})}{\evar}
        }{
          \gafu{\dccoterm}
        }
      }
    }
    \\
    &\cbveq
    \bracket{
      \gafu{\dcterm_1}
    }{
      \muabs{\evar}{
        \bracket{
          \gafu{\dcterm_0}
        }{
          \gafu{\dccoterm}
        }
      }
    }
    \\
    &\equiv
    \gafu{
      \dccut{
        \dcterm_1
      }{
        \dcabsl{\dcvar}{
          \dccut{
            \dcterm_0
          }{
            \dccoterm
          }
        }
      }
    }
    \enspace.
  \end{align*}
  
  The case of $(\beta\mathord\ot)$ is shown as follows.
  \begin{align*}
    \gafu{
      \dccut{
        (\dcappr{\dcterm}{\dccoterm_0})
      }{
        (\lamabs{\dccovar}{\dccoterm_1})
      }
    }
    &\equiv
    \bracket{
      \muabs{\cvar_1}{
        \bracket{
          \gafu{\dcterm}
        }{
          \app{\cvar_1}{\gafu{\dccoterm_0}}
        }
      }
    }{
      \lamabs{\cvar}{\gafu{\dccoterm_1}}
    }
    \\
    &\cbveq
    \bracket{
      \gafu{\dcterm}
    }{
      \app{(\lamabs{\cvar}{\gafu{\dccoterm_1}})}{\gafu{\dccoterm_0}}
    }
    \\
    &\cbveq
    \bracket{
      \gafu{\dcterm}
    }{
      [\gafu{\dccoterm_0}/\cvar]\gafu{\dccoterm_1}
    }
    \\
    &\equiv
    [\gafu{\dccoterm_0}/\cvar]
    \bracket{
      \gafu{\dcterm}
    }{
      \gafu{\dccoterm_1}
    }
    \\
    &\cbveq
    \bracket{
      \muabs{\cvar}{
        \bracket{
          \gafu{\dcterm}
        }{
          \gafu{\dccoterm_1}
        }
      }
    }{
      \gafu{\dccoterm_0}
    }
    \\    
    &\equiv
    \gafu{
      \dccut{
        \dcabsr{\dccovar}{
          \dccut{
            \dcterm
          }{
            \dccoterm_1
          }
        }
      }{
        \dccoterm_0
      }
    }
    \enspace.
  \end{align*}
  
  The case of $(\beta\land_0)$. 
  By using Lemma~\ref{lem:subst_trans_inv} (1), We have 
  \begin{align*}
    \gafu{\dccut{\dcpairr{\dcValue_0}{\dcValue_1}}{\dcfst{\dccoterm}}}
    &\equiv
    \bracket{
      \pair{\gafu{\dcValue_0}}{\gafu{\dcValue_1}}
      }{
      \muabs{\evar}{
        \bracket{
          \fst{\evar}
        }{
          \gafu{\dccoterm}
        }
      }
    }
    \\
    &\cbveq
    \bracket{
      \fst{\pair{\gafu{\dcValue_0}}{\gafu{\dcValue_1}}}
    }{
      \gafu{\dccoterm}
    }
    \\
    &\cbveq
    \bracket{
      \gafu{\dcValue_0}
    }{
      \gafu{\dccoterm}
    }
    \equiv
    \gafu{\dccut{\dcValue_0}{\dccoterm}}
    \enspace.    
  \end{align*}

  The case of $(\beta\land_1)$ is shown similarly. 

  The case of $(\beta\vee_0)$. 
  We have
  \begin{align*}
    \gafu{\dccut{\dcinl{\dcValue}}{\dcpairl{\dccoterm_0}{\dccoterm_1}}}
    &\equiv
    \bracket{
      \muabs{\cvar}{
        \bracket{\gafu{\dcValue}}{\fst{\cvar}}
      }
    }{
      \pair{
        \gafu{\dccoterm_0}
      }{
        \gafu{\dccoterm_1}
      }
    }
    \\
    &\cbveq
    \bracket{\gafu{\dcValue}}{
      \fst{
        \pair{
          \gafu{\dccoterm_0}
        }{
          \gafu{\dccoterm_1}
        }
      }
    }
    \\
    &\cbveq
    \bracket{
      \gafu{\dcValue}
    }{
      \gafu{\dccoterm_0}
    }
    \equiv
    \gafu{\dccut{\dcValue}{\dccoterm_0}}
    \enspace.    
  \end{align*}

  The case of $(\beta\vee_1)$ is shown similarly.

  The case of $(\beta R)$.
  By using Lemma~\ref{lem:subst_trans_inv} (1) and (2), We have
  \begin{align*}
    \gafu{\dccut{\dcValue}{\dcabsl{\dcvar}{\dcstat}}}
    &\equiv
    \bracket{
      \gafu{\dcValue}
    }{
      \muabs{\evar}{\gafu{\dcstat}}
    }
    \cbveq
    [\gafu{\dcValue}/\evar]\gafu{\dcstat}
    \equiv
    \gafu{[\dcValue/\dcvar]\dcstat}
    \enspace.    
  \end{align*}

  The case of $(\beta L)$.
  By using Lemma~\ref{lem:subst_trans_inv} (2), We have
  \begin{align*}
    \gafu{\dccut{\dcabsr{\dccovar}{\dcstat}}{\dccoterm}}
    &\equiv
    \bracket{
      \muabs{\cvar}{\gafu{\dcstat}}
    }{
      \gafu{\dccoterm}
    }
    \cbveq
    [\gafu{\dccoterm}/\cvar]\gafu{\dcstat}
    \equiv
    \gafu{[\dccoterm/\dccovar]\dcstat}
    \enspace.
  \end{align*}  

  The case of $(\eta\mathord\to)$ is shown by using Lemma~\ref{lem:subst_trans_inv} (1). 
  \begin{align*}
    \gafu{
      \lamabs{\dcvar}{
        (\dcabsr{
          \dccovar
        }{
          \dccut{\dcValue}{(\dcappl{\dcvar}{\dccovar})}
        })
      }
    }
    &\equiv
    \lamabs{\evar}{
      \muabs{\cvar}{
        \bracket{
          \gafu{\dcValue}
        }{
          \muabs{\evar_1}{
            \bracket{
              \app{\evar_1}{\evar}
            }{
              \cvar
            }
          }
        }
      }
    }
    \\
    &\cbveq
    \lamabs{\evar}{
      \muabs{\cvar}{
        \bracket{
          \app{\gafu{\dcValue}}{\evar}
        }{
          \cvar
        }
      }
    }
    \cbveq
    \lamabs{\evar}{\app{\gafu{\dcValue}}{\evar}}
    \cbveq
    \gafu{\dcValue}
    \enspace.    
  \end{align*}


  The case of $(\eta\mathord\ot)$.
  \begin{align*}
    \gafu{
      \lamabs{\dccovar}{
        (\dcabsl{
          \dcvar
        }{
          \dccut{(\dcappr{\dcvar}{\dccovar})}{\dccoterm}
        })
      }
    }
    &\equiv
    \lamabs{\cvar}{
      \muabs{\evar}{
        \bracket{
          \muabs{\cvar_1}{
            \bracket{
              \app{\cvar_1}{\cvar}
            }{
              \evar
            }
          }
        }{
          \gafu{\dccoterm}
        }
      }
    }
    \\
    &\cbveq
    \lamabs{\cvar}{
      \muabs{\evar}{
        \bracket{
          \app{\gafu{\dccoterm}}{\cvar}
        }{
          \evar
        }
      }
    }
    \cbveq
    \lamabs{\cvar}{\app{\gafu{\dccoterm}}{\cvar}}
    \cbveq
    \gafu{\dccoterm}
    \enspace.
  \end{align*}
  
  
  The case of $(\eta\land)$.
  Note that
  $\gafu{\dcabsr{\dccovar}{\dccut{\dcValue}{\dcfst{\dccovar}}}}
  \equiv
  \fst{\gafu{\dcValue}}$
  by Lemma~\ref{lem:subst_trans_inv} (1).
  Hence we have 
  \begin{align*}
    \gafu{
      \dcpairr{
        \dcabsr{\dccovar}{\dccut{\dcValue}{\dcfst{\dccovar}}}
      }{
        \dcabsr{\dccovar}{\dccut{\dcValue}{\dcsnd{\dccovar}}}
      }
    }
    &\cbveq
    \pair{
      \fst{\gafu{\dcValue}}
    }{
      \snd{\gafu{\dcValue}}
    }
    \cbveq
    \gafu{\dcValue}
    \enspace.
  \end{align*}

  The case of $(\eta\vee)$.
  Note that
  $\gafu{\dcabsl{\dcvar}{\dccut{\dcinl{\dcvar}}{\dccoterm}}} \cbveq \fst{\gafu{\dccoterm}}$.
  Hence we have 
  \begin{align*}
    \gafu{
      \dcpairl{
        \dcabsl{\dcvar}{\dccut{\dcinl{\dcvar}}{\dccoterm}}
      }{
        \dcabsl{\dcvar}{\dccut{\dcinr{\dcvar}}{\dccoterm}}
      }
    }
    &\cbveq
    \pair{
      \fst{\gafu{\dccoterm}}
    }{
      \snd{\gafu{\dccoterm}}
    }
    \cbveq
    \gafu{\dccoterm}
    \enspace.
  \end{align*}
  
  The cases of $(\eta R)$ and $(\eta L)$ are shown immediately.
  
  The case of $(\zeta)$ is shown by using Lemma~\ref{lem:evalcontext_trans_inv}:
  \begin{align*}
    \gafu{\dccut{\dcEval{\dcterm}}{\dccoterm}}
    &\equiv
    \bracket{
      \gafu{\dcEval{\dcterm}}
    }{
      \gafu{\dccoterm}
    }
    \cbveq
    \bracket{
      \gafu{\dcEvalSymb}\{\gafu{\dcterm}\}
    }{
      \gafu{\dccoterm}
    }
    \\
    &\cbveq
    \bracket{
      \gafu{\dcterm}
    }{
      \muabs{\evar}{
        \bracket{
          \gafu{\dcEvalSymb}\{\evar\}
        }{
          \gafu{\dccoterm}
        }
      }
    }
    \\
    &\cbveq
    \gafu{
      \dccut{\dcterm}{
        \dcabsl{
          \dcvar
        }{
          \dccut{
            \dcEval{\dcvar}
          }{
            \dccoterm
          }
        }
      }
    }
    \enspace.
  \end{align*}
\end{proof}


The equivalence between $\cbvSubDCeq$ and $\cbvBLCeq$ clarifies an essential difference
between the \emph{full} dual calculus and the bilateral $\lambda$-calculus.
The negation of the dual calculus is not involutive, 
since $\negop{\negop{A}}$ is not isomorphic to $A$. 
The dual calculus actually contains the involutive duality not as the object-level negation
but as the meta-level operation such as the antecedent and succedent duality of the sequent calculus.
On the other hand, the negation is represented using inversions of polarities in the bilateral $\lambda$-calculus.
By definition, the negation is involutive.

\section{Proofs}\label{sec:proofs}

\def\proofname{Proof of Proposition~\ref{prop:plusminus}}
\begin{proof}
  The proposition holds as follows:
  \begin{center}
    \AxiomC{$\plusop{A_0 \to A_1}$}
    \AxiomC{$[\plusop{A_0}]$}
    \BinaryInfC{$\plusop{A_1}$}
    \AxiomC{$[\minusop{A_1}]$}
    \BinaryInfC{$\bot$}
    \UnaryInfC{$\minusop{A_0}$}
    \UnaryInfC{$\minusop{A_0 \gets A_1}$}
    \bottomAlignProof
    \DisplayProof
    \qquad\qquad\qquad
    \AxiomC{$\minusop{A_0 \to A_1}$}
    \UnaryInfC{$\plusop{A_0}$}
    \AxiomC{$\minusop{A_0 \to A_1}$}
    \UnaryInfC{$\minusop{A_1}$}
    \BinaryInfC{$\plusop{A_0 \gets A_1}$}
    \bottomAlignProof
    \DisplayProof
  \end{center}
  \vspace{\baselineskip}
  \begin{center}
    \AxiomC{$\plusop{A_0 \gets A_1}$}
    \UnaryInfC{$\plusop{A_0}$}
    \AxiomC{$\plusop{A_0 \gets A_1}$}
    \UnaryInfC{$\minusop{A_1}$}
    \BinaryInfC{$\minusop{A_0 \to A_1}$}
    \bottomAlignProof
    \DisplayProof
    \qquad\qquad\qquad
    \AxiomC{$[\plusop{A_0}]$}
    \AxiomC{$\minusop{A_0 \gets A_1}$}
    \AxiomC{$[\minusop{A_1}]$}
    \BinaryInfC{$\minusop{A_0}$}
    \BinaryInfC{$\bot$}
    \UnaryInfC{$\plusop{A_1}$}
    \UnaryInfC{$\plusop{A_0 \to A_1}$}
    \bottomAlignProof
    \DisplayProof
  \end{center}
\end{proof}

\def\proofname{Proof of Proposition~\ref{prop:redundant}}
\begin{proof}
  1) It suffices to use $\Noncontradict$ and $\Reductio{}$ with
  $\RuleLabel{+}{\wedge}{E}{0}$,
  $\RuleLabel{+}{\wedge}{E}{1}$,
  $\RuleLabel{+}{\wedge}{I}{}$,
  $\RuleLabel{-}{\vee}{E}{}$,
  $\RuleLabel{-}{\vee}{I}{0}$, and
  $\RuleLabel{-}{\vee}{I}{1}$
  as follows:
  \begin{center}
  \AxiomC{$\minusop{A_i}$}
  \AxiomC{$[\plusop{A_0 \wedge A_1}]$}
  \UnaryInfC{$\plusop{A_i}$}
  \BinaryInfC{$\bot$}
  \UnaryInfC{$\minusop{A_0 \wedge A_1}$}
  \bottomAlignProof
  \DisplayProof
  \qquad\qquad
  \AxiomC{$\minusop{A_0 \wedge A_1}$}
  \AxiomC{$[\minusop{A_0}]$}
  \noLine
  \UnaryInfC{$\natvdots$}
  \noLine
  \UnaryInfC{$\mathcal{A}$}
  \AxiomC{$[\mathcal{A}^\ast]$}
  \BinaryInfC{$\bot$}
  \UnaryInfC{$\plusop{A_0}$}
  \AxiomC{$[\minusop{A_1}]$}
  \noLine
  \UnaryInfC{$\natvdots$}
  \noLine
  \UnaryInfC{$\mathcal{A}$}
  \AxiomC{$[\mathcal{A}^\ast]$}
  \BinaryInfC{$\bot$}
  \UnaryInfC{$\plusop{A_1}$}
  \BinaryInfC{$\plusop{A_0 \wedge A_1}$}
  \BinaryInfC{$\bot$}
  \UnaryInfC{$\mathcal{A}$}
  \bottomAlignProof
  \DisplayProof
  \end{center}
  \begin{center}
  \AxiomC{$\plusop{A_0 \vee A_1}$}
  \AxiomC{$[\plusop{A_0}]$}
  \noLine
  \UnaryInfC{$\natvdots$}
  \noLine
  \UnaryInfC{$\mathcal{A}$}
  \AxiomC{$[\mathcal{A}^\ast]$}
  \BinaryInfC{$\bot$}
  \UnaryInfC{$\minusop{A_0}$}
  \AxiomC{$[\plusop{A_1}]$}
  \noLine
  \UnaryInfC{$\natvdots$}
  \noLine
  \UnaryInfC{$\mathcal{A}$}
  \AxiomC{$[\mathcal{A}^\ast]$}
  \BinaryInfC{$\bot$}
  \UnaryInfC{$\minusop{A_1}$}
  \BinaryInfC{$\minusop{A_0 \vee A_1}$}
  \BinaryInfC{$\bot$}
  \UnaryInfC{$\mathcal{A}$}
  \bottomAlignProof
  \DisplayProof
  \qquad\qquad
  \AxiomC{$\plusop{A_i}$}
  \AxiomC{$[\minusop{A_0 \vee A_1}]$}
  \UnaryInfC{$\minusop{A_i}$}
  \BinaryInfC{$\bot$}
  \UnaryInfC{$\plusop{A_0 \vee A_1}$}
  \bottomAlignProof
  \DisplayProof
  \end{center}
where $i = 0,1$.

\noindent
  2) It suffices to use $\Noncontradict$ and $\Reductio{}$ with
  $\RuleLabel{+}{\to}{E}{}$,
  $\RuleLabel{+}{\to}{I}{}$,
  $\RuleLabel{-}{\gets}{E}{}$, and
  $\RuleLabel{-}{\gets}{I}{}$
  as follows:
  \begin{center}
    \AxiomC{$[\plusop{A_0 \to A_1}]$}
    \AxiomC{$\plusop{A_0}$}
    \BinaryInfC{$\plusop{A_1}$}
    \AxiomC{$\minusop{A_1}$}
    \BinaryInfC{$\bot$}
    \UnaryInfC{$\minusop{A_0 \to A_1}$}
    \bottomAlignProof
    \DisplayProof
    \qquad
    \AxiomC{$\minusop{A_0 \to A_1}$}
    \AxiomC{$[\plusop{A_0}]$}
    \AxiomC{$[\minusop{A_0}]$}
    \BinaryInfC{$\bot$}
    \UnaryInfC{$\plusop{A_1}$}
    \UnaryInfC{$\plusop{A_0 \to A_1}$}
    \BinaryInfC{$\bot$}
    \UnaryInfC{$\plusop{A_0}$}
    \bottomAlignProof
    \DisplayProof
  \end{center}
  \vspace{\baselineskip}
  \begin{center}
    \AxiomC{$\minusop{A_0 \to A_1}$}
    \AxiomC{$[\plusop{A_1}]$}
    \UnaryInfC{$\plusop{A_0 \to A_1}$}
    \BinaryInfC{$\bot$}
    \UnaryInfC{$\minusop{A_1}$}
    \bottomAlignProof
    \DisplayProof
    \qquad
    \AxiomC{$[\minusop{A_0 \gets A_1}]$}
    \AxiomC{$\minusop{A_1}$}
    \BinaryInfC{$\minusop{A_0}$}
    \AxiomC{$\plusop{A_0}$}
    \BinaryInfC{$\bot$}
    \UnaryInfC{$\plusop{A_0 \gets A_1}$}
    \bottomAlignProof
    \DisplayProof
  \end{center}
  \vspace{\baselineskip}
  \begin{center}
    \AxiomC{$\plusop{A_0 \gets A_1}$}
    \AxiomC{$[\minusop{A_0}]$}
    \UnaryInfC{$\minusop{A_0 \gets A_1}$}
    \BinaryInfC{$\bot$}
    \UnaryInfC{$\plusop{A_0}$}
    \bottomAlignProof
    \DisplayProof
    \qquad
    \AxiomC{$\plusop{A_0 \gets A_1}$}
    \AxiomC{$[\plusop{A_1}]$}
    \AxiomC{$[\minusop{A_1}]$}
    \BinaryInfC{$\bot$}
    \UnaryInfC{$\minusop{A_0}$}
    \UnaryInfC{$\minusop{A_0 \gets A_1}$}
    \BinaryInfC{$\bot$}
    \UnaryInfC{$\minusop{A_1}$}
    \bottomAlignProof
    \DisplayProof
  \end{center}
\end{proof}

\def\proofname{Proof of Theorem~\ref{thm:trans}}
\begin{proof}
The proposition holds because the following:
\begin{center}
    \AxiomC{$\plusj{\varPi'; \varSigma'}{\expr}{A_0 \to A_1}$}
    \AxiomC{$\plusj{\varPi'; \varSigma'}{\evar}{A_0}$}
    \BinaryInfC{$\plusj{\varPi'; \varSigma'}{\app{\expr}{\evar}}{A_1}$}
    \AxiomC{$\minusj{\varPi'; \varSigma'}{\cvar}{A_1}$}
    \BinaryInfC{$\zeroj{\varPi, \evar \colon A_0; \varSigma, \cvar \colon A_1}{\bracket{\app{\expr}{\evar}}{\cvar}}$}
    \UnaryInfC{$\minusj{\varPi; \varSigma, \cvar \colon A_1}{\muabs{\evar}{\bracket{\app{\expr}{\evar}}{\cvar}}}{A_0}$}
    \UnaryInfC{$\minusj{\varPi}{\lamabs{\cvar}{\muabs{\evar}{\bracket{\app{\expr}{\evar}}{\cvar}}}}{A_0 \gets A_1}$}
    \DisplayProof
\end{center}
  \vspace{\baselineskip}
\begin{center}
    \AxiomC{$\plusj{\varPi'; \varSigma'}{\evar}{A_0}$}
    \AxiomC{$\minusj{\varPi'; \varSigma'}{\cont}{A_0 \gets A_1}$}
    \AxiomC{$\minusj{\varPi'; \varSigma'}{\cvar}{A_1}$}
    \BinaryInfC{$\minusj{\varPi'; \varSigma'}{\app{\cont}{\cvar}}{A_0}$}
    \BinaryInfC{$\zeroj{\varPi, \evar \colon A_0; \varSigma, \cvar \colon A_1}{\bracket{\evar}{\app{\cont}{\cvar}}}$}
    \UnaryInfC{$\plusj{\varPi, \evar \colon A_0; \varSigma}{\muabs{\cvar}{\bracket{\evar}{\app{\cont}{\cvar}}}}{A_1}$}
    \UnaryInfC{$\plusj{\varPi; \varSigma}{\lamabs{\evar}{\muabs{\cvar}{\bracket{\evar}{\app{\cont}{\cvar}}}}}{A_0 \to A_1}$}
    \DisplayProof
\end{center}
are derived where $\varPi'$ and $\varSigma'$ are $\varPi, \evar
\colon A_0; \varSigma, \cvar \colon A_1$,
respectively.
\end{proof}

\end{document}